\newcommand{\interp}[1]{\left\llbracket #1 \right\rrbracket}

\newcommand{\mca}[0]{\mathbb{A}}

\newcommand{\N}[0]{\mathbb{N}}
\newcommand{\lApp}[0]{,}
\newcommand{\lSep}[0]{,}
\newcommand{\lPar}[1]{\left( #1 \right)}
\newcommand{\paren}[1]{\left( #1 \right)}
\newcommand{\lEmpty}[0]{\emptyset}
\newcommand{\lLength}[1]{ \mid\!\! #1\!\! \mid }

\newcommand{\xSign}[0]{\Sigma}
\newcommand{\xCLC}[0]{\lambda_{\text{ml}*}}

\newcommand{\xVAR}[0]{\mathsf{Var}}
\newcommand{\xFuncSMB}[0]{\mathsf{Func}}
\newcommand{\xProcSMB}[0]{\mathsf{Proc}}
\newcommand{\xFuncSymb}[1]{\xFuncSMB\left(#1\right)}
\newcommand{\xProcSymb}[1]{\xProcSMB\left(#1\right)}

\newcommand{\xVALSMB}[0]{\mathbf{VAL}}
\newcommand{\xVAL}[1]{\xVALSMB\left(#1\right)}
\newcommand{\xCMPSMB}[0]{\mathbf{CMP}}
\newcommand{\xCMP}[1]{\xCMPSMB\left(#1\right)}

\newcommand{\TxRetSMB}[0]{\mathsf{return}}
\newcommand{\TxLetSMB}[0]{\mathsf{let}}
\newcommand{\TxRet}[1]{\TxRetSMB\;#1}
\newcommand{\TxLet}[3]{\TxLetSMB \; #1 \Leftarrow #2 \; \mathsf{in} \; #3}
\newcommand{\TxAbs}[2]{\lambda #1 \, . \, #2}
\newcommand{\TxApp}[2]{#1 \, #2}

\newcommand{\JxWFV}[2]{ #1 \vdash_{\textbf{v}} #2 }
\newcommand{\JxWFC}[2]{ #1 \vdash_{\textbf{c}} #2 }
\newcommand{\JxWFT}[2]{ #1 \vdash #2 }

\newcommand{\JxEq}[3]{#1 \vdash #2 \equiv #3}
\newcommand{\JxEqVal}[3]{#1 \vdash_{\textbf{v}} #2 \equiv #3}
\newcommand{\JxEqCmp}[3]{#1 \vdash_{\textbf{c}} #2 \equiv #3}
\newcommand{\JxSat}[5][]{#2 \Vdash_{#1} #4 \equiv_{#3} #5}
\newcommand{\JxValid}[4][]{\Vdash_{#1} #3 \equiv_{#2} #4}
\newcommand{\JxSatVal}[4]{\JxSat[\textbf{v}]{#1}{#2}{#3}{#4}}
\newcommand{\JxValidVal}[3]{\JxValid[\textbf{v}]{#1}{#2}{#3}}
\newcommand{\JxSatCmp}[4]{\JxSat[\textbf{c}]{#1}{#2}{#3}{#4}}
\newcommand{\JxValidCmp}[3]{\JxValid[\textbf{c}]{#1}{#2}{#3}}

\newcommand{\VxTrm}[1][]{E_{#1}}
\newcommand{\VxVal}[1][]{V_{#1}}
\newcommand{\VxCmp}[1][]{M_{#1}}
\newcommand{\VxVar}[1][]{x_{#1}}
\newcommand{\VxVarI}[1][]{x_{#1}}
\newcommand{\VxVarII}[1][]{y_{#1}}
\newcommand{\VxVarIII}[1][]{z_{#1}}
\newcommand{\VxVarIV}[1][]{w_{#1}}
\newcommand{\VxFunc}[1][]{f_{#1}}
\newcommand{\VxProc}[1][]{p_{#1}}
\newcommand{\VxCxt}[1][]{\Gamma_{#1}}

\newcommand{\cApp}[0]{\circledast}
\newcommand{\cAbs}[2][]{\triangleleft^{#1} #2 \triangleright}

\newcommand{\VcCmp}[1][]{f_{#1}}

\newcommand{\VcCmpII}[1][]{g_{#1}}
\newcommand{\VcCmpIII}[1][]{h_{#1}}

\newcommand{\mFreeSMB}[1][]{\boxempty}

\newcommand{\oVal}[1][]{\mathbb{V}_{#1}}
\newcommand{\oCmp}[1][]{\mathbb{C}_{#1}}
\newcommand{\oV}[2][]{\oVal[#1]\left(#2\right)}
\newcommand{\oC}[2][]{\oCmp[#1]\left(#2\right)}

\newcommand{\oFree}[1][]{\mathcal{F}_{#1}}
\newcommand{\oForget}[1][]{\mathcal{U}_{#1}}
\newcommand{\oReturn}[1][]{\mathcal{J}_{#1}}

\newcommand{\oId}[1][]{\mathsf{id}_{#1}}
\newcommand{\oSub}[1]{ \left\{ #1 \right\} }
\newcommand{\oSubAp}[2]{ #1 \oSub{#2} }
\newcommand{\oRen}[1]{ \left[#1\right] }
\newcommand{\oRenAp}[2]{ #1 \oRen{#2} }
\newcommand{\oSL}[0]{\!\dashv}
\newcommand{\oSR}[0]{\vdash\!}
\newcommand{\oHom}[3]{#1\left(#2 \Rightarrow #3\right)}
\newcommand{\oPshApSMB}[0]{\star}
\newcommand{\oPshAp}[2]{#1 \oPshApSMB #2}

\newcommand{\oFreydOp}[0]{\mathbf{FrOp}}
\newcommand{\oFREYDPROP}[0]{\mathbf{FrPROP}}

\newcommand{\oFreydPROPLam}[0]{\mathbf{FrPROP}^{\lambda}}

\newcommand{\oRENSMB}[0]{\mathbf{Ren}}
\newcommand{\oREN}[2]{\oHom{\oRENSMB}{#1}{#2}}
\newcommand{\oPERMSMB}[0]{\mathbf{Perm}}
\newcommand{\oPERM}[1]{\oPERMSMB\left(#1\right)}
\newcommand{\oSUBSMB}[1][]{\mathbf{Sub}_{#1}}
\newcommand{\oSUB}[3][]{\oHom{\oSUBSMB[#1]}{#2}{#3}}
\newcommand{\oPSUBSMB}[1][]{\mathbf{PreSub}_{#1}}
\newcommand{\oPSUB}[3][]{\oHom{\oPSUBSMB[#1]}{#2}{#3}}
\newcommand{\oSSUBSMB}[1][]{\mathbf{Sub}^{\boldsymbol{1}}_{#1}}
\newcommand{\oSSUB}[3][]{\oHom{\oSSUBSMB[#1]}{#2}{#3}}

\newcommand{\oCSUBSMB}[1][]{\mathbf{Sub}^{\times}_{#1}}
\newcommand{\oCSUB}[3][]{\oHom{\oCSUBSMB[#1]}{#2}{#3}}

\newcommand{\oFuncAsgSMB}[0]{ \sigma^{F} }
\newcommand{\oProcAsgSMB}[0]{ \sigma^{P} }
\newcommand{\oFuncAsg}[1]{ \oFuncAsgSMB\left(#1\right) }
\newcommand{\oProcAsg}[1]{ \oProcAsgSMB\left(#1\right) }

\newcommand{\oInterpVal}[2][]{\interp{#2}_{#1}^{V}}
\newcommand{\oInterpCmp}[2][]{\interp{#2}_{#1}^{C}}

\newcommand{\oApp}[0]{ \cApp }
\newcommand{\oAbs}[2][]{ \cAbs[#1]{ #2 } }
\newcommand{\oCurry}[1]{\left\lceil #1 \right\rceil}
\newcommand{\oUncurry}[1]{\left\lfloor #1 \right\rfloor}
\newcommand{\oSym}[1][]{\sigma_{#1}}
\newcommand{\oCopy}[1][]{\Delta_{#1}}

\newcommand{\oDiscard}[1][]{!_{#1}}

\newcommand{\VoVal}[1][]{v_{#1}}
\newcommand{\VoValI}[1][]{u_{#1}}
\newcommand{\VoValII}[1][]{v_{#1}}

\newcommand{\VoCmp}[1][]{f_{#1}}
\newcommand{\VoCmpI}[1][]{f_{#1}}
\newcommand{\VoCmpII}[1][]{g_{#1}}
\newcommand{\VoCmpIII}[1][]{h_{#1}}
\newcommand{\VoRen}[1][]{r_{#1}}
\newcommand{\VoRenI}[1][]{r_{#1}}
\newcommand{\VoRenII}[1][]{s_{#1}}

\newcommand{\VoSub}[1][]{s_{#1}}

\newcommand{\VoFunctor}[1][]{\mathcal{G}_{#1}}
\newcommand{\VoStruct}[1][]{\mathcal{S}_{#1}}
\newcommand{\VoProp}[1][]{\mathbb{D}_{#1}}
\newcommand{\VoOperad}[1][]{\mathbb{M}_{#1}}

\newcommand{\filler}[0]{\left(-\right)}

\newcommand{\subst}[1]{\left\{ #1 \right\}}
\newcommand{\assign}[2]{#2 / #1}

\newcommand{\subMVAR}[0]{s}

\newcommand{\SUBSYMB}[0]{\mathsf{Sub}}
\newcommand{\SUB}[2]{\SUBSYMB\left(#1 \Rightarrow #2\right)}

\documentclass[twoside,11pt,final]{entics} 
\usepackage{enticsmacro}
\usepackage{graphicx}
\usepackage[all]{xy}
\usepackage{url}

\usepackage{amsmath}
\usepackage{amssymb}
\usepackage{wasysym}
\usepackage{stmaryrd}
\usepackage{float}
\usepackage{multicol}
\usepackage{multirow}
\usepackage{array}
\usepackage{marvosym}
\usepackage[notext,nomath]{stix}
\usepackage{proof}
\usepackage[dvipsnames]{xcolor}
\usepackage{quiver}
\usepackage{tabularx}
\usepackage{soul}
\newcolumntype{Y}{>{\raggedright\arraybackslash\scriptsize}X}

\usepackage[dvipsnames]{xcolor}

\usepackage{hyperref}
\usepackage[nameinlink,capitalise]{cleveref}

\crefname{thm}{theorem}{theorems}
\crefname{defn}{defn}{defn}

\def\conf{MFPS 2026} 	\volume{NN}									\def\lastname{Cohen, Grunfeld}      		
\begin{document}
\begin{frontmatter}
  \title{Multicategorical Semantics for Untyped Effects} 						\thanks[ALL]{This work was partially supported by Grant No. 2020145 from the United States-Israel Binational Science Foundation (BSF) and Grant No. 2876/25 from the Israeli Science Foundation (ISF).}        \author{Liron Cohen\thanksref{a}}
  \author{Ariel Grunfeld\thanksref{a}}          \address[a]{Faculty of Computer and Information Science\\ Ben-Gurion University\\	
    Be'er Sheva, Israel}  							
                    \begin{abstract}
Completeness proofs in categorical semantics usually proceed by building a syntactic category whose composition is given by substitution. For untyped effectful Call-by-Value languages, this runs into a basic obstacle: there is no canonical notion of simultaneous substitution of computations, since evaluation order is semantically meaningful.
We address this by taking single computation substitutions, that is, binding steps, as primitive, and representing  computation substitution by finite \emph{sequential} lists composed by concatenation. 
We formalize this idea in a one-object Freyd-multicategorical setting. We introduce \emph{Freyd operads}, separating a cartesian operad of values from a symmetric Ren-cartesian preoperad of computations, connected by a Freyd functor, and from any Freyd operad we construct a corresponding \emph{Freyd PROP of substitutions}.
We prove that this construction is representable and, in the strict one-object setting, left adjoint to restriction to codomain $1$. Using the induced term model, we interpret untyped computational $\lambda$-calculus with procedures and higher-order functions in weakly closed Freyd operads, and prove soundness, initiality, and completeness. This yields a categorical semantics tailored to untyped effectful computation and broad enough to encompass realizability-oriented models such as monadic combinatory algebras.
\end{abstract}
\begin{keyword}
Freyd multicategories, Freyd operads, sequential substitution, Call-by-Value, effects, categorical semantics, untyped computational lambda calculus, monadic combinatory algebras.
\end{keyword}
\end{frontmatter}

\section{Introduction}\label{intro}

Computational $\lambda$-calculus, introduced by Moggi~\cite{moggi1991notions}, provides an abstract language for effectful computation and has long served as a central test case for categorical semantics. For the typed language, the semantic story is well developed: from Moggi's $\lambda_{c}$-models~\cite{moggi1991notions},  through abstract Kleisli categories~\cite{fuhrmann1999direct} and Freyd categories~\cite{levy2003modelling}, completeness is typically established by constructing a \emph{term model}, namely a category whose objects are types and whose morphisms are terms modulo the equational theory. Completeness then follows because the valid equations in the term model are exactly the equations of the theory.

This type-based framework, however, does not translate naturally to an untyped setting. To interpret terms with several free variables, one needs product objects to interpret contexts, and this in turn requires the language itself to contain suitable product types. In typed settings this is natural, but for an untyped language there is no non degenerate type structure in which to add such products. Thus, the standard types as objects construction is poorly suited to untyped completeness arguments.

A more flexible alternative is to use a term model in which contexts are the objects and simultaneous substitutions between contexts are the morphisms~\cite{pitts2001categorical}. This subsumes the types and terms approach: a type is represented by a one variable context, a term of that type is represented by a substitution into that context, and equations between terms become equations between the corresponding substitutions. In pure languages, simultaneous substitutions carry the structure of finite tensor products, given by concatenation. By using contexts as objects, one therefore avoids the need for product types altogether, making this perspective particularly natural for untyped languages.

However, this approach breaks down for effectful languages, because there is no canonical notion of simultaneous substitution for effectful computations. The reason is that effectful computation is inherently sequential, and any would be notion of simultaneous computation substitution must choose an evaluation order.
For pure values, genuine \emph{simultaneous} substitution is well behaved: if
$
\subMVAR \;=\; \lPar{\,\assign{\VxVar[1]}{\VxVal[1]} \lSep \cdots \lSep \assign{\VxVar[n]}{\VxVal[n]}\,}
$
then applying $\subMVAR$ to a term replaces each free occurrence of $\VxVar[i]$ by $\VxVal[i]$ in one step, and such substitutions compose in the usual way.
For effectful computations, one might hope for an analogous notion of simultaneous substitution, replacing several \emph{computation holes} at once. The obstruction is that computations are only used by \emph{running} them, via the sequencing binder $\TxLetSMB$.
Concretely, consider the following computation contexts with two holes
{
\setlength{\abovedisplayskip}{3pt}
\setlength{\belowdisplayskip}{3pt}
\setlength{\abovedisplayshortskip}{0pt}
\setlength{\belowdisplayshortskip}{0pt}
\begin{align*}
    K_{1}[\filler,\filler] &\;:=\; \TxLet{\VxVar[1]}{\filler}{\TxLet{\VxVar[2]}{\filler}{\TxRet{\VxVar[1]}}}\\
    K_{2}[\filler,\filler] &\;:=\; \TxLet{\VxVar[1]}{\filler}{\TxLet{\VxVar[2]}{\filler}{\TxRet{\VxVar[2]}}}
\end{align*}
}
If we wish to substitute computations $\VxCmp[1]$ and $\VxCmp[2]$ into the two holes, we must choose an evaluation order, producing either
$
K_{1}[\VxCmp[1],\VxCmp[2]]
\;=\;
\TxLet{\VxVar[1]}{\VxCmp[1]}{\TxLet{\VxVar[2]}{\VxCmp[2]}{\TxRet{\VxVar[1]}}}
$
or
$
K_{2}[\VxCmp[2],\VxCmp[1]]
\;=\;
\TxLet{\VxVar[1]}{\VxCmp[2]}{\TxLet{\VxVar[2]}{\VxCmp[1]}{\TxRet{\VxVar[2]}}}.
$
In the presence of effects, these two programs need not be equivalent. Sequencing is part of the meaning, and there is therefore no canonical simultaneous computation substitution operation that forgets this order.

Thus, for untyped effectful languages, one encounters a genuine conundrum. On the one hand, the usual types as objects approach is unavailable, because there are no product types to interpret contexts. On the other hand, the contexts and substitutions approach also fails, because computations do not admit a canonical simultaneous substitution operation. The right replacement is suggested by the analysis above: single computation substitutions, that is, individual binding steps, should be taken as primitive, and general computation substitutions should be represented by finite lists of such single substitutions. Composition is then given by concatenation, namely by sequencing the binding steps.

A natural setting in which this phenomenon already appears is that of Freyd multicategories~\cite{staton2013universal}. There, values and computations are separated from the outset, and computations carry an order sensitive notion of substitution. This works just as well for untyped languages as for typed ones, and it leads to a straightforward completeness argument. 
At the same time, Freyd multicategories remain relatively unfamiliar, and using them directly as semantic targets obscures the connection with more standard categorical semantics~\cite{staton2014freyd}.
Previous suggestions for categorical semantics of untyped computational $\lambda$-calculus appear in~\cite{moggi1988computational,de2020untyped}, but without the completeness result obtained here.

The point of this paper is that one need not choose between the two viewpoints. We take the multicategorical semantics as the correct structural starting point, and from it derive a \emph{categorical} substitution construction tailored to the untyped effectful setting. This construction is semantic rather than syntax dependent, and can therefore be applied uniformly to arbitrary Freyd operadic data. It also extends a familiar pattern from ordinary multicategory theory, namely the free monoidal category generated by a multicategory. In the present sequential setting, the resulting category is no longer one of simultaneous substitutions, but one of ordered lists of single substitutions. In particular, in the strict one object setting relevant here, our construction yields the left adjoint to restriction to codomain $1$, providing the categorical counterpart of Freyd multicategorical substitution for untyped effectful programs.

Beyond the categorical issue itself, our motivation comes from effectful realizability, in particular from recent work on syntactic effectful realizability and evidence frame semantics~\cite{effhol,EF,yamada2026effectful}. 
For realizability with effects, one wants an untyped computational core language whose terms can serve as realizers, together with a completeness theorem strong enough to interpret those realizers in algebraic or combinatory models such as monadic combinatory algebras~\cite{cohen2025partial}.
The semantics developed here are designed with exactly this goal in mind: they provide a flexible untyped call by value core together with a class of categorical models broad enough to encompass effectful combinatory structures.

\paragraph*{Main Contributions:}
\begin{itemize}
    \item We introduce \emph{Freyd operads} as the one object structures that separate pure values from effectful computations in the untyped setting, together with \emph{weakly closed} structure for abstraction and application.
    \item From a Freyd operad, we construct a category of substitutions whose morphisms are finite lists of single substitutions and renamings, modulo the appropriate structural equations, yielding a Freyd PROP of substitutions.
    \item We prove that this construction is representable and left adjoint to restriction to codomain $1$, thereby recovering a categorical semantics from Freyd operadic data in the strict untyped setting.
    \item We interpret the untyped computational language $\xCLC$ in weakly closed Freyd operads and prove soundness, initiality of the term model, and completeness.
    \item We relate the resulting semantics to examples arising from monadic combinatory algebras, restriction categorical applicative systems, and reflexive object style models.
\end{itemize}

\textbf{Outline:}
The paper is organized as follows. 
Section~\ref{sec:background}  introduces $\xCLC$ and its equational theory and recalls the categorical background.
Section~\ref{sec:freyd} develops Freyd operads and the free substitution construction, culminating in representability and the adjunction with Freyd PROPs. 
Section~\ref{sec:semantics} gives the interpretation of $\xCLC$ in weakly closed Freyd operads and proves soundness, initiality, and completeness. 
Section~\ref{sec:examples} discusses examples and connections to  realizability motivated and categorical models.
Detailed proofs and constructions can be found in Appendix~\ref{sec:app}.

\section{Background: Untyped Computational Lambda Calculus}\label{sec:background}

This section recalls the category of renamings,   the syntax and equational theory of $\xCLC$, and the categorical vocabulary used in our semantics, in a concrete $\mathbb{N}$-indexed form that treats a number as a context size.

\subsection{Renamings}

Throughout, we use $n,m,k$ for natural numbers and write $[n]=\{1,\dots,n\}$ (with $[0]=\emptyset$).
\begin{defn}[Renaming]\label{def:op-ren}
A \emph{renaming} $r:m\to n$ is a function $[m]\to[n]$.
We write $\oREN{m}{n}$ for the set of such renamings. 
A bijective renaming is a \emph{permutation} and we write $\oPERM{n}\subseteq \oREN{n}{n}$.
\end{defn}
We identify a renaming $r:[m]\to[n]$ with the tuple $(r(1),\ldots,r(m))$.
The identity is $\oId[n]=(1,\ldots,n)$.
Composition is ordinary composition of functions, and the tensor (coproduct) of
$r\in \oREN{m_1}{n_1}$ and $r'\in \oREN{m_2}{n_2}$ is the block sum
\vspace{-0.3em}
\[
r+r' \;:=\; \bigl(r(1),\ldots,r(m_1),\; n_1+r'(1),\ldots,n_1+r'(m_2)\bigr)\in \oREN{m_1+m_2}{n_1+n_2}.\vspace{-0.3em}
\]
If $r\in \oREN{m}{n}$ and $r'\in \oREN{k}{n}$ we write $[r,r']\in \oREN{m+k}{n}$ for their concatenation.

We single out the following basic renamings for swap, discard and copy:\vspace{-0.3em}
\begin{align*}
\oSym[m,n] :=(n+1,\ldots,n+m,\,1,\ldots,n)\in \oREN{m+n}{n+m}
\\
\oDiscard[n] := ()\in \oREN{0}{n}
\qquad,
\qquad
\oCopy[n] := (1,\ldots,n,\,1,\ldots,n)\in \oREN{2n}{n}.
\end{align*}

\begin{lemma}[\cite{burroni1991higher,lafont2003towards}]\label{lemma:ren-gen}
Every renaming $\VoRen \in \oREN{m}{n}$ is generated by composition and tensor product from
    $\oId[1]$, $\oSym[1,1]$, $\oCopy[1]$, and $\oDiscard[1]$, and every permutation $\VoRen \in \oPERM{n}$ is generated by composition and tensor product from
    $\oId[1]$ and $\oSym[1,1]$.
\end{lemma}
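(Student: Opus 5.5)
We prove the permutation claim first and then build general renamings on top of it. Throughout, ``generated'' means obtained from the listed generators by repeatedly applying composition and the block-sum tensor $+$.

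\emph{Permutations.} Identities $\oId[n]=\oId[1]+\cdots+\oId[1]$ are generated, with $\oId[0]$ the empty tensor. The adjacent transposition $\tau_i\in\oPERM{n}$, swapping $i$ and $i+1$, is $\oId[i-1]+\oSym[1,1]+\oId[n-i-1]$. Every permutation of $[n]$ is a composite of adjacent transpositions; this is the usual bubble-sort argument, by induction on the number of inversions. Hence every $\VoRen\in\oPERM{n}$ is generated from $\oId[1]$ and $\oSym[1,1]$.

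\emph{General renamings.} Fix $\VoRen:[m]\to[n]$ and write $k_j:=|\VoRen^{-1}(j)|$ for each $j\in[n]$. Choose a permutation $\pi\in\oPERM{m}$ that sorts the arguments by their image, so that $\VoRen=\VoRen'\circ\pi$. Here $\VoRen'$ is monotone and sends the first $k_1$ elements to $1$, the next $k_2$ to $2$, and so on. Since $\VoRen'$ is monotone, it decomposes as a block sum
\[
\VoRen'=c_{k_1}+\cdots+c_{k_n},
\]
where $c_k\in\oREN{k}{1}$ is the unique map $[k]\to[1]$. It therefore suffices to generate each $c_k$:
\begin{itemize}
  \item $c_0=\oDiscard[1]$;
  \item $c_1=\oId[1]$;
  \item $c_{k+1}=\oCopy[1]\circ(c_k+\oId[1])$ for $k\geq 1$, which one checks directly, since $\oCopy[1]=(1,1)=c_2$.
\end{itemize}
Combined with the permutation case, this shows that $\VoRen$ is generated.

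\emph{Variance.} The one thing to check carefully is the direction of maps. In this paper a renaming $m\to n$ is a function $[m]\to[n]$. So $\oCopy[1]\in\oREN{2}{1}$ is the codiagonal, and $\oDiscard[1]\in\oREN{0}{1}$ is the empty map. The decomposition above is the standard normal form for morphisms of the free strict monoidal category on a commutative monoid, namely $\mathbf{FinSet}$ with $+$. The factorisation ``permutation followed by monotone surjection-like block map'' must therefore be stated in this order and not dualised. Apart from that, the only point needing verification is that block sums and composites of the displayed forms agree with the tuple descriptions of $+$ and composition. That check is routine index bookkeeping, and it is the main, though mild, obstacle.
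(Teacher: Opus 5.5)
Your proof is correct. The adjacent transpositions $\oId[i-1]+\oSym[1,1]+\oId[n-i-1]$, the factorisation $r=r'\circ\pi$ with $r'$ monotone, the block decomposition $r'=c_{k_1}+\cdots+c_{k_n}$, and the recursion $c_{k+1}=\oCopy[1]\circ(c_k+\oId[1])$ all check out against the paper's tuple definitions of $+$ and composition. You are also right that $\oId[0]$ has to be admitted as the empty tensor: no generator has codomain $0$, so the unique element of $\oREN{0}{0}$ is otherwise unreachable.

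The paper gives no proof of its own and only cites Burroni and Lafont. Those works use the same normal form (a permutation followed by a monotone map built from the commutative monoid $\oCopy[1]$, $\oDiscard[1]$), but they prove more. They show that these generators, together with the commutative (co)monoid and symmetry equations, form a complete presentation of $\oRENSMB$, equivalently of $\mathbf{FinSet}$ with $+$.

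That extra half is what the paper actually relies on when it interprets renamings ``via Lemma~\ref{lemma:ren-gen}'' in an arbitrary cartesian PROP (Def.~\ref{def:forget}, Def.~\ref{def:uncurry}). Generation alone only shows that such an interpretation is unique if it exists. Well-definedness, i.e.\ independence of the chosen decomposition, needs the relations. So your argument is a self-contained proof of the lemma exactly as stated, while the citation supplies the stronger presentation result that the later constructions implicitly use.
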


\begin{lemma}[\cite{burroni1991higher,lafont2003towards}]\label{lemma:ren-cat}
$\oRENSMB$ forms a cocartesian category with initial object $0$ and coproduct on objects given by $+$.
Moreover, $\oPERMSMB$ is a subgroupoid of $\oRENSMB$.
\end{lemma}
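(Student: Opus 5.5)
The plan is to verify the cocartesian structure by hand and then read off the statement about permutations, since $\oRENSMB$ is, up to the identification of $n$ with $[n]$, the skeleton of the category of finite sets and functions.

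\emph{Category axioms.} Composition is ordinary function composition $[m]\to[n]\to[k]$, and $\oId[n]$ is the identity function. Associativity and the unit laws are therefore inherited from $\mathbf{Set}$.

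\emph{Initial object.} Since $[0]=\emptyset$, for every $n$ there is exactly one function $[0]\to[n]$, namely the empty tuple $\oDiscard[n]=()$. Hence $0$ is initial.

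\emph{Coproducts.} For $m$ and $k$ I take the coprojections
\[
\iota_1=(1,\ldots,m)\in\oREN{m}{m+k},\qquad \iota_2=(m+1,\ldots,m+k)\in\oREN{k}{m+k}.
\]
Given $r\in\oREN{m}{n}$ and $r'\in\oREN{k}{n}$, the concatenation $[r,r']\in\oREN{m+k}{n}$ satisfies $[r,r']\circ\iota_1=r$ and $[r,r']\circ\iota_2=r'$. This is immediate from the tuple description, since the first $m$ entries of $[r,r']$ are $r$ and the last $k$ are $r'$.

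Uniqueness holds because $[m+k]=\iota_1([m])\sqcup\iota_2([k])$. Consequently any $u:[m+k]\to[n]$ is determined by its restrictions along the two coprojections.

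On morphisms, the induced functor $r+r'=[\iota_1\circ r,\ \iota_2\circ r']$ agrees with the block sum in the paper's definition. So $+$ is indeed the coproduct on objects and on arrows. The only mildly delicate bookkeeping step is this index shift by $n_1$ in the block sum, which I would check directly against the tuple formula.

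\emph{Permutations.} Bijections are closed under composition and contain identities, and the inverse of a bijection $[n]\to[n]$ is again a bijection $[n]\to[n]$. Thus $\oPERMSMB$, with the objects $n$ and hom-sets $\oPERM{n}$ (and no maps between distinct objects), is a subcategory in which every morphism is invertible, i.e.\ a subgroupoid.

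I would also note that it is in fact the maximal subgroupoid. The reason is that an isomorphism $[m]\to[n]$ in $\oRENSMB$ is a bijection, which forces $m=n$; hence every isomorphism of $\oRENSMB$ already lies in some $\oPERM{n}$.

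There is no real obstacle here; at most one should fix conventions, such as whether the coproduct is required to be strictly associative. It is, since block sum of tuples is strictly associative with unit $0$.
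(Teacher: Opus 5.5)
Your proof is correct. The paper gives no proof of this lemma and simply cites Burroni and Lafont, where the category of finite functions is treated as a PROP presented by generators and relations; this is the same source it uses for Lemma~\ref{lemma:ren-gen}.

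Your route is the direct elementary one:
\begin{itemize}
\item identify $\oRENSMB$ with a skeleton of finite sets, so the category axioms come from $\mathbf{Set}$;
\item exhibit $0$ as initial via the empty tuple;
\item give explicit coprojections and copairing $[r,r']$;
\item check that the induced action of $+$ on arrows is exactly the paper's block sum, including the shift by $n_1$;
\item observe that permutations are closed under inverses.
\end{itemize}
Your approach is self-contained, and it makes explicit that the abstract coproduct functor coincides with the concrete block-sum formula, a point the citation leaves implicit. The cited presentation-theoretic results give much more than this lemma needs, namely a generating set together with complete relations. The paper uses that extra information only in Lemma~\ref{lemma:ren-gen}.

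Your extra remark that $\oPERMSMB$ is the maximal subgroupoid is also correct. One further observation would be worth adding: a block sum of bijections is a bijection, so $\oPERMSMB$ is closed under $+$. That closure is what the paper actually relies on when it uses $\oPERMSMB$-cartesian structure in Definition~\ref{def:sym-preop}, since Definition~\ref{def:scart-preop} requires the subcategory to be closed under $+$.
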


\subsection{Untyped Computational Lambda Calculus}
We work with an untyped, Call-by-Value variant of Moggi's computational $\lambda$ calculus, similar to $\lambda_\text{ml*}$ in~\cite{sabry1997reflection}. 
The syntax separates \emph{values}, which do not perform effects, from \emph{computations}, which may perform effects. 
The term former $\TxRet{}$ injects values into computations, and $\TxLet{x}{M_1}{M_2}$ sequences computations by running $M_1$ and binding its returned value to $x$ in $M_2$.
As is standard, we identify terms up to $\alpha$ equivalence. All substitutions are capture avoiding, and we freely rename bound variables to avoid capture.

The syntax of $\xCLC$ is given in the top part of Fig.~\ref{fig:clc-syntax}. The language is parameterized by a signature, which is a pair $\xSign = \left( \xFuncSMB , \xProcSMB \right)$, with two families of sets, parametrized by $\N$, where for each $n \in \N$, $\xFuncSymb{n}$ is a set of function symbols of arity $n$, and $\xProcSymb{n}$ is a set of procedure symbols of arity $n$.
We assume a countably infinite set $\xVAR$ of variables.
Values are either variables, the application of a function symbol of arity $n$ on $n$ values, or  lambda abstractions, binding a variable in a computation. %
Computations are either a lifted value, a binding  of a computation  to a variable  within another computation, the application of a procedure symbol  of arity $n$ on $n$ values, or application of a value to another value.
Note that application could be treated as a procedure symbol, but we give it a dedicated rule, so we make it a separate constructor.
We track free variables using \emph{contexts}, which are finite lists of distinct variables
$\Gamma = x_{1},\ldots,x_{n}$.
We write $\Gamma,\VxVar,\Gamma'$ for context concatenation with $\VxVar$ in the middle.
From now on, terms are always considered relative to a context listing all of their free variables.

Because $\xCLC$ is untyped, there is no typing judgment.
Instead, there are two well-formedness judgments, whose rules are given in the bottom part of Fig. \ref{fig:foclc-wf}: $\JxWFV{\Gamma}{\VxVal}$ for values and $\JxWFC{\Gamma}{\VxCmp}$ for computations, each requiring that all free variables occur in $\Gamma$.
We write $\JxWFT{\Gamma}{\VxTrm}$ for the appropriate judgment depending on whether $\VxTrm$ is a value or a computation.

To define the theory and semantics of $\xCLC$, we use substitutions that replace free variables with terms.
Since $\xCLC$ is Call-by-Value, variables range over values, so only values may be substituted for variables. Computations can enter only via sequencing with $\TxLet{-}{-}{-}$.
Although the equational theory uses only single-variable substitution, it is convenient to use simultaneous substitutions, which we therefore define for values.
This restriction is also conceptually important: there is no well behaved notion of simultaneous substitution by computations, since computations interact through sequencing and evaluation order.

\begin{defn}[Simultaneous Substitution]
Let $\VxCxt = \VxVar[1] , \ldots , \VxVar[n]$ be a context.
A \emph{(simultaneous) substitution into $\VxCxt$} is a list
$\VoSub =\assign{\VxVar[1]}{\VxVal[1]}  , \ldots , \assign{\VxVar[n]}{\VxVal[n]}$
such that each $\VxVal[i]$ is a well-formed value in some context $\VxCxt[i]$ (i.e.\ $\JxWFV{\VxCxt[i]}{\VxVal[i]}$).
We view $\VoSub$ as a substitution \emph{from} the list of source contexts $\Gamma_1,\ldots,\Gamma_n$ \emph{to} $\Gamma$, and write $
\VoSub \in \SUB{\Gamma_1,\ldots,\Gamma_n}{\Gamma}$.
Given $\VoSub=\assign{x_{1}}{V_{1}},\ldots,\assign{x_{n}}{V_{n}} \in \SUB{\Gamma_1,\ldots,\Gamma_n}{\Gamma}$
and a well formed term $\JxWFT{\Gamma}{t}$, we write $t\subst{\VoSub}$ for the result of applying $\VoSub$ to $t$.
It is defined by structural recursion on $t$, replacing each free occurrence of $x_i$ by $V_i$ (and avoiding capture in the $\lambda$-case).
\end{defn}

\begin{lemma}\label{lemma:subst}
If $\VoSub \in \SUB{\VxCxt[1]}{\VxCxt[2]}$ and $\JxWFT{\VxCxt[2]}{\VxTrm}$, then
$\JxWFT{\VxCxt[1]}{\VxTrm \subst{\VoSub}}$.
\end{lemma}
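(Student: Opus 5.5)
Note first that $\SUB{\VxCxt[1]}{\VxCxt[2]}$ with a single source context is the case where every $\VxVal[i]$ is well formed in the same context $\VxCxt[1]$. The proof is then a standard structural induction on the derivation of $\JxWFT{\VxCxt[2]}{\VxTrm}$. This is a simultaneous induction over the two judgments $\JxWFV{}{}$ and $\JxWFC{}{}$, since values contain computations under $\lambda$ and computations contain values. We generalise the statement slightly: it should hold for all contexts $\VxCxt[1]$ and all substitutions $\VoSub \in \SUB{\VxCxt[1]}{\VxCxt[2]}$. This quantification is what lets the induction hypothesis be applied at extended contexts under binders.

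The cases go by the last rule used. For a variable $\VxVar[i]$ in $\VxCxt[2]$, the result $\VxVar[i]\subst{\VoSub}=\VxVal[i]$ is well formed in $\VxCxt[1]$ by assumption on $\VoSub$. For a function symbol $\VxFunc$ applied to $n$ values, a procedure symbol $\VxProc$ applied to $n$ values, $\TxRet{\VxVal}$, and the application $\TxApp{\VxVal}{\VxVal'}$, substitution commutes with the constructor. So we apply the induction hypothesis to each immediate subterm and reassemble with the same well-formedness rule.

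The only real work is in the binder cases $\TxAbs{\VxVar}{\VxCmp}$ and $\TxLet{\VxVar}{\VxCmp[1]}{\VxCmp[2]}$, where the body is well formed in $\VxCxt[2],\VxVar$. Using $\alpha$-equivalence, we first choose $\VxVar$ fresh for both $\VxCxt[1]$ and $\VxCxt[2]$. We then form the extended substitution $\VoSub,\assign{\VxVar}{\VxVar} \in \SUB{\VxCxt[1],\VxVar}{\VxCxt[2],\VxVar}$. For this we need each $\VxVal[i]$ to remain well formed in $\VxCxt[1],\VxVar$, which is a weakening lemma: if $\JxWFT{\Gamma}{\VxTrm}$ and $\VxVar\notin\Gamma$, then $\JxWFT{\Gamma,\VxVar}{\VxTrm}$. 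Weakening is proved beforehand by its own easy induction, again using $\alpha$-renaming at binders. The induction hypothesis on the body then yields well-formedness of $\VxCmp\subst{\VoSub,\assign{\VxVar}{\VxVar}}$ in $\VxCxt[1],\VxVar$. Capture-avoiding substitution under the binder is by definition exactly this extended substitution, so the binder rule closes the case. The let case handles $\VxCmp[1]$ directly, without extension.

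The main obstacle I expect is bookkeeping rather than mathematics. The first point is making sure the induction is on derivations, or on term size, which is invariant under $\alpha$-renaming, so that replacing the bound variable is legitimate. The second is stating and discharging the weakening lemma with the exchange-free ordered contexts used here. Both are routine once the generalised induction hypothesis is in place.
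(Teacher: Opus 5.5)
The paper states this lemma without proof, so the benchmark is the rule system of Fig.~\ref{fig:foclc-wf}. Your induction is carried out for a different, shared-context system. In Fig.~\ref{fig:foclc-wf}, the multi-premise rules (function and procedure symbols, application, $\mathsf{let}$) conclude in the \emph{concatenation} $\Gamma_1,\ldots,\Gamma_n$ of their premises' contexts, so those contexts must be pairwise disjoint. The variable rule is only $x \vdash x$. Weakening, exchange and contraction are separate primitive rules.

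This breaks two steps of your plan. First, take your reading of $\mathsf{Sub}(\Gamma_1 \Rightarrow \Gamma_2)$, with every $V_i$ well formed in the single context $\Gamma_1$. Then ``apply the induction hypothesis to each immediate subterm and reassemble with the same rule'' does not work. In the $f$-case the target context is split as $\Delta_1,\ldots,\Delta_k$, and the hypothesis gives $\Gamma_1 \vdash V_j\{s_j\}$ for every $j$. Applying the $f$-rule to these premises would put $\Gamma_1,\ldots,\Gamma_1$ in the conclusion, which is not a context. You must show that the shared-context form of each multi-premise rule is admissible:
\begin{itemize}
\item rename the copies of $\Gamma_1$ apart (a single-variable renaming is derivable by weakening with a fresh variable, then contracting);
\item apply the rule;
\item merge the copies back with exchange and contraction.
\end{itemize}
Your $\mathsf{let}$ case, which puts both subterms in one context, has the same problem.

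Second, since you induct on derivations, the last rule can be weakening, exchange or contraction, and all three cases are missing. Under your reading they are short, but they rest on facts you have not stated:
\begin{itemize}
\item weakening needs that $\Gamma \vdash E$ implies all free variables of $E$ lie in $\Gamma$, so the extra component of $s$ is irrelevant;
\item exchange needs $s$ to be permuted accordingly;
\item contraction needs the identity $E\{y/x_1,y/x_2\}\{s,V/y\} = E\{s,V/x_1,V/x_2\}$, followed by the hypothesis on the premise.
\end{itemize}
Inducting on term size instead would not avoid this: you would then need inversion principles, which fail to be immediate here for the same reason. Relatedly, weakening is already a primitive rule and exchange is present, so your separate weakening lemma for ``exchange-free ordered contexts'' targets the wrong system.

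Suppose you adopt instead the reading the paper uses in Lemma~\ref{lemma:interp-sub}: the source is the concatenation $\Gamma'_1,\ldots,\Gamma'_n$ of disjoint contexts with $\Gamma'_i \vdash V_i$. The difficulty then moves rather than disappears. Multi-premise cases now go through by splitting $s$ along the split of the target. Binders need no weakening, since you extend $s$ by $x/x$ using $x \vdash x$. But contraction becomes the delicate case: you must substitute a renamed-apart copy of $V$ for $x_2$, apply the hypothesis, and contract the copies back together.

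Either way, the missing ingredient is the same. In this substructural presentation, sharing of variables arises only through contraction, and the proof must handle it explicitly. The simplest repair is to first prove that $\Gamma \vdash E$ is derivable exactly when all free variables of $E$ lie in $\Gamma$. Each rule preserves this property. Conversely, build a derivation by induction on $E$: rename subterms apart, combine them, then contract, weaken and permute. After that, the lemma reduces to a one-line computation of the free variables of $E\{s\}$.
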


\begin{figure}[t]
\centering

{
\setlength{\tabcolsep}{3pt}
\renewcommand{\arraystretch}{1.05}
\begin{tabular}{l@{\qquad}r@{\;\;}c@{\;\;}l}
\textbf{Terms} & $\VxTrm$ & ::= &
$\VxVal \;\mid\; \VxCmp$
\\
\textbf{Values} & $\VxVal$ & ::= &
$\VxVar \;\mid\; \VxFunc(\VxVal,\ldots,\VxVal) \;\mid\; \TxAbs{\VxVar}{\VxCmp}$
\\
\textbf{Computations} & $\VxCmp$ & ::= &
$\TxRet{\VxVal} \;\mid\; \TxLet{\VxVar}{\VxCmp}{\VxCmp} \;\mid\; \VxProc(\VxVal,\ldots,\VxVal) \;\mid\; \TxApp{\VxVal}{\VxVal}$
\end{tabular}
}

\vspace{0.8ex}
\hrule height 0.1pt
\vspace{1.5ex}

{
\setlength{\tabcolsep}{10pt}
\renewcommand{\arraystretch}{1.25}
\begin{tabular}{ccc}
  \infer{ \JxWFV{\VxVar}{\VxVar} }{}
  &
  \infer{ \JxWFV{\VxCxt[1] , \ldots , \VxCxt[n]}{ \VxFunc\!\left(\VxVal[1] , \ldots , \VxVal[n]\right) } }
        { \VxFunc \in \xFuncSymb{n} \;\; &\;\; \forall i \in \{1,\ldots,n\}.\, \JxWFV{\VxCxt[i]}{ \VxVal[i] } }
  &
  \infer{ \JxWFV{\VxCxt}{ \TxAbs{\VxVar}{\VxCmp} } }{ \JxWFC{ \VxCxt , \VxVar }{ \VxCmp } }
  \\[1.6ex]
  \infer{ \JxWFC{\VxCxt}{ \TxRet{\VxVal} } }{ \JxWFV{ \VxCxt }{ \VxVal } }
  &
  \infer{ \JxWFC{\VxCxt[1] , \ldots , \VxCxt[n]}{ \VxProc\!\left(\VxVal[1] , \ldots , \VxVal[n]\right) } }
        { \VxProc \in \xProcSymb{n} \;\; &\;\; \forall i \in \{1,\ldots,n\}.\, \JxWFV{\VxCxt[i]}{ \VxVal[i] } }
  &
  \infer{ \JxWFC{\VxCxt[1] , \VxCxt[2]}{ \TxLet{\VxVar}{\VxCmp[2]}{\VxCmp[1]} } }
        { \JxWFC{ \VxCxt[1] , \VxVar }{ \VxCmp[1] } \;\; &\;\; \JxWFC{\VxCxt[2]}{ \VxCmp[2] } }
\end{tabular}

\vspace{1.2ex}

\setlength{\tabcolsep}{8pt}
\renewcommand{\arraystretch}{1.15}
\begin{tabular}{cccc}
  \infer{ \JxWFC{\VxCxt[1] , \VxCxt[2]}{\TxApp{\VxVal[1]}{\VxVal[2]}} }
        { \JxWFV{\VxCxt[1]}{\VxVal[1]} \;\; &\;\; \JxWFV{\VxCxt[2]}{\VxVal[2]} }
&
  \infer{\JxWFT{\VxCxt , \VxVar}{\VxTrm}}{ \JxWFT{\VxCxt}{\VxTrm} & \VxVar \notin \VxCxt }
&
  \infer{ \JxWFT{\VxVar[\VoRen\left(1\right)] , \ldots , \VxVar[\VoRen\left(n\right)]}{\VxTrm} }
        { \JxWFT{\VxVar[1] , \ldots , \VxVar[n]}{\VxTrm} & \VoRen \in \oPERM{n} }
&
  \infer{\JxWFT{\VxCxt , \VxVarII}{\VxTrm \subst{\assign{\VxVarI[1]}{\VxVarII} , \assign{\VxVarI[2]}{\VxVarII} }}}
        { \JxWFT{\VxCxt , \VxVarI[1] , \VxVarI[2]}{\VxTrm} }
\end{tabular}
}

\caption{Syntax and well-formedness rules for $\xCLC$.}
\label{fig:lambdaML-syntax-wf}
\label{fig:clc-syntax}
  \label{fig:foclc-wf}
      \label{tab:foclc-wf}
\end{figure}

The theory of $\xCLC$ is presented in Fig.~\ref{fig:clc-theory}. 
The equational theory is the least congruence closed under the term formers, generated by the monad laws for $\TxRet{-}$ and $\TxLet{-}{-}{-}$. Intuitively, (lunit) and (runit) say that $\TxRet{-}$ is the unit for sequencing, and (assoc) states associativity of sequencing.

\subsection{Freyd Categories}

We use Freyd-style categorical semantics to model Call-by-Value effectful computation~\cite{levy2003modelling}. 
Intuitively, the pure fragment (values) forms a cartesian theory of renamings, copying, and discarding, while computations form a premonoidal category where sequencing is explicit. We package both components in a Freyd structure via a functor $J : \oVal \to \oCmp$ that embeds pure maps as central morphisms.
Throughout, a natural number $n$ is read as a context of $n$ variables. A morphism $m\to n$ should be read as an $n$-tuple of effectful bindings that may use $m$ variables as input. With this convention, addition $m+n$ represents context concatenation, and whiskering by $k$ on the left or right extends a morphism with $k$ unused variables.
The definitions below are standard~\cite{power1997premonoidal,pirashvili2002prop,bonchi2016lawvere}. We spell them out in the strict $\N$-indexed form used later in the paper.

\begin{defn}[Pre-PROP]
A pre-PROP $\VoProp$ is a strict symmetric premonoidal category with the natural numbers as objects and addition as tensor product.
We write $k\oSL(-)$ and $(-)\oSR k$ for the left and right whiskering endofunctors induced by tensoring with $k$,
and $\oSym[m,n]:m+n\to n+m$ for the symmetry isomorphisms.

\end{defn}

\begin{defn}[Centrality in a Pre-PROP]
Given a pre-PROP $\VoProp$, a morphism $\VoCmpI \in \oHom{\VoProp}{m}{n}$ is \emph{central}, if for every $\VoCmpII \in \oHom{\VoProp}{m'}{n'}$, $\VoCmpI \oSR n' \circ m \oSL \VoCmpII = n \oSL \VoCmpII \circ \VoCmpI \oSR m'$ and $\VoCmpII \oSR n \circ m' \oSL \VoCmpI = n' \oSL \VoCmpI \circ \VoCmpII \oSR m$.
\end{defn}

\begin{figure}[t]
\centering
\makebox[\linewidth][c]{\hspace*{-0.6cm}
\begin{minipage}{\linewidth}\centering
\setlength{\tabcolsep}{6pt}
\renewcommand{\arraystretch}{1.35}
    \begin{tabular}{ccc}
      $\infer[\tiny{\text{(id)}}]{\JxEq{\VxCxt}{\VxTrm}{\VxTrm}}{}$
      &
      $\infer[\tiny{\text{(sym)}}]{\JxEq{\VxCxt}{\VxTrm[2]}{\VxTrm[1]}}{\JxEq{\VxCxt}{\VxTrm[1]}{\VxTrm[2]}}$
      &
      $\infer[\tiny{\text{(trans)}}]{\JxEq{\VxCxt}{\VxTrm[1]}{\VxTrm[3]}}{\JxEq{\VxCxt}{\VxTrm[1]}{\VxTrm[2]} & \JxEq{\VxCxt}{\VxTrm[2]}{\VxTrm[3]}}$
    \end{tabular}
    \\[2ex]

    \begin{tabular}{cc}
      $\infer[\tiny{\text{(ret)}}]{\JxEqCmp{\VxCxt}{\TxRet{\VxVal}}{\TxRet{\VxVal'}}}{\JxEqVal{\VxCxt}{\VxVal}{\VxVal'}}$
      &
      $\infer[\tiny{\text{(f-app)}}]{ \JxEqVal{\VxCxt[1] , \ldots , \VxCxt[n]}{\VxFunc\left( \VxVal[1] , \ldots , \VxVal[k]\right)}{\VxFunc\left( \VxVal[1]' , \ldots , \VxVal[k]'\right)} }{\forall i \in \{1, \ldots , k\} . \; \JxEqVal{\VxCxt[i]}{\VxVal[i]}{\VxVal[i]'} }$
    \end{tabular}
    \\[2ex]

    \begin{tabular}{cc}
      $\infer[\tiny{\text{(let)}}]{
        \JxEqCmp{\VxCxt[1] , \VxCxt[2]}
          {\TxLet{\VxVar}{\VxCmp[2]}{\VxCmp[1]}}
          {\TxLet{\VxVar}{\VxCmp[2]'}{\VxCmp[1]'}}
      }{
        \JxEqCmp{\VxCxt[1] , \VxVar}{\VxCmp[1]}{\VxCmp[1]'}
        &
        \JxEqCmp{\VxCxt[2]}{\VxCmp[2]}{\VxCmp[2]'}
      }$
      &
      $\infer[\tiny{\text{(p-app)}}]{ \JxEqCmp{\VxCxt[1] , \ldots , \VxCxt[n]}{\VxProc\left( \VxVal[1] , \ldots , \VxVal[k]\right)}{\VxProc\left( \VxVal[1]' , \ldots , \VxVal[k]'\right)} }{\forall i \in \{1, \ldots , k\} . \; \JxEqVal{\VxCxt[i]}{\VxVal[i]}{\VxVal[i]'} }$
    \end{tabular}
    \\[2ex]

    \begin{tabular}{cc}
      $\infer[\tiny{\text{(lunit)}}]{\JxEqCmp{\VxCxt[1], \VxCxt[2]}{\TxLet{\VxVar}{\TxRet{\VxVal}}{\VxCmp}}{\VxCmp \subst{\assign{\VxVar}{\VxVal}} } }{ \JxWFC{\VxCxt[1] , \VxVar}{ \VxCmp } & \JxWFV{\VxCxt[2]}{\VxVal} }$
      &
      $\infer[\tiny{\text{(runit)}}]{\JxEqCmp{\VxCxt}{\TxLet{\VxVar}{\VxCmp}{\TxRet{\VxVar}}}{\VxCmp}}{ \JxWFC{\VxCxt}{\VxCmp} }$
    \end{tabular}
    \\[2ex]

    \begin{tabular}{c}
      $\infer[\tiny{\text{(assoc)}}]{\JxEqCmp{\VxCxt[1] , \VxCxt[2] , \VxCxt[3]}{\TxLet{\VxVar[2]}{\left(\TxLet{\VxVar[1]}{\VxCmp[1]}{\VxCmp[2]}\right)}{\VxCmp}}{\TxLet{\VxVar[1]}{\VxCmp[1]}{\TxLet{\VxVar[2]}{\VxCmp[2]}{\VxCmp}}}}{ \JxWFC{\VxCxt[1] , \VxVar[2]}{\VxCmp}
      & \JxWFC{\VxCxt[2] , \VxVar[1]}{\VxCmp[2]}
      & \JxWFC{\VxCxt[3]}{\VxCmp[1]}
      }$
    \end{tabular}
    \\[3ex]

    \begin{tabular}{ccc}
          {$\infer[\tiny{\text{(beta)}}]{ \JxEqCmp{\VxCxt[1] , \VxCxt[2]}{\TxApp{\left( \TxAbs{\VxVar}{\VxCmp}\right)}{\VxVal}}{\VxCmp \subst{\assign{\VxVar}{\VxVal}}} }{ \JxWFC{\VxCxt[1] , \VxVar}{ \VxCmp } & \JxWFV{ \VxCxt[2] }{ \VxVal } }$}
          &
      {$\infer[\tiny{\text{(abs)}}]{\JxEqVal{\VxCxt}{\TxAbs{\VxVar}{\VxCmp}}{\TxAbs{\VxVar}{\VxCmp'}}}{ \JxEqCmp{\VxCxt , \VxVar}{ \VxCmp }{\VxCmp'} }$}
      &
      {$\infer[\tiny{\text{(app)}}]{ \JxEqCmp{\VxCxt[1] , \VxCxt[2]}{\TxApp{\VxVal[1]}{\VxVal[2]}}{\TxApp{\VxVal[1]'}{\VxVal[2]'}}}{ \JxEqVal{\VxCxt[1]}{\VxVal[1]}{\VxVal[1]'} & \JxEqVal{\VxCxt[2]}{\VxVal[2]}{\VxVal[2]'} }$}
    \end{tabular}
\end{minipage}}
  \caption{Equational theory of $\xCLC$.}
  \label{fig:clc-theory}
\end{figure}

\begin{defn}[PROP]
A PROP is a pre-PROP where all morphisms are central.
\end{defn}

\begin{defn}[Cartesian PROP]A PROP $\VoProp$ is \emph{Cartesian} if it is equipped with natural transformations: copy, $\oCopy[n] : n \to 2n$ and discard, $\oDiscard[n] : n \to 0$, 
forming a commutative comonoid for each $n$, with the following coherence condition:
$
\oCopy[n_{1}+n_{2}]
\;\equiv\;
\bigl(n_{1} + \oSym[n_{1},n_{2}] + n_{2}\bigr) \circ
\bigl(\oCopy[n_{1}] + \oCopy[n_{2}]\bigr) 
$ and $
\oDiscard[n_{1}+n_{2}]
\;\equiv\;
\oDiscard[n_{1}] + \oDiscard[n_{2}]
$.
\end{defn}

Our arrow convention reads $m \to n$ as producing $n$ outputs from $m$ inputs (substitution from an input context to an output context).
This is the standard Lawvere theory presentation~\cite{lawvere1963functorial}.

\begin{defn}[Freyd PROP]
A Freyd PROP is a triple $( \oVal , \oCmp , \oReturn )$ where $\oVal$ is a cartesian PROP, $\oCmp$ is a pre-PROP, and
$\oReturn : \oVal \longrightarrow \oCmp$ is a pre-PROP functor such that 
for every $m,n\in\N$ and every
$\VoVal \in \oHom{\oVal}{m}{n}$, the morphism $\oReturn(\VoVal)$ is central in $\oCmp$.
\end{defn}

\begin{defn}[Structure preserving functors]\label{def:prop-functors}
~
\begin{itemize}
\item A \emph{pre-PROP functor} $F:\VoProp[1]\to\VoProp[2]$ is a strict symmetric premonoidal functor that is the identity on objects
(hence preserves addition on objects), and preserves whiskering and symmetries.%
\item A \emph{cartesian PROP functor} is a pre-PROP functor that also preserves copy and discard:
$F(\oCopy[n])=\oCopy[n]$ and $F(\oDiscard[n])=\oDiscard[n]$ for all $n$.
\item A \emph{Freyd PROP functor} between Freyd PROPs $(\oVal[1],\oCmp[1],\oReturn[1])\to(\oVal[2],\oCmp[2],\oReturn[2])$
is a pair $(F^{\oVal},F^{\oCmp})$ where $F^{\oVal}$ is a cartesian PROP functor, $F^{\oCmp}$ is a pre-PROP functor,
and $
F^{\oCmp}\bigl(\oReturn[1](\VoVal)\bigr) \;=\; \oReturn[2]\bigl(F^{\oVal}(\VoVal)\bigr)$ 
 for all $\VoVal$.
 \end{itemize}
\end{defn}

\section{Freyd operads and free Freyd PROPs}\label{sec:freyd}
This section introduces the Freyd-operadic structures used in our semantics and the associated substitution construction. 
A Freyd operad consists of a cartesian operad of values, a symmetric Ren-cartesian preoperad of computations, and a cartesian functor embedding values into computations. 
From any Freyd operad we construct a corresponding Freyd PROP of substitutions, whose morphisms are finite lists of elementary substitution and renaming steps, quotiented by the equations forced by the operadic structure. 

Sections~\ref{subsec:preop} and~\ref{subsec:freyd-operads} provide the required one-object, $\mathbb{N}$-indexed background.
The main novel material starts in Section~\ref{subsec:subs}, where we define the substitution construction, and continues in Section~\ref{subsec:presheaves} (representability) and Section~\ref{subsec:adjunction} (the adjunction).

\subsection{Preoperads and reindexing}\label{subsec:preop}

Operads and multicategories provide a standard account of algebraic structure in terms of operations with many inputs and one output, with composition corresponding to \emph{parallel} substitution.
For effectful Call-by-Value computation, however, the basic structural operation is \emph{sequencing}: substituting a computation into a hole is inherently ordered, and evaluation order is semantically meaningful.
Following Staton--Levy's premulticategorical treatment of impure languages~\cite{staton2013universal}, and specializing to the untyped setting, we work with the strict $\mathbb{N}$-indexed one-object form of premulticategories. Although the notions in this subsection are standard there, we spell them out to fix notation for the free substitution construction of Sec.~\ref{subsec:subs}.
Since we work in an untyped setting, a morphism in arity $n$ is simply an element of a family $\oC{n}$, read as an $n$-ary operation.
Composition is given by substituting an operation into a designated input position.
To model renaming of free variables, we equip preoperads with a coherent reindexing action by renamings.

\begin{defn}[Preoperad]\label{def:preoperad}
A preoperad is given by the following data:
\begin{itemize}
    \item For each $n \in \N$, a collection $\oC{n}$ of \emph{morphisms}

    \item \emph{Identity morphism}: A morphism $\oId \in \oC{1}$

    \item \emph{Composition}: Given a morphism $\VoCmpII \in \oC{m}$ and a morphism $\VoCmpI \in \oC{n_{1}+1+n_{2}}$, there is a morphism $\oSubAp{\VoCmpI}{ n_{1} \oSL g \oSR n_{2}} \in \oC{n_{1} + m + n_{2}}$
    \end{itemize}
satisfying the following laws:
\begin{enumerate}
    \item \emph{Left Unit}. For all $\VoCmp \in \oC{m}$: 
    $ \oSubAp{\oId}{ \VoCmp } \equiv \VoCmp $

    \item \emph{Right Unit}. For all $\VoCmp \in \oC{n_{1} + 1 + n_{2}}$: 
    $  \oSubAp{\VoCmp}{n_{1} \oSL \oId \oSR n_{2}} \equiv \VcCmp $

    \item \emph{Associativity}. For all $\VoCmpI \in \oC{m_{1} + 1 + m_{2}}$, $\VcCmpII \in \oC{n_{1} + 1 + n_{2}}$, and $\VcCmpIII \in \oC{n}$:
    
    $ \oSubAp{\oSubAp{\VoCmpI}{m_{1} \oSL \VoCmpII \oSR m_{2}}}{m_{1} + n_{1} \oSL \VoCmpIII \oSR n_{2} + m_{2}}
    \equiv
    \oSubAp{\VoCmpI}{ m_{1} \oSL \oSubAp{\VoCmpII}{n_{1} \oSL \VoCmpIII \oSR n_{2}} \oSR m_{2} }
    $
\end{enumerate}
\end{defn}
Concretely, if $\VoCmpI \in \oC{m_{1}+1+m_{2}}$ and $\VoCmpII \in \oC{m}$, then
$\oSubAp{\VoCmpI}{m_{1}\oSL \VoCmpII \oSR m_{2}} \in \oC{m_{1}+m+m_{2}}$ is obtained by inserting $\VoCmpII$
into the distinguished input of $\VoCmpI$ (the $(m_{1}+1)$st input), leaving the other inputs untouched.
In terms of multicategories, one may think of this as composing $\VoCmpI$ with a tuple of arguments in which all
positions are filled by identities except the distinguished one, which is filled by $\VoCmpII$:\vspace{-0.3em}
\[\oSubAp{\VoCmpI}{m_{1}\oSL \VoCmpII \oSR m_{2}}
\;\approx\;
\VoCmpI\big(\underbrace{\oId,\ldots,\oId}_{m_{1}},\,\VoCmpII,\,\underbrace{\oId,\ldots,\oId}_{m_{2}}\big)\]
In the strict $\mathbb{N}$-indexed presentation this plays the role of \emph{whiskering with identities} to make arities line up for composition.
For example, take $\VoCmpI \in \oC{2+1+0}=\oC{3}$ and $\VoCmpII \in \oC{4}$.
Then $\oSubAp{\VoCmpI}{2 \oSL \VoCmpII \oSR 0} \in \oC{2+4+0}=\oC{6}$.
If $\VoCmpIII \in \oC{1}$, associativity specializes to the arity check
$\Big(\VoCmpI\{2 \dashv \VoCmpII \vdash 0\}\Big)\{2+1 \dashv \VoCmpIII \vdash 2\}
\;\equiv\;
\VoCmpI\{2 \dashv (\VoCmpII\{1 \dashv \VoCmpIII \vdash 2\}) \vdash 0\}$,
where both sides have arity $6$.

The terminology for cartesian premulticategories  in~\cite{staton2013universal} is not compositional: a cartesian multicategory is not simply a cartesian premulticategory that is also a multicategory, but requires additional naturality with respect to renamings. 
To keep the one-object presentation compositional, we explicitly separate the presence of a coherent renaming action ($\mathcal{S}$-cartesian) from the additional naturality conditions that recover fully cartesian (multi)categorical structure.

\begin{defn}[$\mathcal{S}$-Cartesian Preoperad]\label{def:scart-preop}
Given a subcategory $\mathcal{S}$ of $\oRENSMB$, closed to $+$ as a strict tensor product,
a preoperad $\oVal$ is $\mathcal{S}$-cartesian when for every  $\VoRen \in \oHom{\mathcal{S}}{m}{n}$ and  $\VoVal \in \oV{m}$, there is a morphism $\oRenAp{\VoVal}{ \VoRen } \in \oV{n}$, such that:
\begin{enumerate}
    \item $ \oRenAp{\VoVal}{\oId[n]} \equiv \VoVal  $
    \item $ \oRenAp{ \VoVal }{ \VoRen[2] \circ \VoRen[1]} \equiv \oRenAp{\oRenAp{\VoVal}{\VoRen[1]}}{\VoRen[2]} $
    \item  For all $\VoValI \in \oV{n_{1} + 1 + n_{2}}$, $\VoValII \in \oV{n}$, $\VoRenI[i] \in \oHom{\mathcal{S}}{m_{i}}{n_{i}}$, $\VoRenII \in \oHom{\mathcal{S}}{m}{n}$:
    
    $ \oRenAp{\oSubAp{\VoValI}{ n_{1} \oSL \VoValII  \oSR n_{2} }}{ \VoRenI[1] + \VoRenII + \VoRenI[2]  } \equiv \oSubAp{ \oRenAp{\VoValI}{\VoRenI[1] + \oId[1] + \VoRenI[2]}  }{ m_{1} \oSL \oRenAp{\VoValII}{\VoRenII} \oSR m_{2} } $
\end{enumerate}
\end{defn}

Intuitively, symmetric preoperads express that reindexing by permutations (exchange) interacts naturally with single substitution.

\begin{defn}[Symmetric Preoperad]\label{def:sym-preop}
A preoperad $\oCmp$ is called a \emph{symmetric preoperad} when it is $\oPERMSMB$-cartesian, and $\oSym[1,n]$ and $\oSym[n,1]$ are natural in $n$, i.e., for every $\VoCmpI \in \oC{m_{1} + 2 + m_{2}}$ and $\VoCmpII \in \oC{n}$:\vspace{-0.3em}
\begin{align*}
     \oSubAp{\oRenAp{\VoCmpI}{\oId[m_{1}] + \oSym[1,1] + \oId[m_{2}]}}{m_{1} \oSL \VoCmpII \oSR 1 + m_{2}}
    \equiv
    \oRenAp{\oSubAp{\VoCmpI}{m_{1} + 1 \oSL \VoCmpII \oSR m_{2}}}{ \oId[m_{1}] + \oSym[1,n] + \oId[m_{2}] }
\\
\oSubAp{\oRenAp{\VoCmpI}{\oId[m_{1}] + \oSym[1,1] + \oId[m_{2}]}}{m_{1} + 1 \oSL \VoCmpII \oSR m_{2}}
    \equiv
    \oRenAp{\oSubAp{\VoCmpI}{m_{1} \oSL \VoCmpII \oSR 1 + m_{2}}}{ \oId[m_{1}] + \oSym[n,1] + \oId[m_{2}] }
\end{align*}
\end{defn}

In the pure fragment, weakening and contraction are harmless: weakening adds an unused variable, and contraction merely identifies
two variables, both of which commute with substitution.
In contrast, in a Call-by-Value effectful language, weakening or contraction at the level of computations would implicitly allow
discarding or duplicating effects. 
For example, even if $x$ does not occur free in a computation $M_{1}$, the term
$\TxLet{x}{M_{2}}{M_{1}}$ is not equivalent to $M_{1}$ in general, since the effect of $M_{2}$ is still performed.
Exchange is different: it only reorders variables in the surrounding context and does not change what is evaluated or when.
Accordingly, our effectful structure requires naturality for permutations (exchange), while copying and discarding are confined
to the pure cartesian part.

Preoperad functors are the structure-preserving morphisms between preoperads, namely the one-object specializations of premulticategory morphisms from~\cite{staton2013universal}. In the strict $\mathbb{N}$-indexed presentation there is no separate object component, since arities already encode lists of the unique object. In the cartesian case, such functors must moreover preserve reindexing as well as substitution.

\begin{defn}[Preoperad Functor]

Given preoperads $\oCmp[1]$ and $\oCmp[2]$, a preoperad functor $\VoFunctor$ from $\oCmp[1]$ to $\oCmp[2]$, $\VoFunctor : \oCmp[1] \longrightarrow \oCmp[2]$, is an assignment of a morphism $\VoFunctor \VoCmp \in \oC[2]{n}$ for each $\VoCmp \in \oC[1]{n}$, such that     $\VoFunctor \oId \equiv \oId$ and 
      $\VoFunctor \left( \oSubAp{\VoCmpI}{ m_{1} \oSL \VoCmpII \oSR m_{2} } \right) \equiv \oSubAp{\left(\VoFunctor \VoCmpI\right)}{ m_{1} \oSL \VoFunctor \VoCmpII \oSR m_{2}} $. 
Together, these two properties are called \emph{functoriality}.
\end{defn}

\begin{defn}[Cartesian Preoperad Functor]
A preoperad functor $\VoFunctor$ between $\oRENSMB$-cartesian preoperads is called \emph{cartesian} if it preserves all renamings: $\VoFunctor \left(\oRenAp{\VoCmpI}{\VoRen}\right) \equiv \oRenAp{\left(\VoFunctor \VoCmpI\right)}{\VoRen}$
\end{defn}

\subsection{Centrality and Freyd operads}\label{subsec:freyd-operads}

Composition in a premulticategory is inherently sequential, since one substitutes into a chosen input position. Centrality identifies those morphisms for which the order of substitution does not matter, up to the evident arity adjustment. 
In general, multicategories are recovered as the central part of premulticategories~\cite{staton2013universal}. 
In our one-object, $\mathbb N$-indexed setting, this recovers operads from preoperads and isolates the fragment relevant to our semantics, where values are cartesian and computations admit only exchange.

\begin{defn}\label{def:op-centrality}
We say $\VoCmpII[1] \in \oC{n_{1}}$ \emph{commutes} with $\VoCmpII[2] \in \oC{n_{2}}$ when the following holds for every $\VoCmpI \in \oC{m_{1} + 1 + m + 1 + m_{2}}$:\vspace{-0.5em}
\begin{align*}
\oSubAp{\oSubAp{\VoCmpI}{m_{1} \oSL \VoCmpII[1] \oSR m + 1 + m_{2}}}{m_{1} + n_{1} + m \oSL \VoCmpII[2] \oSR m_{2}}
     \equiv  \oSubAp{\oSubAp{\VoCmpI}{m_{1} + 1 + m \oSL \VoCmpII[2] \oSR m_{2}}}{m_{1} \oSL \VoCmpII[1] \oSR m + n_{2} + m_{2}} 
\\[0.5em]
    \oSubAp{\oSubAp{\VoCmpI}{m_{1} \oSL \VoCmpII[2] \oSR m + 1 + m_{2}}}{m_{1} + n_{2} + m \oSL \VoCmpII[1] \oSR m_{2}}
     \equiv  \oSubAp{\oSubAp{\VoCmpI}{m_{1} + 1 + m \oSL \VoCmpII[1] \oSR m_{2}}}{m_{1} \oSL \VoCmpII[2] \oSR m + n_{1} + m_{2}}
\end{align*}
A morphism $\VoCmp$ is \emph{central} if it commutes with every morphism in $\oCmp$.
A preoperad $\oCmp$ is an \emph{operad} if every morphism in $\oCmp$ is central.
\end{defn}
If $f \in \oC{k}$ and $g_{1},\ldots,g_{k}$ are central, iterated single substitution is independent of insertion order, so we write $f\{g_{1}+\cdots+g_{k}\}$ for the resulting parallel composite.

Cartesian operads are operads equipped with renamings and the usual structural maps for exchange, discarding, and copying, satisfying the standard naturality axioms.

\begin{defn}[Cartesian Operad]\label{def:cart-op}
An operad $\oVal$ is said to be \emph{cartesian} when it is $\oRENSMB$-cartesian, symmetric, and discarding ($\oDiscard$) and copying ($\oCopy$) are natural in the following sense:
\begin{enumerate}
    \item For  $\VoCmpI \in \oV{m_{1} + m_{2}}$ and $\VoCmpII \in \oV{n}$:
    $ \oSubAp{\oRenAp{\VoCmpI}{\oId[m_{1}] + \oDiscard[1] + \oId[m_{2}]}}{ m_{1} \oSL \VoCmpII \oSR m_{2} } \equiv \oRenAp{\VoCmpI}{\oId[m_{1}] + \oDiscard[n] + \oId[m_{2}]}$
    
    \item For  $\VoCmpI \in \oV{m_{1} + 2 + m_{2}}$ and $\VoCmpII \in \oV{n}$:
    
    $ \oSubAp{\oRenAp{\VoCmpI}{\oId[m_{1}] + \oCopy[1] + \oId[m_{2}]}}{ m_{1} \oSL \VoCmpII \oSR m_{2} } \equiv \oRenAp{\oSubAp{\VoCmpI}{m_{1} \oSL \VoCmpII + \VoCmpII \oSR m_{2}}}{\oId[m_{1}] + \oCopy[n] + \oId[m_{2}]}$
\end{enumerate}
\end{defn}

For the semantics of $\xCLC$, we need two preoperads, one to interpret values and simultanous substitutions, and the other to interpret computations and their binding structure given by the $\TxLet{-}{-}{-}$ form.
The preoperad designated for values is a cartesian operad, to enable the parallel, non-linear structure of simultaneous substitutions.
The preoperad designated for computations is symmetric $\oRENSMB$-cartesian, to enable the appropriate interaction of reindexing with the $\TxLet{-}{-}{-}$ form.
The $\TxRet{}$ form relates the two, and essentially embeds the structure of values within the syntax of computations. This embedding is enabled through a preoperad functor between the two preoperads, preserving centrality and the cartesian structure.
Crucially, the image of this functor is required to be central, expressing that pure computations commute with sequencing.
These three components together are called \emph{Freyd operad}. 
\begin{defn}[Freyd Operad]\label{def:freyd-operad}
A Freyd operad is a triple $\left( \oVal , \oCmp , \oReturn \right)$ where $\oVal$ is a cartesian operad, $\oCmp$ is a symmetric $\oRENSMB$-cartesian preoperad, $\oReturn : \oVal \longrightarrow \oCmp$ is a cartesian preoperad functor, and the image of $\oReturn$ is a cartesian operad.
\end{defn}

In the term model of $\xCLC$, the Freyd functor $\oReturn : \oVal \to \oCmp$ is interpreted by the constructor $\TxRet{-}$. Centrality of its image expresses that pure computations commute with sequencing. Thus inserting $\TxRet{V}$ into a surrounding let-context does not constrain evaluation order.
Concretely, for any computations $M$ and computation context $K[-]$,
the two programs obtained by evaluating $\TxRet{V}$ before or after $M$ are identified by the equational theory:\vspace{-0.3em}
\[
K\big[\TxLet{x}{M}{\TxLet{y}{\TxRet{V}}{M'}}\big]\;\equiv\;K\big[\TxLet{y}{\TxRet{V}}{\TxLet{x}{M}{M'}}\big],
\]

\begin{defn}[Freyd Operad Functor]\label{def:freyd-map}
Given Freyd operads $\left( \oVal[1] , \oCmp[1] , \oReturn[1] \right)$ and $\left( \oVal[2] , \oCmp[2] , \oReturn[2] \right)$, a Freyd operads functor $\VoFunctor : \oReturn[1] \Longrightarrow \oReturn[2]$ is a pair $\VoFunctor = \left( \VoFunctor^{\oVal} , \VoFunctor^{\oCmp} \right)$ 
such that $\VoFunctor^{\oVal} : \oVal[1] \longrightarrow \oVal[2]$ and $\VoFunctor^{\oCmp} : \oCmp[1] \longrightarrow \oCmp[2]$ are cartesian preoperad functors and $\VoFunctor^{\oCmp} \oReturn[1] \equiv \oReturn[2] \VoFunctor^{\oVal}$.
\end{defn}

\subsection{Substitutions and the free construction}\label{subsec:subs}
We now give the main construction of the paper: given a preoperad $\oCmp$, we build a \emph{substitution category} $\oSUBSMB[\oCmp]$ whose morphisms are finite sequences of single substitutions and renamings, modulo the equations of the preoperad axioms. 
Composition is given by concatenation. This yields a pre-PROP $\oSUBSMB[\oCmp]$. 
When $V$ is a cartesian operad, the same construction yields a cartesian PROP $\oCSUBSMB[\oVal]$. 
Moreover, any Freyd operad $\left(\oVal,\oCmp,\oReturn\right)$ induces a functor $\oReturn[\oSUBSMB] : \oCSUBSMB[\oVal] \to \oSUBSMB[\oCmp]$, and therefore a Freyd PROP $\left(\oCSUBSMB[\oVal],\oSUBSMB[\oCmp],\oReturn[\oSUBSMB]\right)$.

In the theory of multicategories, there is a standard adjunction between multicategories and monoidal categories~\cite{hermida2000representable}, and between operads and PROPs~\cite{leinster2004higher}.
The construction of the left adjoint freely adds a tensor product to a multicategory by taking lists of objects for objects, and lists of multicategory morphisms for category morphisms. The resulting monoidal category can be considered as a category of simultaneous substitutions over the multicategory.
As pointed out in~\cite{rajesh2025universal}, this technique cannot work as-is in the case of premulticategories, since lists of morphisms do not encode the sequential order between them.
While~\cite{rajesh2025universal} specifies the appropriate forgetful functor, relating premonoidal categories with premulticategories,  the explicit construction of its left adjoint is left as an open problem. 
While there is probably a way to resolve it meta-categorically, in a similar vein to~\cite{hermida2000representable}, we use a more direct approach, based on the following core principle:
Multicategories are representable presheaves over their category of substitutions.
With that in mind, we construct the category of substitutions by taking the substitutions and renamings used to modify morphisms in a preoperad, and rather than immediately applying them on a morphism in the preoperad, we put them in a list, while omitting the morphism.

\begin{example}\label{ex:sub-word}
Let $\oCmp$ be a preoperad,  $f\in \oCmp(m_{1}+1+m_{2})$,  $g\in \oCmp(n_{1}+1+n_{2})$, and  $h\in \oCmp(n')$.
Consider the iterated action on $f$ obtained by first substituting $g$ into the distinguished input of $f$,
then substituting $h$ into the distinguished input of $g$, and finally reindexing by a renaming $r$: $
\oRenAp{\oSubAp{\oSubAp{f}{m_{1}\oSL g \oSR m_{2}}}{m_{1}+n_{1}\oSL h \oSR n_{2}+m_{2}}}{r}$. 
Instead of applying these operations  to $f$, we record them as the word $
\VoSub := \lPar{\,\oSub{m_{1}\oSL g \oSR m_{2}},\;
               \oSub{m_{1}+n_{1}\oSL h \oSR n_{2}+m_{2}},\;
               r\,}$, 
which remembers the \emph{sequential order} of the two substitutions and the renaming.
In the quotient defining substitutions, $\VoSub$ is identified with $
\lPar{\,\oSub{m_{1}\oSL \oSubAp{g}{n_{1}\oSL h \oSR n_{2}} \oSR m_{2}},\; r\,}$ 
by associativity of substitutions and the unit law for renamings.
\end{example}
 Starting from a preoperad $\oCmp$, we first form words  consisting of single substitutions and renamings. 
\begin{defn}[Substitutions]\label{def:sub-constructions}\label{def:op-single-sub}\label{def:op-pre-sub}
Let $\oCmp$ be a preoperad.
\begin{enumerate}
    \item A \emph{single substitution} from $m_{1}+n+m_{2}$ to $m_{1}+1+m_{2}$ is a tuple
    $
    (m_{1},\VoCmp,m_{2})$, where $
     \VoCmp \in \oC{n}$,
    written $\oSub{m_{1} \oSL \VoCmp \oSR m_{2}}$.
    We write $\oSSUB[\oCmp]{m}{n}$ for the set of single substitutions $m \to n$.

    \item A \emph{pre-substitution} $\VoSub:m\to n$ is a finite list $   \VoSub=\lPar{\VoSub[1],\ldots,\VoSub[k]}$
    equipped with arities $m=m_0,m_1,\ldots,m_k=n$ such that for each $i=1,\ldots,k$ the element $\VoSub[i]$ is either a single substitution  $\VoSub[i]\in \oSSUB[\oCmp]{m_{i-1}}{m_i}$, or a renaming  
$\VoSub[i] \in \oHom{\oRENSMB^{\mathbf{op}}}{m_{i-1}}{m_i}$. %
We write $\oPSUB[C]{m}{n}$ for the set of pre-substitutions $m\to n$, and $\lEmpty\in \oPSUB[\oCmp]{n}{n}$ for the empty list.

                                                                \end{enumerate}
\end{defn}

Next, we define two notions of substitutions by defining equivalence relations over pre-substitutions, each of which will correspond to one of our particular preoperads of interest, and the quotient by them.

\begin{figure}[t]
\centering
\small
\fbox{\begin{minipage}{0.97\linewidth}
\textbf{Substitution congruence generators.}
Rules are applied \emph{in context}: $\Gamma\,X\,\Gamma'\cong \Gamma\,Y\,\Gamma'$ whenever well typed.

\medskip
\textbf{Symmetric rules:} 
\[
\begin{array}{@{}l l >{\scriptstyle}l@{}}
\textbf{(U$_1$)}
&
\oSub{m_{1}\oSL \oId \oSR m_{2}} \cong \lEmpty
&
(\oId\in \oC{1})

\\[0.35em]

\textbf{(U$_2$)}
&
\oRen{\oId[m]}\cong \lEmpty
&
(\oId[m]\in\oRENSMB(m,m))

\\[0.35em]

\textbf{(A)}
&
\oSub{m_{1}\oSL g \oSR m_{2}}\;
\oSub{m_{1}+n_{1}\oSL h \oSR n_{2}+m_{2}}
\cong
\oSub{m_{1}\oSL (g\{n_{1}\dashv h \vdash n_{2}\}) \oSR m_{2}}
&
(g\in \oC{n_{1}+1+n_{2}},\; h\in \oC{n'})

\\[0.35em]

\textbf{(R)}
&
\oRen{r_{1}}\;\oRen{r_{2}}\cong \oRen{r_{2}\circ r_{1}}
&
(r_{1}\in\oRENSMB(n,m),\; r_{2}\in\oRENSMB(k,n))

\\[0.35em]

\textbf{(N)}
&
\oSub{n_{1}\oSL v \oSR n_{2}}\;\oRen{r_{1}+r+r_{2}}
\cong
\oRen{r_{1}+\oId[1]+r_{2}}\;\oSub{m_{1}\oSL v[r] \oSR m_{2}}
&
(v\in \oC{m},\; r\in\oRENSMB(n,m),\; r_{i}\in\oRENSMB(m_{i},n_{i}))

\\[0.35em]

\textbf{(S$_\ell$)}
&
\oRen{\oId[m_{1}]+\oSym[1,1]+\oId[m_{2}]}\;\oSub{m_{1}\oSL g \oSR 1+m_{2}}
\cong
\oSub{m_{1}+1\oSL g \oSR m_{2}}\;\oRen{\oId[m_{1}]+\oSym[1,n]+\oId[m_{2}]}
&
(g\in \oC{n})

\\[0.35em]

\textbf{(S$_r$)}
&
\oRen{\oId[m_{1}]+\oSym[1,1]+\oId[m_{2}]}\;\oSub{m_{1}+1\oSL g \oSR m_{2}}
\cong
\oSub{m_{1}\oSL g \oSR 1+m_{2}}\;\oRen{\oId[m_{1}]+\oSym[n,1]+\oId[m_{2}]}
&
(g\in \oC{n})
\end{array}
\]

\smallskip
\textbf{Additional cartesian rules:} 
\[
\begin{array}{@{}l l >{\scriptstyle}l@{}}
\textbf{(CEN$_1$)}
&
\oSub{m_{1}\oSL g_{1} \oSR m+1+m_{2}}\;
\oSub{m_{1}+n_{1}+m\oSL g_{2} \oSR m_{2}}
\cong
\oSub{m_{1}+1+m\oSL g_{2} \oSR m_{2}}\;
\oSub{m_{1}\oSL g_{1} \oSR m+n_{2}+m_{2}}
&
\begin{array}{c}
\scriptstyle
(g_{1}\in \oV{n_{1}},
\\
\scriptstyle
g_{2}\in \oV{n_{2}})
\end{array}

\\[0.35em]

\textbf{(CEN$_2$)}
&
\oSub{m_{1}\oSL g_{2} \oSR m+1+m_{2}}\;
\oSub{m_{1}+n_{2}+m\oSL g_{1} \oSR m_{2}}
\cong
\oSub{m_{1}+1+m\oSL g_{1} \oSR m_{2}}\;
\oSub{m_{1}\oSL g_{2} \oSR m+n_{1}+m_{2}}
&
\begin{array}{c}
\scriptstyle
(g_{1}\in \oV{n_{1}},
\\
\scriptstyle
g_{2}\in \oV{n_{2}})
\end{array}

\\[0.35em]

\textbf{(D)}
&
\oRen{\oId[m_{1}]+\oDiscard[1]+\oId[m_{2}]}\;\oSub{m_{1}\oSL g \oSR m_{2}}
\cong
\oRen{\oId[m_{1}]+\oDiscard[n]+\oId[m_{2}]}
&
(g\in \oV{n})

\\[0.35em]

\textbf{(C)}
&
\oRen{\oId[m_{1}]+\oCopy[1]+\oId[m_{2}]}\;\oSub{m_{1}\oSL g \oSR m_{2}}
\cong
\oSub{m_{1}\oSL (g+g) \oSR m_{2}}\;\oRen{\oId[m_{1}]+\oCopy[n]+\oId[m_{2}]}
&
(g\in \oV{n})
\end{array}
\]
\end{minipage}
}
\vspace{-0.3em}
\caption{Rules for the equivalence relation on pre-substitutions.}
\label{fig:sub-rules}
\end{figure}

\begin{defn}[Symmetric and Cartesian substitutions]\label{def:op-sub}
If $\oCmp$ is a symmetric $\oRENSMB$-cartesian preoperad, 
$
    \oSUB[\oCmp]{m}{n}$
    is the quotient of $\oPSUB[\oCmp]{m}{n}$ by the symmetric rules of Fig.~\ref{fig:sub-rules}. 
    If $\oVal$ is a cartesian operad, 
$
    \oCSUB[\oVal]{m}{n}$
    is the quotient of $\oPSUB[\oVal]{m}{n}$ by all rules of Fig.~\ref{fig:sub-rules}. 
\end{defn}

The congruence of Fig.~\ref{fig:sub-rules} is designed so that the resulting quotients inherit the standard
PROP structure by acting componentwise on words: composition is concatenation, and whiskering extends
each step by unused variables on the left or right.

\begin{defn}[Freyd PROP of Substitutions]\label{def:freyd-sub}
Given a Freyd operad $\left( \oVal , \oCmp, \oReturn\right)$, define an identity-on-objects map $\oReturn^{\oSUBSMB} : \oCSUBSMB[\oVal] \longrightarrow \oSUBSMB[\oCmp]$ by recursion on substitution words:
\[
\begin{array}{lcl}
    \oReturn^{\oSUBSMB} \lEmpty & := & \lEmpty \\
    \oReturn^{\oSUBSMB} \lPar{ \oRen{\VoRen}, \VoSub}& := & \lPar{\oRen{\VoRen} , \oReturn^{\oSUBSMB} \VoSub}\\
    \oReturn^{\oSUBSMB} \lPar{ \oSub{m_{1} \oSL \VoVal \oSR m_{2}} , \VoSub}& := & \lPar{\oSub{m_{1} \oSL \oReturn \VoVal \oSR m_{2}} , \oReturn^{\oSUBSMB} \VoSub}
\end{array}
\]
The functoriality of $\oReturn$, together with the defining equations of $\oVal$ and $\oCmp$, ensures that $\oReturn^{\oSUBSMB}$ preserves the equivalence relation.
\end{defn}

\begin{proposition}\label{prop:sub-constructions}
The followings hold.
\begin{enumerate}
    \item If $\oCmp$ is a symmetric $\oRENSMB$-cartesian preoperad, then $\oSUBSMB[\oCmp]$ is a pre-PROP.

    \item If $\oVal$ is a cartesian operad, then $\oCSUBSMB[\oVal]$ is a cartesian PROP.

    \item If $(\oVal,\oCmp,\oReturn)$ is a Freyd operad, then
$\bigl(\oCSUBSMB[\oVal],\oSUBSMB[\oCmp],\oReturn^{\oSUBSMB}\bigr)$ is a Freyd PROP.

    \item Given a Freyd operad functor
$    (\VoFunctor^{\oVal},\VoFunctor^{\oCmp}) :
    (\oVal[1],\oCmp[1],\oReturn[1])
    \longrightarrow
    (\oVal[2],\oCmp[2],\oReturn[2])$,
    there is an induced Freyd PROP functor $
    (\VoFunctor^{\oCSUBSMB},\VoFunctor^{\oSUBSMB}) :
\bigl(\oCSUBSMB[{\oVal[1]}],\oSUBSMB[{\oCmp[1]}],\oReturn[1]^{\oSUBSMB}\bigr)
    \longrightarrow
    \bigl(\oCSUBSMB[{\oVal[2]}],\oSUBSMB[{\oCmp[2]}],\oReturn[2]^{\oSUBSMB}\bigr)$, 
    obtained by applying $\VoFunctor^{\oVal}$ and $\VoFunctor^{\oCmp}$ to each substitution step and leaving renamings unchanged.
\end{enumerate}
\end{proposition}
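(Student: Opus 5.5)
The plan is to build everything on the free category of words and push structure through the quotient. In all four parts the rules of Fig.~\ref{fig:sub-rules} are applied in context, so the relation is a congruence for concatenation by construction. Hence $\oSUBSMB[\oCmp]$ and $\oCSUBSMB[\oVal]$ are categories with objects $\N$, composition given by concatenation (in diagrammatic order, matching the arities $m_0\to\cdots\to m_k$), identity the empty list $\lEmpty$, and associativity and unit laws holding strictly on words. No further checking is needed there.

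For part 1, I define whiskering on generators and extend it letterwise. Concretely, $k\oSL\oSub{m_1\oSL g\oSR m_2} := \oSub{k+m_1\oSL g\oSR m_2}$ and $k\oSL\oRen{r} := \oRen{\oId[k]+r}$, with the symmetric definitions on the right. Functoriality of whiskering on words is then automatic, and strictness ($0\oSL-=\mathrm{id}$, $k\oSL(k'\oSL-)=(k+k')\oSL-$) holds on the nose. The real check is that whiskering respects each generating rule, and this goes rule by rule: every rule of Fig.~\ref{fig:sub-rules} is stated uniformly in the outer paddings $m_1,m_2$ (resp.\ $r_1,r_2$ in \textbf{(N)}), so padding a rule instance yields another instance of the same rule. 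For \textbf{(U$_2$)} and \textbf{(R)} one uses that $+$ is functorial on $\oRENSMB$ (Lemma~\ref{lemma:ren-cat}). Symmetries are the renaming words $\oRen{\oSym[m,n]}$. Their naturality with respect to whiskered morphisms reduces, by Lemma~\ref{lemma:ren-gen}, to the elementary swaps $\oSym[1,1]$. Naturality against renaming letters then follows from \textbf{(R)}, and naturality against substitution letters follows by iterating \textbf{(S$_\ell$)}/\textbf{(S$_r$)} together with \textbf{(N)} to move the permutation past the letter. The premonoidal coherence for the symmetry holds already in $\oRENSMB$ and is transported by \textbf{(R)}. One must also define $f\oSR n'\circ m\oSL g$ as the binary tensor; this is fine in the premonoidal (non-functorial) sense.

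For part 2, everything from part 1 applies, since a cartesian operad is symmetric and $\oRENSMB$-cartesian. In addition, I must show that every morphism is central and that copy and discard, taken as the renaming words $\oRen{\oCopy[n]}$ and $\oRen{\oDiscard[n]}$, are natural and coherent. The comonoid laws and the coherence equations are identities in $\oRENSMB$, transported via \textbf{(R)}. Centrality reduces to generators, because central morphisms are closed under composition and whiskering. For two substitution letters, centrality is exactly \textbf{(CEN$_1$)}/\textbf{(CEN$_2$)}. For a renaming letter against a substitution letter it is an instance of \textbf{(N)} with $r=\oId$. For two renamings it holds in $\oRENSMB$. Naturality of $\oDiscard$ and $\oCopy$ against renamings is naturality in $\oRENSMB$; against substitution letters it is \textbf{(D)} and \textbf{(C)}, extended from arity-$1$ copy/discard to $\oCopy[n],\oDiscard[n]$ by decomposing them via the coherence equations.

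For parts 3 and 4, the main obstacle is well-definedness of the maps on the quotients. For $\oReturn^{\oSUBSMB}$ this means: each rule of $\oCSUBSMB[\oVal]$ must be sent to a valid equation in $\oSUBSMB[\oCmp]$. Rules \textbf{(U$_1$)}, \textbf{(A)} and \textbf{(N)} map to the same rules because $\oReturn$ is a cartesian preoperad functor. Rules \textbf{(S$_\ell$)}/\textbf{(S$_r$)} map directly. The cartesian rules \textbf{(CEN)}, \textbf{(D)}, \textbf{(C)} are the delicate ones, since $\oSUBSMB[\oCmp]$ does not contain them as generators. I would try first to derive them for images of $\oReturn$. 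Both sides act on words, so I would show that the pre-PROP $\oSUBSMB[\oCmp]$ is faithfully represented by its action on $\oCmp$ via applying words to morphisms. If that route fails, I would instead enlarge the congruence on $\oSUBSMB[\oCmp]$ so that these rules hold for substitution letters lying in $\mathrm{im}\,\oReturn$; this is what the hypothesis that the image is a cartesian operad suggests the definition intends. Once well-definedness is settled, $\oReturn^{\oSUBSMB}$ is identity on renamings and commutes letterwise with whiskering, so it is a pre-PROP functor, and the images of its generators are central. Part 4 is analogous: apply $\VoFunctor^{\oVal}$ and $\VoFunctor^{\oCmp}$ letterwise. Each rule is preserved by functoriality and preservation of renamings. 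The squares $\VoFunctor^{\oSUBSMB}\oReturn[1]^{\oSUBSMB}=\oReturn[2]^{\oSUBSMB}\VoFunctor^{\oCSUBSMB}$ commute because they already commute on letters, using $\VoFunctor^{\oCmp}\oReturn[1]=\oReturn[2]\VoFunctor^{\oVal}$.
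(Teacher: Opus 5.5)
Your parts 1, 2 and 4 follow the paper's argument: letterwise whiskering, singleton renaming words for symmetry, copy and discard, rule-by-rule stability, and centrality from \textbf{(CEN$_1$)}, \textbf{(CEN$_2$)} and \textbf{(N)}. You give more detail than the paper does. Part 3 is where it breaks, and you were right to be suspicious. Neither of your routes proves it as stated, and in fact it is false for $\oSUBSMB[\oCmp]$ as defined.

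Take the Freyd operad $(\oVal,\oVal,\oId)$ with $\oV{n}=\{x_1,\dots,x_n,c\}$, the terms over one constant. Consider the words $w_1=\lPar{\oSub{0\oSL c\oSR 1},\oSub{0\oSL c\oSR 0}}$ and $w_2=\lPar{\oSub{1\oSL c\oSR 0},\oSub{0\oSL c\oSR 0}}$ from $0$ to $2$. They are the two sides of \textbf{(CEN$_1$)}, so they are equal in $\oCSUBSMB[\oVal]$, and they act identically on $\oVal$.

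Now let words act on pairs in $\oV{n}^{2}$, starting from $(x_1,x_2)$, componentwise as in $\oVal$, and keep a log: a letter $\oSub{m_1\oSL c\oSR m_2}$ appends the set of components currently equal to $x_{m_1+1}$, if that set is nonempty. Every symmetric rule preserves both the pair and the log:
\begin{itemize}
\item In \textbf{(A)} both sides log the same event. The second letter can hit a component only when $g$ is the projection onto the hole it fills, and then $g\{h\}$ is $h$ up to reindexing.
\item \textbf{(N)}, \textbf{(S$_\ell$)} and \textbf{(S$_r$)} only relabel positions around the same event.
\item \textbf{(U$_1$)}, \textbf{(U$_2$)} and \textbf{(R)} log nothing.
\end{itemize}
But $w_1$ logs $\{1\},\{2\}$ while $w_2$ logs $\{2\},\{1\}$. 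So $w_1\not\cong w_2$ in $\oSUBSMB[\oVal]$. This has two consequences. First, $\oReturn^{\oSUBSMB}$, here the identity on words, does not descend to the quotient. Second, $\lPar{\oSub{0\oSL c\oSR 0}}$ is not central in $\oSUBSMB[\oVal]$, since its two premonoidal composites with itself are exactly $w_1$ and $w_2$. The same example kills your route A: the action of $\oSUBSMB[\oCmp]$ on $\oCmp$ is not faithful, as the paper itself warns after Theorem~\ref{thm:pre-sub-equiv}.

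The paper's own proof of (3) only asserts that $\oReturn$ preserves the generating equations and that centrality follows from the Freyd axioms. That is exactly the step you flagged, and it does not hold. Centrality of $\oReturn\VoVal$ in $\oCmp$ is a property of the action on $\oCmp$, and the congruence defining $\oSUBSMB[\oCmp]$ never uses it.

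Your route B is the right repair, but it changes Definition~\ref{def:op-sub}, so it proves a modified statement rather than the one given. It also has to be sized correctly:
\begin{itemize}
\item Imposing \textbf{(CEN$_1$)}, \textbf{(CEN$_2$)} only between two $\oReturn$-letters gives well-definedness of $\oReturn^{\oSUBSMB}$. It does not give the centrality a Freyd PROP requires.
\item So you need these rules for an $\oReturn$-letter against an arbitrary $\oCmp$-letter, plus \textbf{(D)} and \textbf{(C)} for $\oReturn$-letters.
\item Theorem~\ref{thm:sub-presheaf} must then be rechecked. It now needs the strong reading of ``the image of $\oReturn$ is a cartesian operad'': centrality and copy/discard naturality of $\oReturn\VoVal$ against all of $\oCmp$.
\item The adjunction survives, because these laws hold for $\oReturn[2]$-images in any Freyd PROP.
\end{itemize}
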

\begin{proof}~
\begin{enumerate}
    \item The rules of Fig.~\ref{fig:sub-rules} define a congruence on pre-substitutions, so concatenation descends to a well-defined associative composition with unit $\lEmpty$. Whiskering is defined componentwise on steps, and each generating rule is stable under whiskering, so tensor is well-defined on the quotient. The symmetry is represented by the singleton renaming word $[\sigma_{m,n}]$, and its naturality follows from the renaming/substitution interaction and the symmetry rules.

\item The additional cartesian rules force centrality of substitution steps and the naturality of copying and discarding. The copy and discard maps are represented by the singleton renaming words $[\Delta_{n}]$ and $[!_{n}]$, and the comonoid laws are inherited from $\oRENSMB^{\mathbf{op}}$.

\item The map $\oReturn^{\oSUBSMB}$ is well defined because $\oReturn$ preserves renamings and the generating equations. Its image is central by the Freyd operad axioms, hence it defines a Freyd functor.

\item The induced maps lists respect the quotients since $\VoFunctor^{\oVal}$ and $\VoFunctor^{\oCmp}$ preserve the relevant operadic structure and renamings. Compatibility with $\oReturn^{\oSUBSMB}$ follows from the Freyd-operad functor axioms.
\qed
\end{enumerate}
\let\qed\relax
\end{proof}

\subsection{Representability}\label{subsec:presheaves}

The substitution categories of Sec.~\ref{subsec:subs} are designed so that arity families of a preoperad can be
recovered as morphisms with codomain $1$.
Concretely, a word $s:m\to n$ in $\oSUBSMB[\oCmp]$ should act on any $n$-ary operation $f\in \oC{n}$ by producing an
$m$-ary operation obtained by performing the indicated substitutions and renamings in sequence.
This yields a contravariant action $\oC{n}\to \oC{m}$, hence a presheaf $\oCmp:{\oSUBSMB[\oCmp]}^{\mathrm{op}}\to \mathbf{Set}$, and we show that this presheaf is representable by the implicit generating object $1$.

\begin{defn}[Pre-Substitution Application]
Let $\oCmp$ be a symmetric $\oRENSMB$-cartesian preoperad, 
$\VoSub \in \oPSUB[\oCmp]{m}{n}$, and $\VoCmpI \in \oC{n}$. Define the \emph{application} $\oPshAp{\VoCmpI}{\VoSub}\in \oC{m}$, by recursion on $\VoSub$:
$$
    \oPshAp{\VoCmpI}{\lEmpty}  := \VoCmp \qquad,\qquad
    \oPshAp{\VoCmpI}{\lPar{ \oSub{m_{1} \oSL \VoCmpII \oSR m_{2}} , \VoSub}}  := \oPshAp{\oSubAp{\VoCmpI}{m_{1} \oSL \VoCmpII \oSR m_{2}}}{\VoSub} \qquad,\qquad
    \oPshAp{\VoCmpI}{\lPar{ \VoRen , \VoSub}}  := \oPshAp{\oRenAp{\VoCmpI}{\VoRen}}{\VoSub}
$$
\end{defn}

\begin{theorem}\label{thm:pre-sub-equiv}
The followings hold.
\begin{enumerate}
    \item If $\oCmp$ is a symmetric $\oRENSMB$-cartesian preoperad, and $\VoSub[1] \cong \VoSub[2] \in \oSUB[\oCmp]{m}{n}$ then for all $\VoCmp \in \oC{n}$: 
$\oPshAp{\VoCmp}{\VoSub[1]} \equiv \oPshAp{\VoCmp}{\VoSub[2]} \in \oC{m} $.

    \item If $\oVal$ is a cartesian operad, and $\VoSub[1] \cong \VoSub[2] \in \oCSUB[\oVal]{m}{n}$ then for all $\VoVal \in \oV{n}$: 
$\oPshAp{\VoVal}{\VoSub[1]} \equiv \oPshAp{\VoVal}{\VoSub[2]} \in \oV{m} $.
\end{enumerate}
\end{theorem}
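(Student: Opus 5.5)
The proof is a standard congruence argument. The relation $\cong$ on pre-substitutions is the least equivalence relation containing the generating rules of Fig.~\ref{fig:sub-rules}, applied in context. So it suffices to show two things. First, the relation ``$\oPshAp{\VoCmp}{\VoSub[1]} \equiv \oPshAp{\VoCmp}{\VoSub[2]}$ for all $\VoCmp$'' is an equivalence relation and is closed under placing words in context $\Gamma\,X\,\Gamma'$. Second, each generating rule $X\cong Y$ satisfies it. Being an equivalence relation is immediate, since it is defined pointwise by an equality.

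\textbf{Closure under context.} From the recursive definition, a straightforward induction on the first word gives the action law
\[
\oPshAp{\VoCmp}{(\VoSub,\VoSub')} \equiv \oPshAp{(\oPshAp{\VoCmp}{\VoSub})}{\VoSub'}.
\]
Now suppose $X$ and $Y$ act identically on every morphism of the appropriate arity. Then for any $\VoCmp$ we compute $\oPshAp{\VoCmp}{(\Gamma,X,\Gamma')} \equiv \oPshAp{(\oPshAp{(\oPshAp{\VoCmp}{\Gamma})}{X})}{\Gamma'}$. Replacing $X$ by $Y$ does not change the inner term, because the hypothesis quantifies over all arguments, in particular over $\oPshAp{\VoCmp}{\Gamma}$. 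Hence the relation is a congruence, and we reduce to checking the generators one at a time.

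\textbf{Generators for part 1.} Each symmetric rule is a direct transcription of an axiom, once both sides are unfolded on an arbitrary $\VoCmp$.
\begin{itemize}
\item (U$_1$) is the right unit law of Definition~\ref{def:preoperad}.
\item (U$_2$) and (R) are laws 1 and 2 of Definition~\ref{def:scart-preop}, with $\oPERMSMB$ replaced by $\oRENSMB$. The contravariance matches: applying $r_1$ then $r_2$ equals applying $r_2\circ r_1$.
\item (A) is associativity of the preoperad.
\item (N) is law 3 of Definition~\ref{def:scart-preop}, with $u:=\VoCmp$ and $v$ renamed by $r$.
\item (S$_\ell$) and (S$_r$) are exactly the two naturality equations of Definition~\ref{def:sym-preop}, read left to right.
\end{itemize}
The only care needed is bookkeeping of arities and indices, to confirm that the whiskering numbers and renaming blocks in each rule line up with those in the corresponding axiom. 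For instance, in (N) the outer renaming $r_1+r+r_2$ has domain $m_1+m+m_2$ precisely when law 3 is instantiated with $\VoRenI[i]$ and $\VoRenII=r$.

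\textbf{Generators for part 2.} Here $\oVal$ is a cartesian operad, so it is in particular symmetric and $\oRENSMB$-cartesian, and part 1's verification applies verbatim to the symmetric rules. For the remaining rules:
\begin{itemize}
\item (CEN$_1$) and (CEN$_2$) are the two equations of Definition~\ref{def:op-centrality} for $g_1,g_2$. They hold because every morphism of an operad is central.
\item (D) and (C) are items 1 and 2 of Definition~\ref{def:cart-op}.
\end{itemize}

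\textbf{Main obstacle.} I expect the delicate point to be the index matching in rule (C), where $g+g$ denotes the parallel composite $\oSubAp{\VoCmpI}{m_1\oSL g+g\oSR m_2}$. This composite is well defined only because $g$ is central, per the remark after Definition~\ref{def:op-centrality}. I would unfold it as two successive single substitutions and check that this is exactly the left-hand side of Definition~\ref{def:cart-op}(2). Apart from this check, the proof is mechanical.
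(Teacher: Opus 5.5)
Your proposal is correct and follows the paper's proof: induction on the derivation of $\cong$, with each generating rule of Fig.~\ref{fig:sub-rules} discharged by the matching axiom. Those axioms are units, associativity, renaming functoriality and interchange, symmetry naturality, centrality, and naturality of copy and discard. The one step you make explicit that the paper leaves implicit is closure under context via the action law $f \star (s, s') \equiv (f \star s) \star s'$, which the paper only establishes later, in the proof of Theorem~\ref{thm:sub-presheaf}.
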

\begin{proof}
For (i), it suffices to check each generating rule of Fig.~\ref{fig:sub-rules}.
Each rule is exactly one of the axioms of a symmetric $\oRENSMB$-cartesian preoperad (units, associativity,
renaming functoriality, renaming-substitution interchange, and symmetry naturality).
For (ii), the additional cartesian rules correspond to centrality and to the naturality axioms for copying and
discarding in a cartesian operad.
\end{proof}

The implication in Thm.~\ref{thm:pre-sub-equiv} is one-directional and it states that our congruence is \emph{sound}: equivalent words act identically on every operation of $\oCmp$,
so that application descends to the quotient.
The converse need not hold for a fixed preoperad $\oCmp$, since $\oCmp$ may satisfy additional equations beyond
those enforced by the defining axioms, which may identify more words extensionally.
Thus the congruence should be understood as the least one generated by the structural axioms in Fig.~\ref{fig:sub-rules}.

\begin{corollary}\label{cor:sub-presheaf}
For symmetric $\oCmp$, the assignment $n\mapsto \oC{n}$ extends to a functor
$\oCmp:\oSUBSMB[\oCmp]^{\mathrm{op}}\to \mathbf{Set}$ by $\oCmp(s)(f)=f\star s$.
Similarly, for cartesian $\oVal$ we obtain $V:{\oCSUBSMB[\oVal]}^{\mathrm{op}}\to \mathbf{Set}$.
\end{corollary}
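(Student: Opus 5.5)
To prove Corollary~\ref{cor:sub-presheaf}, I would define the action first on pre-substitutions and then pass to the quotient, using Thm.~\ref{thm:pre-sub-equiv} to justify the descent. The functor sends each object $n$ to the set $\oC{n}$. For a morphism $[s]\in\oSUB[\oCmp]{m}{n}$, represented by a word $s\in\oPSUB[\oCmp]{m}{n}$, I set $\oCmp([s])(f):=\oPshAp{f}{s}$. By Thm.~\ref{thm:pre-sub-equiv}(i), this value does not depend on the choice of representative, so $\oCmp([s]):\oC{n}\to\oC{m}$ is a well-defined function. The direction is correct: a morphism $m\to n$ in $\oSUBSMB[\oCmp]$ yields a map $\oC{n}\to\oC{m}$, so this is contravariant as required.

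Next I would verify functoriality, which reduces to two identities on words.

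\textbf{Identities.} The identity on $n$ in $\oSUBSMB[\oCmp]$ is the class of $\lEmpty$, and $\oPshAp{f}{\lEmpty}=f$ holds by definition.

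\textbf{Composition.} Composition in $\oSUBSMB[\oCmp]$ is concatenation of words, by Prop.~\ref{prop:sub-constructions}(1). I would take $s\in\oPSUB[\oCmp]{m}{n}$ and $t\in\oPSUB[\oCmp]{n}{k}$, whose composite as a morphism $m\to k$ is the word $(t,s)$ (steps of $t$ first, then $s$, matching the order in which application consumes steps). The goal is to prove
\[
\oPshAp{f}{(t,s)} \;=\; \oPshAp{(\oPshAp{f}{t})}{s}.
\]
This follows by induction on the length of $t$, since application is defined by recursion on the head of the list. The base case is immediate from $\oPshAp{f}{\lEmpty}=f$. For the inductive step, a leading single substitution or renaming is applied to $f$ on both sides, and the induction hypothesis finishes the argument. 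This identity is exactly contravariance: $\oCmp(t\circ s)=\oCmp(s)\circ\oCmp(t)$, with composition in diagrammatic order matching concatenation.

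\textbf{Cartesian case.} For a cartesian operad $\oVal$, the argument is verbatim the same, with $\oCSUBSMB[\oVal]$ in place of $\oSUBSMB[\oCmp]$ and Thm.~\ref{thm:pre-sub-equiv}(ii) in place of part (i). Composition is again concatenation, by Prop.~\ref{prop:sub-constructions}(2).

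The only real subtlety, and the step I expect to be the main obstacle, is bookkeeping the orientation. I need to confirm that the paper's convention for concatenation as composition matches the order in which $\star$ consumes steps. If the convention turns out to be reversed, I would restate the induction on $s$ instead of $t$, and the argument goes through unchanged.
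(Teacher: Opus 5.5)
Your proposal is correct and follows the paper's own argument, which appears inside the proof of Thm.~\ref{thm:sub-presheaf}(1): well-definedness on classes via Thm.~\ref{thm:pre-sub-equiv}, identities via $\oPshAp{f}{\lEmpty}=f$, and $\oPshAp{f}{(t,s)}=\oPshAp{(\oPshAp{f}{t})}{s}$ by induction on the first word. Your orientation $(t,s)=t\circ s$ is the one consistent with the recursion for $\star$ and with the paper's uncurrying in Def.~\ref{def:uncurry}, even though the arity indexing in Def.~\ref{def:sub-constructions} suggests the reverse; so concatenation matches ordinary composition order, not diagrammatic order as your last remark says.
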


\begin{theorem}\label{thm:sub-presheaf}
Let $\left( \oVal , \oCmp, \oReturn \right)$ be a Freyd operad. 
\begin{enumerate}
\item The presheaf $\oVal:{\oCSUBSMB[\oVal]}^{\mathrm{op}}\to\mathbf{Set}$ is representable by $1$, with
$\oV{n}\;\cong\;\oCSUB[\oVal]{n}{1}$.
\item The presheaf $\oCmp:\oSUBSMB[\oCmp]^{\mathrm{op}}\to\mathbf{Set}$ is representable by $1$, with
$\oC{n}\;\cong\;\oSUB[\oCmp]{n}{1}$.
\item The map $\oReturn:\oVal\to \oCmp$ is a natural transformation
$\oVal \Rightarrow \oCmp \oReturn^{\oSUBSMB}$.
\end{enumerate}
\end{theorem}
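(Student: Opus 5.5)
The plan is to prove representability directly, by writing down mutually inverse maps and checking naturality. For item (ii), define $\phi_n : \oC{n} \to \oSUB[\oCmp]{n}{1}$ by $\phi_n(\VoCmp) := \lPar{\oSub{0 \oSL \VoCmp \oSR 0}}$, the singleton word consisting of one single substitution from $n$ into $1$. In the other direction, define $\psi_n : \oSUB[\oCmp]{n}{1} \to \oC{n}$ by $\psi_n(\VoSub) := \oPshAp{\oId}{\VoSub}$, acting on the identity $\oId \in \oC{1}$. Thm.~\ref{thm:pre-sub-equiv}(i) shows that $\psi_n$ is well defined on the quotient. The map $\phi_n$ needs no quotient argument.

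The first composite is immediate. We have $\psi_n(\phi_n(\VoCmp)) = \oSubAp{\oId}{0\oSL \VoCmp \oSR 0} \equiv \VoCmp$ by the left unit law of Def.~\ref{def:preoperad}.

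The other composite, $\phi_n(\psi_n(\VoSub)) \cong \VoSub$, is the main step. I would prove by induction on the length of a representative word the more general claim
\[\lPar{\oSub{0\oSL \VoCmpI \oSR 0}, \VoSub} \cong \lPar{\oSub{0 \oSL \oPshAp{\VoCmpI}{\VoSub} \oSR 0}}\]
for every $\VoCmpI \in \oC{k}$ and every word $\VoSub : n \to k$. The original statement is the case $\VoCmpI = \oId$ combined with rule (U$_1$). The induction has three cases.
\begin{itemize}
\item If $\VoSub$ is empty, there is nothing to prove.
\item If the first step of $\VoSub$ is a substitution $\oSub{m_1\oSL \VoCmpII \oSR m_2}$, apply rule (A) with outer whiskers $0,0$. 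This gives $\oSub{0\oSL \VoCmpI\oSub{m_1\oSL \VoCmpII\oSR m_2}\oSR 0}$, and the induction hypothesis applies to the remaining word.
\item If the first step is a renaming $\oRen{\VoRen}$, apply rule (N) with $\VoRen[1],\VoRen[2]$ empty renamings on $0$. This turns $\oSub{0\oSL \VoCmpI \oSR 0}\oRen{\oId[1]}$ into the renaming of $\VoCmpI$, and rule (U$_2$) removes the leftover identity renaming.
\end{itemize}
The orientation of (N) must be checked against the contravariant convention for renamings in words. Specialising (N) to $r_1=r_2=\oId[0]$ gives $\oSub{0\oSL v\oSR 0}\oRen{r} \cong \oRen{\oId[1]}\oSub{0\oSL v[r]\oSR 0}$. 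This is the needed identity up to the direction in which the renaming step sits, and I expect this bookkeeping of arities and orientation to be the only real obstacle.

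Naturality of $\phi$ then follows from the same generalized claim. For any word $t : m \to n$ we have $\phi_m(\oPshAp{\VoCmp}{t}) \cong \phi_n(\VoCmp)\cdot t$ under concatenation, which is exactly the presheaf action of Cor.~\ref{cor:sub-presheaf} matched against precomposition in the representable $\oSUB[\oCmp]{-}{1}$. Item (i) is proved the same way for $\oCSUBSMB[\oVal]$. The extra cartesian rules only enlarge the congruence, and Thm.~\ref{thm:pre-sub-equiv}(ii) still makes $\psi$ well defined, so the same induction, using only the symmetric rules, gives the bijection.

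For item (iii), naturality asks that $\oReturn(\oPshAp{\VoVal}{\VoSub}) \equiv \oPshAp{\oReturn(\VoVal)}{\oReturn^{\oSUBSMB}\VoSub}$ for $\VoSub$ in $\oCSUBSMB[\oVal]$. I would prove this by induction on a representative word. Each substitution step uses functoriality of $\oReturn$, $\oReturn(\oSubAp{\VoVal}{m_1\oSL \VoValII\oSR m_2}) \equiv \oSubAp{\oReturn\VoVal}{m_1\oSL\oReturn\VoValII\oSR m_2}$. Each renaming step uses cartesianness of $\oReturn$. Independence of the chosen representative follows from Thm.~\ref{thm:pre-sub-equiv} on both sides, together with the well-definedness of $\oReturn^{\oSUBSMB}$ from Def.~\ref{def:freyd-sub}.
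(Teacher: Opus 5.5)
Your proposal is correct and takes essentially the paper's route. It uses the same mutually inverse maps $\VoCmpI \mapsto \lPar{\oSub{0 \oSL \VoCmpI \oSR 0}}$ and $\VoSub \mapsto \oPshAp{\oId}{\VoSub}$, the left unit law for one composite, induction on words for the other, and induction using functoriality and renaming-preservation of $\oReturn$ for item (iii). Generalising the inductive claim from $\oId$ to an arbitrary $\VoCmpI$ is exactly what the paper's terse ``by induction on $\VoSub$'' needs. Your orientation worry about (N) is unfounded, since its left-hand side $\oSub{0 \oSL \VoCmpI \oSR 0}\oRen{\VoRen}$ is literally the head of the word in your renaming case.
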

\begin{proof}
\begin{enumerate}
\item The action of both categories is given by pre-substitution application. Thm.~\ref{thm:pre-sub-equiv} ensures that it is well-defined, as it preserves the equivalence relations defining the appropriate categories.
Functoriality is given by construction, since $\oPshAp{\VoCmp}{\lEmpty} = \VoCmp$ and induction on $\VoSub[1]$ verifies that $\oPshAp{\VoCmp}{\left(\VoSub[1] \lApp \VoSub[2]\right)} \equiv \oPshAp{\left(\oPshAp{\VoCmp}{\VoSub[1]}\right)}{\VoSub[2]}$.
To prove representability for $\oVal$, we go from $\oV{n}$ to $\oCSUB[\oVal]{n}{1}$ by taking each $\VoVal \in \oV{n}$ to $\lPar{\oSub{0 \oSL \VoVal \oSR 0}} \in \oCSUB[\oVal]{n}{1}$, and in the other direction by taking $\VoSub \in \oCSUB[\oVal]{n}{1}$ to $\oPshAp{\oId}{\VoSub} \in \oV{n}$. From $\oPshAp{\oId}{\lPar{\oSub{0 \oSL \VoVal \oSR 0}}} \equiv \VoVal$ we get one part of the isomorphism, and by induction on $\VoSub$, we get $\lPar{\oSub{0 \oSL \oPshAp{\oId}{\VoSub} \oSR 0}} \cong \VoSub$, which gives us the other part of the isomorphism.

\item The same method is used to prove representability for $\oCmp$.

\item For naturality of $\oReturn$, we need to prove that for every $m,n\in\N$,  $\VoSub \in \oCSUB[\oVal]{m}{n}$, and  $\VoVal \in \oV{n}$, we get $\oPshAp{\oReturn \VoVal}{\oReturn^{\oSUBSMB} \VoSub} \equiv \oReturn\left( \oPshAp{\VoVal}{\VoSub}\right)$. This is shown by induction on $\VoSub$, using the functoriality of $\oReturn$.
\qed
\end{enumerate}\let\qed\relax
\end{proof}

Representability yields natural bijections
$
\oC{n}\;\cong\;\oSUB[\oCmp]{n}{1}$ 
 and 
$\oV{n}\;\cong\;\oCSUB[\oVal]{n}{1}$. 
Thus a substitution PROP recovers its underlying (Freyd) operad by restriction to codomain $1$. 

\subsection{The Freyd adjunction}\label{subsec:adjunction}

We now prove the universal property of the substitution construction.
The  representability result  (Thm.~\ref{thm:sub-presheaf}), packages restriction to codomain $1$ as a functor 
 $U : \oFREYDPROP \to \oFreydOp$, and we show that the free substitution construction $F : \oFreydOp \to \oFREYDPROP$ is left adjoint to $U$.
We use $\oFREYDPROP$ to denote the category of (small) Freyd PROP categories, with Freyd PROP functors between them.
We use $\oFreydOp$ to denote the category of (small) Freyd operads, with Freyd operad functors between them.
We get the following result:

\begin{defn}[The forgetful functor $\oForget$]\label{def:forget}
Define the functor 
$\oForget : \oFREYDPROP \to \oFreydOp$ as follows.

On objects, $\oForget$ sends a Freyd PROP $(\oVal,\oCmp,\oReturn)$ to the Freyd operad
$(\oVal^{\oForget},\oCmp^{\oForget},\oReturn^{\oForget})$ where\vspace{-0.3em}
\[
\oVal^{\oForget}(n) := \oHom{\oVal}{n}{1}
\qquad\qquad
\oCmp^{\oForget}(n) := \oHom{\oCmp}{n}{1}
\]
and $\oReturn^{\oForget}$ is the restriction of $\oReturn$ to morphisms with codomain $1$.
The operadic identity is $\oId[1]\in \oHom{\oVal}{1}{1}$ (and similarly in $\oCmp$), and the operadic substitution is induced by
whiskering and composition in the PROP:
for $f\in \oHom{\oVal}{n_1+1+n_2}{1}$ and $g\in \oHom{\oVal}{k}{1}$,
$
\oSubAp{f}{n_1 \oSL g \oSR n_2} \;:=\; f \circ (n_1 \oSL g \oSR n_2)
$, 
and similarly in $\oCmp$.
Renamings in $\oVal^{\oForget}$ are interpreted in $\oVal$ via Lemma \ref{lemma:ren-gen} and the $\oRENSMB$-cartesian structure is given by:
$ \oRenAp{\VoVal}{ \VoRen } := \VoVal \circ \VoRen  $
In $\oCmp^{\oForget}$, renamings are the same renamings as in $\oVal^{\oForget}$, but with:
$\oRenAp{\VoCmp}{ \VoRen } := \VoCmp \circ \oReturn \VoRen  $.

On morphisms, $\oForget$ sends a Freyd PROP functor $(\VoFunctor^{\oVal},\VoFunctor^{\oCmp})$ to the Freyd operad functor
$(\oForget\VoFunctor^{\oVal},\oForget\VoFunctor^{\oCmp})$ obtained by restricting $\VoFunctor^{\oVal}$ and $\VoFunctor^{\oCmp}$
to morphisms with codomain $1$.
\end{defn}

\begin{defn}[The free functor $\oFree$]\label{def:free}
Define the functor 
$\oFree : \oFreydOp \to \oFREYDPROP$ as follows.

On objects, $\oFree$ sends a Freyd operad $(\oVal,\oCmp,\oReturn)$ to the Freyd PROP
$
\oFree(\oVal,\oCmp,\oReturn) \;:=\; (\oCSUBSMB[\oVal],\,\oSUBSMB[\oCmp],\,\oReturn^{\oSUBSMB})
$, 
as in Def.~\ref{def:freyd-sub}.
On morphisms, $\oFree$ sends a Freyd operad functor $(\VoFunctor^{\oVal},\VoFunctor^{\oCmp})$ to the induced Freyd PROP functor
$(\VoFunctor^{\oCSUBSMB[\oVal]},\VoFunctor^{\oSUBSMB[\oCmp]})$ from Thm.~\ref{prop:sub-constructions}.
\end{defn}

\begin{theorem}[The Freyd Adjunction]\label{thm:preop-adj}
There is an adjunction $\oFree \dashv \oForget$ where\vspace{-0.3em}
\[
\oForget : \oFREYDPROP \longrightarrow \oFreydOp
\qquad\text{and}\qquad
\oFree : \oFreydOp \longrightarrow \oFREYDPROP.
\]
\end{theorem}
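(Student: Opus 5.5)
We establish the adjunction by giving a unit and, for each Freyd operad $(\oVal,\oCmp,\oReturn)$ and Freyd PROP $(\oVal',\oCmp',\oReturn')$, a natural bijection
\[
\oFREYDPROP\bigl(\oFree(\oVal,\oCmp,\oReturn),(\oVal',\oCmp',\oReturn')\bigr)\;\cong\;\oFreydOp\bigl((\oVal,\oCmp,\oReturn),\oForget(\oVal',\oCmp',\oReturn')\bigr).
\]
The unit $\eta$ at $(\oVal,\oCmp,\oReturn)$ is the representability isomorphism of Thm.~\ref{thm:sub-presheaf}, namely $\VoVal\mapsto\lPar{\oSub{0\oSL\VoVal\oSR 0}}$ into $\oCSUB[\oVal]{n}{1}$, and similarly for computations. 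The first step is to check that $\eta$ is a Freyd operad functor into $\oForget\oFree(\oVal,\oCmp,\oReturn)$. It preserves identities by (U$_1$). It preserves substitution because $\lPar{\oSub{0\oSL f\oSR 0}}\circ(n_1\oSL\lPar{\oSub{0\oSL g\oSR0}}\oSR n_2)$ is the concatenated word $\lPar{\oSub{0\oSL f\oSR0},\oSub{n_1\oSL g\oSR n_2}}$, which (A) identifies with $\lPar{\oSub{0\oSL \oSubAp{f}{n_1\oSL g\oSR n_2}\oSR0}}$. For renamings, (N) with $r_1,r_2$ empty gives $\lPar{\oSub{0\oSL f\oSR0},\oRen{r}}\cong\lPar{\oSub{0\oSL f[r]\oSR0}}$; in $\oCmp$ one additionally uses that $\oReturn^{\oSUBSMB}$ fixes renaming words. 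Compatibility with $\oReturn$ is immediate from Def.~\ref{def:freyd-sub}.

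Given a Freyd operad functor $\VoFunctor:(\oVal,\oCmp,\oReturn)\to\oForget(\oVal',\oCmp',\oReturn')$, we define its transpose $\widehat{\VoFunctor}$ on words by sending a single step $\oSub{m_1\oSL g\oSR m_2}$ to the whiskered morphism $m_1\oSL\VoFunctor g\oSR m_2$ in the PROP, and a renaming step $\oRen{r}$ to the canonical image of $r$. In $\oVal'$ this image comes from the cartesian structure via Lemma~\ref{lemma:ren-gen}; in $\oCmp'$ it is $\oReturn'$ of that image. Concatenation is sent to composition. We then verify that every rule of Fig.~\ref{fig:sub-rules} holds in the target:
\begin{itemize}
\item (U$_1$), (U$_2$) and (R) follow from functoriality of $\VoFunctor$ and of the renaming interpretation.
\item (A) is precisely preservation of operadic substitution by $\VoFunctor$, where substitution in $\oForget$ is $f\circ(n_1\oSL g\oSR n_2)$, together with strictness of whiskering.
\item (N), (S$_\ell$) and (S$_r$) express naturality of renamings and symmetries with respect to whiskered morphisms. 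They follow from $\VoFunctor$ preserving renamings ($\VoFunctor(f[r])=\VoFunctor f\circ r$), together with the fact that symmetries of a pre-PROP are natural and that $\oReturn'$ of renamings is central.
\item (CEN$_{1,2}$) follow from all morphisms of $\oVal'$ being central.
\item (D) and (C) follow from naturality of copy and discard in the cartesian PROP $\oVal'$.
\end{itemize}
The resulting pair is a Freyd PROP functor: whiskering is preserved by construction, and compatibility with $\oReturn$ holds step by step.

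Next we show the correspondence is bijective. First, $\oForget\widehat{\VoFunctor}\circ\eta=\VoFunctor$ by construction on singleton words. Conversely, any Freyd PROP functor $H$ out of $\oFree(\oVal,\oCmp,\oReturn)$ is determined by its values on singleton steps $\oSub{0\oSL g\oSR0}$ and on renamings. Indeed, every word is a composite of whiskerings of such steps and of renamings, and renamings are forced by Lemma~\ref{lemma:ren-gen} because $H$ preserves symmetries, copy and discard. Hence $H=\widehat{\oForget H\circ\eta}$. Naturality in both arguments is then a routine check on generators.

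The main obstacle is well-definedness of $\widehat{\VoFunctor}$ on the rules (N) and (S$_\ell$)/(S$_r$) in the computation part. These require the interpretation of an arbitrary renaming in $\oCmp'$, built through $\oReturn'$ from generators, to commute correctly with whiskered non-central morphisms $m_1\oSL\VoFunctor g\oSR m_2$. My plan is to decompose the renaming $r_1+r+r_2$ as $(r_1+\oId+r_2)\circ(\oId+r+\oId)$. The outer factor is handled by centrality of $\oReturn'$-images in the premonoidal structure, and the inner factor by the definition of renaming in $\oForget$ as precomposition. Permutations are treated via naturality of the premonoidal symmetry.
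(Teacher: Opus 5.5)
Your proposal is correct and takes essentially the same route as the paper. Your transpose $\widehat{\VoFunctor}$ is the paper's uncurrying (Def.~\ref{def:uncurry}), and restricting along your unit $\eta$ is its currying (Def.~\ref{def:curry}); your rule-by-rule check and your determination-by-generators argument correspond to Lemmas~\ref{lem:uncurry-quot} and~\ref{lem:curry-uncurry}. Making $\eta$ explicit and justifying (N), (S$_\ell$), (S$_r$) in the computation part via centrality of $\oReturn'$-images is more careful than the paper's sketch. One small correction: for the computation component, $H^{\oCmp}$ on copy/discard renaming words is forced by the Freyd square $H^{\oCmp}\oReturn^{\oSUBSMB}=\oReturn' H^{\oVal}$, not by $H^{\oCmp}$ itself preserving copy and discard, since a pre-PROP functor has no such structure to preserve.
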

\begin{proof}
Given a Freyd operad $\VoOperad$ and Freyd PROP $\VoProp$,
we define the following isomorphism between $\oHom{\oFREYDPROP}{\oFree(\VoOperad)}{\VoProp}$ and $\oHom{\oFreydOp}{\VoOperad}{\oForget(\VoProp)}$.
One direction is obtained by restricting a Freyd PROP functor to single substitutions, and the other by interpreting substitution lists in the target Freyd PROP. These constructions are mutually inverse and natural in both variables, hence determine a natural bijection of hom-sets. Therefore $\oFree \dashv \oForget$.
\end{proof}

\section{Interpretation in Freyd operads}\label{sec:semantics}

In this section we interpret $\xCLC$ in Freyd operads, prove soundness, and construct the canonical term model. 
A context of length $n$ is interpreted as the arity $n$, values as morphisms in $\oV{n}$, and computations as morphisms in $\oC{n}$. 
The first-order fragment is interpreted in any Freyd operad, but to interpret abstraction and application we additionally require a weak closure structure. This is the untyped, intensional analogue of the usual closed structure used for higher-order categorical semantics, presented through a representation of an appropriate left module following~\cite{staton2013universal}.
Our weak closure validates beta reduction but does not impose eta laws. To model eta, it suffices to add the axiom $\oAbs[n]{\oSubAp{\oApp}{\oReturn \VoVal \oSR 1}} = \VoVal$, recovering the standard notion of function space from~\cite{staton2013universal}. Since we work in an untyped setting, there is no explicit representing object, but the construction is otherwise the same.

\subsection{Models}\label{subsec:models}

We first define the semantic notion of model for $\xCLC$.

\begin{defn}[Weakly Closed Freyd Operad]
A Freyd operad $\left( \oVal , \oCmp , \oReturn \right)$ is \emph{weakly closed} if it is equipped with 
a transformation $\oAbs[n]{-}:\oC{n+1} \rightarrow \oV{n}$ of \emph{abstraction} and a morphism $\oApp \in \oC{2}$ of \emph{application}, such that for every $\VoCmp$, the following holds  (when it is defined):
$$
\oSubAp{\oApp}{\oReturn \oAbs[n]{\VoCmp} \oSR 1}  =  \VoCmp
    \qquad \qquad
    \oAbs[m_{1} + n + m_{2}]{ \oSubAp{ \VoCmp }{ m_{1} \oSL \oReturn \VoVal \oSR m_{2} + 1 } }
     =  \oSubAp{\oAbs[m_{1} + 1 + m_{2}]{ \VoCmp }}{m_{1} \oSL \VoVal \oSR m_{2}}
$$
\end{defn}
Thus abstraction is a weak representation of the left $\oVal$-module $\oC{n+1}$, with $\oApp$ providing the corresponding counit.

\begin{defn}[Closed Freyd Operad Functor]
A Freyd operad functor between two weakly closed Freyd operads $\VoFunctor : \left(\oVal[1], \oCmp[1] , \oReturn[1] \right) \longrightarrow \left(\oVal[2], \oCmp[2] , \oReturn[2] \right)$
is \emph{closed} when it preserves abstraction and application: $\VoFunctor\oApp = \oApp$, and $\VoFunctor \oAbs[n]{\VcCmp} =  \oAbs[n]{\VoFunctor \VcCmp}$.
\end{defn}

\begin{defn}[$\xCLC$-Structure]\label{def:structure}
Given a signature $\xSign = \left( \xFuncSMB , \xProcSMB \right)$, a $\xCLC$-structure for $\xSign$ is a tuple $\VoStruct = \left( \oVal , \oCmp , \oReturn , \oFuncAsgSMB , \oProcAsgSMB \right)$ where $\left( \oVal , \oCmp , \oReturn\right)$ is a weakly closed Freyd operad, 
$\oFuncAsgSMB$ assigns to each $\VxFunc \in \xFuncSymb{n}$ a morphism $\oFuncAsg{\VxFunc} \in \oV{n}$, and $\oProcAsgSMB$ assigns to each $\VxProc \in \xProcSymb{n}$ a morphism $\oProcAsg{\VxProc} \in \oC{n}$.
\label{def:mor-structure} 
A morphism $\VoFunctor: \VoStruct[1] \to \VoStruct[2]$ of $\xCLC$-structures is a closed Freyd operad functor between the underlying weakly closed Freyd operads that also preserves the symbol interpretations: 
$
\VoFunctor\bigl(\oFuncAsg{\VxFunc}\bigr) = \oFuncAsg{\VxFunc} $ 
and 
$
\VoFunctor\bigl(\oProcAsg{\VxProc}\bigr) = \oProcAsg{\VxProc} $.
\end{defn}

We will later construct a canonical \emph{term $\xCLC$-structure} from $\xCLC$ terms modulo the equational theory and show that it is initial.

\subsection{Interpretation and Soundness}

Fix a signature $\xSign = \left( \xFuncSMB , \xProcSMB \right)$ and $\xCLC$-structure $\VoStruct=(\oVal,\oCmp,\oReturn,\oFuncAsgSMB,\oProcAsgSMB)$ over $\xSign$. We write $\lLength{\VxCxt}$ for the length of a context, and we freely use concatenation notation $\VxCxt=\VxCxt[1],\ldots,\VxCxt[k]$ when a term is formed from subterms in the corresponding subcontexts.
The interpretation follows the value/computation split of the syntax: variables and function symbols are interpreted in the cartesian value operad,  computations in the computation preoperad, the let-constructor by single substitution in $\oCmp$, and abstraction and application by the weak closure structure.

\begin{defn}[Interpretation]\label{def:interp-freyd}
Given a $\xCLC$-structure $\VoStruct$, we define two families of interpretation functions, $\oInterpVal[\VxCxt]{-} : \xVAL{\VxCxt} \rightarrow \oV{\lLength{\VxCxt}}$ and $\oInterpCmp[\VxCxt]{-} : \xCMP{\VxCxt} \rightarrow \oC{\lLength{\VxCxt}}$, by mutual recursion:
$$
\renewcommand{\arraystretch}{1.1}
\setlength{\arraycolsep}{10pt}
\begin{array}{lcl}
     \oInterpVal[{\VxVar[1] , \ldots , \VxVar[n]}]{\VxVar[k]} &:=& \oRenAp{\oId}{\oDiscard[k-1] + \oId[1] + \oDiscard[n-k]}\\
     \oInterpVal[{\VxCxt[1] , \ldots , \VxCxt[k]}]{\VxFunc \left(\VxVal[1], \ldots , \VxVal[k]\right)} &:=& \oSubAp{ \oFuncAsg{\VxFunc} }{ \oInterpVal[{\VxCxt[1]}]{\VxVal[1]} + \ldots + \oInterpVal[{\VxCxt[k]}]{\VxVal[k]} } \\
         \oInterpVal[\VxCxt]{ \TxAbs{\VxVar}{\VxCmp} } & := & \oAbs[\lLength{\VxCxt}]{ \oInterpCmp[\VxCxt , \VxVar]{\VxCmp} }
    \\
     \oInterpCmp[\VxCxt]{\TxRet{\VxVal}} &:=& \oReturn \oInterpVal[\VxCxt]{\VxVal}\\
     \oInterpCmp[{\VxCxt[1] , \VxCxt[2]}]{\TxLet{\VxVar}{\VxCmp[2]}{\VxCmp[1]}} &:=& \oSubAp{\oInterpCmp[{\VxCxt[1] , \VxVar}]{\VxCmp[1]}}{ \lLength{\VxCxt[1]} \, \oSL \oInterpCmp[{\VxCxt[2]}]{ \VxCmp[2]} } \\
     \oInterpCmp[{\VxCxt[1] , \ldots , \VxCxt[k]}]{\VxProc \left(\VxVal[1], \ldots , \VxVal[k]\right)} &:=& \oSubAp{ \oProcAsg{\VxProc} }{ \oReturn \oInterpVal[{\VxCxt[1]}]{\VxVal[1]} + \ldots + \oReturn \oInterpVal[{\VxCxt[k]}]{\VxVal[k]} }\\
    \oInterpCmp[{\VxCxt[1],\VxCxt[2]}]{\TxApp{\VxVal[1]}{\VxVal[2]}} & := & \oSubAp{\oApp}{ \oReturn \oInterpVal[{\VxCxt[1]}]{\VxVal[1]} + \oReturn \oInterpVal[{\VxCxt[2]}]{\VxVal[2]} }
\end{array}
$$
\end{defn}

We next prove the fundamental lemma, namely, that interpretation preserves substitution. 

\begin{lemma}\label{lemma:interp-sub}
Given a strucutre $\VoStruct$, for any contexts $\VxCxt[1] , \ldots , \VxCxt[n]$ and values $\VxVal[1] , \ldots , \VxVal[n]$ s.t  $\JxWFV{\VxCxt[i]}{\VxVal[i]}$, the followings hold for any  $\VxVal$ s.t $\JxWFV{\VxVar[1] , \ldots , \VxVar[n]}{\VxVal}$ and any  $\VxCmp$ s.t $\JxWFC{\VxVar[1] , \ldots , \VxVar[n]}{\VxCmp}$:\vspace{-0.3em}
$$
\begin{array}{rcl}
    \oInterpVal[{\VxCxt[1] , \ldots , \VxCxt[n]}]{ \VxVal \subst{\assign{\VxVar[1]}{\VxVal[1]} , \ldots , \assign{\VxVar[n]}{\VxVal[n]}}}
    &\equiv&
    \oSubAp{\oInterpVal[{\VxVar[1] , \ldots , \VxVar[n]}]{\VxVal}}{\oInterpVal[{\VxCxt[1]}]{\VxVal[1]} + \ldots +  \oInterpVal[{\VxCxt[n]}]{\VxVal[n]} }\\
    \oInterpCmp[{\VxCxt[1] , \ldots , \VxCxt[n]}]{\VxCmp  \subst{\assign{\VxVar[1]}{\VxVal[1]} , \ldots , \assign{\VxVar[n]}{\VxVal[n]}}}
    &\equiv&
    \oSubAp{\oInterpCmp[{\VxVar[1] , \ldots , \VxVar[n]}]{\VxCmp}}{\oReturn \oInterpVal[{\VxCxt[1]}]{\VxVal[1]} + \ldots + \oReturn \oInterpVal[{\VxCxt[n]}]{\VxVal[n]} }\\
\end{array}
$$
\end{lemma}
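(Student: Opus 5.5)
We prove both equations simultaneously by induction on the well-formedness derivations of $\JxWFV{\VxVar[1],\ldots,\VxVar[n]}{\VxVal}$ and $\JxWFC{\VxVar[1],\ldots,\VxVar[n]}{\VxCmp}$. The derivation rules include the structural rules of Fig.~\ref{fig:foclc-wf}: weakening, exchange and contraction. The induction must therefore be stated uniformly for all contexts and all substituting tuples $\VxVal[1],\ldots,\VxVal[n]$. Throughout we use that the parallel composite $f\{g_{1}+\cdots+g_{n}\}$ is well defined whenever the $g_i$ are central. This holds for the $\oInterpVal{\VxVal[i]}$ in the cartesian operad $\oVal$, and for the $\oReturn\oInterpVal{\VxVal[i]}$ in $\oCmp$ because the image of $\oReturn$ is a cartesian operad (Def.~\ref{def:freyd-operad}). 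We also use that a parallel composite can be regrouped blockwise, by associativity in Def.~\ref{def:preoperad} together with centrality.

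\textbf{Base and congruence cases.}
\begin{itemize}
\item For a variable, $\oInterpVal{\VxVar[k]}$ is the renamed identity. Substituting the tuple into it uses naturality of discarding (Def.~\ref{def:cart-op}(1)), which kills every component except the $k$-th, and the unit law then leaves $\oInterpVal{\VxVal[k]}$.
\item For function symbols, procedure symbols and application, the context splits as $\VxCxt=\Delta_1,\ldots,\Delta_k$. We apply the induction hypothesis to each argument, with the substitution tuple split accordingly. We then use associativity of the parallel composite to regroup
\[
\oFuncAsg{\VxFunc}\{\,\oInterpVal{\VxVal'_1}\{\ldots\}+\cdots\,\}=\bigl(\oFuncAsg{\VxFunc}\{\oInterpVal{\VxVal'_1}+\cdots\}\bigr)\{\ldots\}.
\]
For the computation cases we additionally use functoriality of $\oReturn$ to move $\oReturn$ past the substitution.
\item The $\TxRetSMB$ case is immediate from functoriality of $\oReturn$.
\item For $\TxLetSMB$, with context $\Delta_1,\Delta_2$, we apply the hypothesis to $\VxCmp[1]$ in $\Delta_1,\VxVar$ extended by the identity substitution $\VxVar/\VxVar$, whose interpretation is $\oId$. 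We apply it to $\VxCmp[2]$ in $\Delta_2$. We then commute the single substitution of $\oInterpCmp{\VxCmp[2]}$ past the central $\oReturn$-substitutions in the other positions, using the commuting laws of Def.~\ref{def:op-centrality}; centrality of one side suffices. Finally we reassociate.
\item For abstraction, we apply the hypothesis to $\VxCmp$ in the extended context, again with $\oId$ in the last slot. We then push the substitution through $\oAbs{-}$ one position at a time, using the second weak-closure equation. This requires unfolding the parallel composite into iterated single substitutions.
\end{itemize}

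\textbf{Structural rules.} These are the cases we expect to be the main obstacle, since they require matching syntactic renaming with the $\oRENSMB$-action.
\begin{itemize}
\item For weakening, the term does not mention the new variable, and a separate induction shows $\oInterp{\VxTrm}_{\VxCxt,\VxVar}=\oRenAp{\oInterp{\VxTrm}_{\VxCxt}}{\oId+\oDiscard[1]}$. Naturality of discarding then absorbs the extra substituted value. In $\oCmp$ the renaming acts through the cartesian functor $\oReturn$, and we use that $\oReturn$ preserves discarding and that its image is cartesian.
\item For exchange, permutation naturality (Def.~\ref{def:sym-preop} and axiom (3) of Def.~\ref{def:scart-preop}) permutes the substituted tuple.
\item For contraction, Def.~\ref{def:cart-op}(2) duplicates the substituted value. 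In $\oCmp$ this again goes through the cartesian image of $\oReturn$.
\end{itemize}
To keep these cases manageable, we first prove an auxiliary renaming lemma: $\oInterp{\VxTrm\oRen{r}}=\oRenAp{\oInterp{\VxTrm}}{r}$, where renaming in $\oCmp$ is taken via $\oReturn$. This is the special case of the statement in which all $\VxVal[i]$ are variables. Its proof is the same induction, using only the $\oRENSMB$-cartesian axioms, and it reduces all three structural cases to naturality of the renaming action with respect to parallel substitution.
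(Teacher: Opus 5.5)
Your proposal follows the same route as the paper's proof: mutual induction on the well-formedness derivations, with the constructor cases discharged by the operadic unit and associativity laws, functoriality of $\oReturn$ and centrality of its image, and the substitution-compatibility axiom of weak closure. Your explicit handling of weakening, exchange and contraction through a separate renaming lemma goes beyond the paper's sketch, which does not address these rules; note that in $\oCmp$ those cases need discarding and copying to be natural when $\oReturn$-images are substituted into \emph{arbitrary} computations, which is the intended content of the condition on the image of $\oReturn$ in Def.~\ref{def:freyd-operad}, since $\oCmp$'s renaming does not literally act ``through $\oReturn$'' in a general Freyd operad.
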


\begin{defn}[Satisfiability]
Let $\VoStruct$ be a $\xCLC$-structure. 
For well formed terms $\VxTrm[1],\VxTrm[2]$ of the same sort in a context $\VxCxt$, we write
$\JxSat{\VoStruct}{\VxCxt}{\VxTrm[1]}{\VxTrm[2]}$
when their interpretations are equal in the corresponding  semantic sort, i.e., in $\oV{\lLength{\VxCxt}}$ for values and in $\oC{\lLength{\VxCxt}}$ for computations. 
We write
$\JxSatVal{\VoStruct}{\VxCxt}{\VxVal[1]}{\VxVal[2]}$ 
and 
$\JxSatCmp{\VoStruct}{\VxCxt}{\VxCmp[1]}{\VxCmp[2]}$
for value and computation, respectively.
We write
$\JxValid{\VxCxt}{\VxTrm[1]}{\VxTrm[2]}$
when $\JxSat{\VoStruct}{\VxCxt}{\VxTrm[1]}{\VxTrm[2]}$ holds for every $\xCLC$-structure $\VoStruct$, and
$\JxValidVal{\VxCxt}{\VxVal[1]}{\VxVal[2]}$ and 
$\JxValidCmp{\VxCxt}{\VxCmp[1]}{\VxCmp[2]}$
for the two sorts, respectively.
\end{defn}

We now show that the equational theory of $\xCLC$ is respected by the interpretation. 
Intuitively, each equation in Fig. \ref{fig:clc-theory} expresses one of the defining laws of a (weakly closed) Freyd operad: the sequencing rules correspond to the preoperad substitution laws in $\oCmp$, and beta corresponds to weak closure. Well formedness assumptions play no semantic role beyond ensuring that the relevant interpretations are defined.

\begin{theorem}[Soundness]\label{thm:soundness}
Given well-formed terms $\VxTrm[1] , \VxTrm[2]$ in $\VxCxt$:
 $\JxEq{\VxCxt}{\VxTrm[1]}{\VxTrm[2]} \; \Longrightarrow \; \JxValid{\VxCxt}{\VxTrm[1]}{\VxTrm[2]} $.
\end{theorem}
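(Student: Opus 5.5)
We fix an arbitrary $\xCLC$-structure $\VoStruct$ and prove, by induction on the derivation of $\JxEq{\VxCxt}{\VxTrm[1]}{\VxTrm[2]}$, that $\JxSat{\VoStruct}{\VxCxt}{\VxTrm[1]}{\VxTrm[2]}$. Validity then follows because $\VoStruct$ was arbitrary. The rules (id), (sym) and (trans) are immediate from reflexivity, symmetry and transitivity of equality in $\oV{\lLength{\VxCxt}}$ and $\oC{\lLength{\VxCxt}}$.

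The congruence rules (ret), (f-app), (p-app), (let), (abs) and (app) are also direct. In each case the interpretation of the conclusion is computed compositionally by Def.~\ref{def:interp-freyd}. The computation applies $\oReturn$, the abstraction $\oAbs[n]{-}$, or (parallel or single) substitution to the interpretations of the immediate subterms. Since these operations are functions, the induction hypotheses give equal results. One preliminary point needs care. The interpretation must be invariant under the structural well-formedness rules, namely weakening, exchange and contraction. Concretely, we need $\oInterpVal[\VxCxt,\VxVar]{\VxVal}$ to equal $\oInterpVal[\VxCxt]{\VxVal}$ reindexed by the evident renaming, and similarly for computations and for permutations. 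We would establish this first, by a routine induction on terms. For values it uses the $\oRENSMB$-cartesian axioms of $\oVal$. For computations it uses the symmetric $\oRENSMB$-cartesian axioms of $\oCmp$ together with the fact that $\oReturn$ is cartesian. This ensures that the contexts appearing in the rules (for example $\VxCxt[1],\VxCxt[2]$ in (let)) are handled uniformly.

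The substantive cases are the axioms. For (lunit), we have $\oInterpCmp{\TxLet{\VxVar}{\TxRet{\VxVal}}{\VxCmp}} = \oSubAp{\oInterpCmp[\VxCxt[1],\VxVar]{\VxCmp}}{\lLength{\VxCxt[1]}\oSL \oReturn\oInterpVal[\VxCxt[2]]{\VxVal}}$. To obtain the right-hand side, we apply Lemma~\ref{lemma:interp-sub} to the substitution that maps the variables of $\VxCxt[1]$ to themselves and $\VxVar$ to $\VxVal$. Its interpretations are $\oInterpVal{\VxVar[i]}$, which are $\oId$ up to discarding renamings. Substituting $\oReturn\oId=\oId$ by the right unit law collapses the parallel substitution to the single one. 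The (beta) rule is similar: after unfolding, the first weak closure equation reduces $\oSubAp{\oApp}{\oReturn\oAbs{\oInterpCmp{\VxCmp}} + \oReturn\oInterpVal{\VxVal}}$ to $\oSubAp{\oInterpCmp{\VxCmp}}{\cdots\oSL\oReturn\oInterpVal{\VxVal}}$. Here the substitution into the second input must first be moved inside the abstraction. To do this we use the second weak closure equation together with centrality of the image of $\oReturn$, which lets us reorder the two substitutions into $\oApp$. We then finish with Lemma~\ref{lemma:interp-sub} as in (lunit). For (runit), the interpretation of $\TxRet{\VxVar}$ in context $\VxVar$ is $\oReturn\oId=\oId$, so the left unit law of the preoperad gives the result. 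For (assoc), unfolding both sides yields exactly the two sides of the associativity axiom of Def.~\ref{def:preoperad}, with offsets $\lLength{\VxCxt[1]}$ and $\lLength{\VxCxt[2]}$.

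I expect the main obstacle to be the bookkeeping in (lunit) and (beta). There the parallel substitution of Lemma~\ref{lemma:interp-sub}, which involves identity-like variable interpretations and discarding renamings, must be shown to coincide with a single substitution at the correct offset. I would first isolate a small auxiliary lemma stating that $\oSubAp{f}{\oReturn\oInterpVal{\VxVar[1]}+\cdots+\oReturn\oInterpVal{\VxVal}+\cdots}$ equals $\oSubAp{f}{k\oSL\oReturn\oInterpVal{\VxVal}\oSR 0}$ when the remaining variables are interpreted in their own one-variable contexts. This follows from the right unit law and the centrality of $\oReturn$'s image, and it would be reused in both cases.
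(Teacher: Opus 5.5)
Your proposal is correct and is essentially the paper's own proof with the bookkeeping made explicit: induction on the derivation, congruence cases by compositionality, (lunit) via Lemma~\ref{lemma:interp-sub}, (beta) via weak closure, and (runit), (assoc) as the unit and associativity laws of Def.~\ref{def:preoperad} (the law used in (runit) is indeed $\oSubAp{\oId}{f}\equiv f$, the \emph{Left Unit}, as you say). Your coherence lemma for weakening, exchange and contraction is something the paper leaves implicit. Its abstraction case needs $\oAbs{-}$ to commute with renamings, which you must read into the word ``transformation'' in the definition of weak closure, since the second weak closure equation does not give exchange or contraction.

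The one slip is in (beta): the second weak closure equation plays no role there, and it cannot move the argument sitting in the second input of $\oApp$ inside the abstraction. With $f=\oInterpCmp[{\VxCxt[1],\VxVar}]{\VxCmp}$ and $v=\oInterpVal[{\VxCxt[2]}]{\VxVal}$, substitute $\oReturn\oAbs{f}$ into the first input first (legitimate by centrality of the image of $\oReturn$), apply the first equation to obtain $\oSubAp{f}{\lLength{\VxCxt[1]}\oSL\oReturn v\oSR 0}$, and finish exactly as in (lunit).
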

\begin{proof}
By induction on the derivation of $\JxEq{\VxCxt}{\VxTrm[1]}{\VxTrm[2]}$. The structural cases are immediate, and $\beta$ follows from weak closure. For \emph{lunit}, the interpretation of $\TxLet{\VxVar}{\TxRet{\VxVal}}{\VxCmp}$ is $\oSubAp{\oInterpCmp[{\VxCxt[1],\VxVar}]{\VxCmp}}{\lLength{\VxCxt[1]}\oSL \oReturn\oInterpVal[{\VxCxt[2]}]{\VxVal}}$, which is the interpretation of the right hand side by Lemma~\ref{lemma:interp-sub}. The \emph{runit} case is exactly the right unit law of the preoperad, and \emph{assoc} is exactly the associativity law of substitution in the preoperad.
\end{proof}

\subsection{Term model and Initiality}

We now construct the canonical \emph{term model} of $\xCLC$. It packages the syntax modulo the equational theory of Fig. \ref{fig:clc-theory} into a weakly closed Freyd operad, and it enjoys the expected universal property. As usual, initiality yields completeness as an immediate corollary.

For each $n \in \N$, let $\xVAL{n}$ be the set of equivalence classes $[\VxVal]$ of well formed values $\JxWFV{\VxVar[1],\ldots,\VxVar[n]}{\VxVal}$, modulo provable value equality in Fig. \ref{fig:clc-theory}. Similarly, let $\xCMP{n}$ be the set of equivalence classes $[\VxCmp]$ of well formed computations $\JxWFC{\VxVar[1],\ldots,\VxVar[n]}{\VxCmp}$, modulo provable computation equality. We write $\xVALSMB := (\xVAL{n})_{n\in\N}$ and $\xCMPSMB := (\xCMP{n})_{n\in\N}$.

The operadic structures are induced on representatives: for values, $\oSubAp{[\VxVal[1]]}{n_{1}\oSL[\VxVal[2]]\oSR n_{2}} := [\VxVal[1]\subst{\assign{\VxVarII}{\VxVal[2]}}]$, and for computations, $\oSubAp{[\VxCmp[1]]}{n_{1}\oSL[\VxCmp[2]]\oSR n_{2}} := [\TxLet{\VxVarII}{\VxCmp[2]}{\VxCmp[1]}]$, where $\VxVarII$ is the distinguished variable in the middle position. 
Renamings are induced by syntactic renaming on representatives. 
These operations are well defined by the congruence rules and the substitution laws of the syntax. 
The return map is induced by the constructor $\TxRet{-}$, namely $\TxRetSMB([\VxVal]) := [\TxRet{\VxVal}]$. 
The interpretations of primitive symbols are given by their canonical terms:
$\oFuncAsg{\VxFunc} := [\VxFunc(\VxVar[1],\ldots,\VxVar[n])] \in \xVAL{n}$
and
$\oProcAsg{\VxProc} := [\VxProc(\VxVar[1],\ldots,\VxVar[n])] \in \xCMP{n}$.
Finally, weak closure is induced by the term formers for application and abstraction: $\oApp := [\TxApp{\VxVar[1]}{\VxVar[2]}] \in \xCMP{2}$ and, for $\JxWFC{\VxVar[1],\ldots,\VxVar[n],\VxVar}{\VxCmp}$, $\oAbs[n]{[\VxCmp]} := [\TxAbs{\VxVar}{\VxCmp}] \in \xVAL{n}$.

\begin{theorem}\label{thm:free-model}$\left( \xVALSMB , \xCMPSMB , \TxRet{} , \xFuncSMB , \xProcSMB \right)$ is a $\xCLC$-structure.
\end{theorem}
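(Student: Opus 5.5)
The plan is to check the structure axioms one layer at a time: well-definedness, then $\xVALSMB$ as a cartesian operad, then $\xCMPSMB$ as a symmetric $\oRENSMB$-cartesian preoperad, then $\TxRetSMB$ as a Freyd functor, and finally weak closure. Throughout, the syntactic facts we rely on are standard ones about capture-avoiding simultaneous substitution by values: composition of substitutions, commutation of substitutions for distinct variables, and the interaction of renaming with substitution. Lemma~\ref{lemma:subst} ensures that all the constructions stay well formed.

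\textbf{Well-definedness.} All operations must respect provable equality. For value substitution $[\VxVal[1]\subst{\assign{\VxVarII}{\VxVal[2]}}]$ we prove, by induction on derivations in Fig.~\ref{fig:clc-theory}, that provable equality is closed under substitution of values for variables. Each axiom instance is mapped to an axiom instance, and the (lunit) and (beta) cases need the substitution lemma $M\subst{x/V}\subst{y/W} = M\subst{y/W}\subst{x/V\subst{y/W}}$. We also prove that provable equality is preserved when equal values are substituted, by induction on the term using the congruence rules. For computations, well-definedness is immediate from the (let) congruence. Renaming is the special case of substitution by variables, and $\TxRet{}$, abstraction and application respect equality by (ret), (abs) and (app).

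\textbf{Operad laws.} On values, the unit, associativity, renaming functoriality and renaming/substitution interchange laws, together with centrality, discarding naturality and copying naturality, all hold on the nose at the level of syntax. They are the standard identities for simultaneous value substitution, so $\xVALSMB$ is a cartesian operad. On computations, the left unit law for $\oId=[\TxRet{\VxVar}]$ is (lunit) followed by the trivial substitution; the right unit law is (runit); and associativity is (assoc), after $\alpha$-renaming so that the variable orders line up. $\oPERMSMB$-cartesianness and the naturality of $\oSym[1,n]$ and $\oSym[n,1]$ hold because $\TxLet{\VxVarII}{\VxCmp[2]}{\VxCmp[1]}$ depends on context positions only through variable names, so permuting the context commutes syntactically with forming the let. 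Full $\oRENSMB$-cartesianness of $\xCMPSMB$ is again syntactic renaming. We must check condition 3 of Def.~\ref{def:scart-preop} for computation substitution, which holds because renaming distributes over $\TxLetSMB$.

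\textbf{Freyd functor and weak closure.} $\TxRetSMB$ preserves the identity by definition. It preserves composition because $[\TxLet{\VxVarII}{\TxRet{\VxVal[2]}}{\TxRet{\VxVal[1]}}] = [\TxRet{(\VxVal[1]\subst{\assign{\VxVarII}{\VxVal[2]}})}]$ by (lunit), and it preserves renamings trivially. The main obstacle is to show that the image of $\TxRetSMB$ is central in $\xCMPSMB$ (and hence a cartesian operad), since no "commuting let" axiom is present. We would first apply (lunit) to both sides of each centrality equation of Def.~\ref{def:op-centrality}. This turns $\TxLet{\VxVarII}{\TxRet{\VxVal}}{K}$ into the value substitution $K\subst{\assign{\VxVarII}{\VxVal}}$, and substitution of a value commutes syntactically with forming a surrounding let, because values are substituted into subterms. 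Both sides then reduce to the same term, and the same argument gives copy/discard naturality for the image. For weak closure, the first equation $\oSubAp{\oApp}{\oReturn \oAbs[n]{\VoCmp} \oSR 1}=\VoCmp$ unfolds to $\TxLet{\VxVar[1]}{\TxRet{(\TxAbs{\VxVar}{\VxCmp})}}{\TxApp{\VxVar[1]}{\VxVar[2]}}$. By (lunit) this is $\TxApp{(\TxAbs{\VxVar}{\VxCmp})}{\VxVar[2]}$, which equals $\VxCmp$ by (beta). The second equation again uses (lunit) on the left and the definition of capture-avoiding substitution under $\lambda$ on the right. Finally, the symbol assignments are elements of the correct sets by the well-formedness rules, which completes the $\xCLC$-structure.
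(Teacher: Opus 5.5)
Your proposal is correct and follows essentially the same route as the paper's appendix proof. The shared steps are: syntactic substitution for $\xVALSMB$; $\TxLetSMB$-composition with the monad laws, and renaming distributing over $\TxLetSMB$, for $\xCMPSMB$; and (lunit) driving functoriality of $\TxRetSMB$, centrality and copy/discard naturality of its image, and both weak-closure equations, with (beta) added for the first. Where you differ, you are more careful: you spell out the well-definedness step (closure of provable equality under value substitution) that the paper only asserts.

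One small labeling slip: under Def.~\ref{def:preoperad}, the left unit law $\oSubAp{\oId}{\VoCmp}\equiv\VoCmp$ is the (runit) instance and the right unit law is the (lunit) instance. This changes nothing, since you check both.
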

\begin{proof}
The operad and preoperad laws for $\xVALSMB$ and $\xCMPSMB$ follow from the corresponding substitution laws of the syntax, together with the exchange and renaming rules. 
That $\TxRetSMB$ is a Freyd operad functor is witnessed by the $\TxLet{}{}{}$ unit and associativity equations, and centrality and cartesian structure are witnessed by the structural rules in Fig.~\ref{fig:clc-theory}. 
Weak closure holds because the equations of Def.~\ref{def:structure} are exactly the beta axiom and the compatibility of abstraction with substitution, both provable in the theory. Finally, the symbol assignments are well formed by construction and satisfy no further axioms.
\end{proof}

Let $\oFreydOp_{\VoStruct}$ be the category whose objects are $\xCLC$-structures over the fixed signature $\xSign$ and whose morphisms are Freyd operad functors preserving the interpretations of function and procedure symbols. 

\begin{theorem}[Initiality]\label{thm:initiality}
The term model $\left( \xVALSMB , \xCMPSMB , \TxRet{} , \xFuncSMB , \xProcSMB \right)$ is  initial  in $\oFreydOp_{\VoStruct}$.
\end{theorem}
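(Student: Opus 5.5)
Given a $\xCLC$-structure $\VoStruct=(\oVal,\oCmp,\oReturn,\oFuncAsgSMB,\oProcAsgSMB)$, we must construct a morphism from the term model into $\VoStruct$ and show it is unique. The candidate is forced by the interpretation of Def.~\ref{def:interp-freyd}: define $\VoFunctor^{\oVal}[\VxVal] := \oInterpVal[{\VxVar[1],\ldots,\VxVar[n]}]{\VxVal}$ and $\VoFunctor^{\oCmp}[\VxCmp] := \oInterpCmp[{\VxVar[1],\ldots,\VxVar[n]}]{\VxCmp}$. Soundness (Thm.~\ref{thm:soundness}) shows this is well defined on equivalence classes, since provably equal terms have equal interpretations in every structure, in particular in $\VoStruct$.

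Next we check that $(\VoFunctor^{\oVal},\VoFunctor^{\oCmp})$ is a closed Freyd operad functor preserving symbols. Preservation of identities is the variable clause with $n=k=1$. For value substitution, $\oSubAp{[\VxVal[1]]}{n_1\oSL[\VxVal[2]]\oSR n_2}=[\VxVal[1]\subst{\assign{\VxVarII}{\VxVal[2]}}]$, and we apply Lemma~\ref{lemma:interp-sub} with the substitution that is the identity (variables) on the outer positions, using that variables interpret as projections so that $\oSubAp{v}{\oId+\cdots+g+\cdots+\oId}$ collapses to the single substitution $\oSubAp{v}{n_1\oSL g\oSR n_2}$. Computation substitution is interpreted by $\TxLetSMB$, whose interpretation is literally single substitution in $\oCmp$, so it is preserved up to the permutation placing the distinguished variable. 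Renamings are preserved by an induction on terms showing that syntactic renaming interprets as $\oRenAp{-}{\VoRen}$; this is again an instance of Lemma~\ref{lemma:interp-sub} with variable substitutions, together with the $\oRENSMB$-cartesian laws. Finally, $\VoFunctor^{\oCmp}\circ\TxRetSMB=\oReturn\circ\VoFunctor^{\oVal}$ by the $\TxRetSMB$ clause; $\oApp$, abstraction, $\oFuncAsgSMB$ and $\oProcAsgSMB$ are preserved because their canonical terms $\TxApp{\VxVar[1]}{\VxVar[2]}$, $\TxAbs{\VxVar}{\VxCmp}$, $\VxFunc(\VxVar[1],\ldots,\VxVar[n])$ and $\VxProc(\VxVar[1],\ldots,\VxVar[n])$ interpret, after unfolding variables as projections, to exactly $\oApp$, $\oAbs{\VoFunctor\VxCmp}$, $\oFuncAsg{\VxFunc}$ and $\oProcAsg{\VxProc}$, using the parallel-composite notation of centrality.

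For uniqueness, let $\VoFunctor'$ be any morphism of structures. We prove $\VoFunctor'[\VxTrm]=\interp{\VxTrm}$ by induction on terms, decomposing each term in the term model via its operations. A variable $\VxVar[k]$ in context of length $n$ is $\oRenAp{\oId}{\oDiscard[k-1]+\oId[1]+\oDiscard[n-k]}$ in the term model, and is preserved because $\VoFunctor'$ is cartesian. $\VxFunc(\VxVal[1],\ldots,\VxVal[k])$ equals $\oSubAp{\oFuncAsg{\VxFunc}}{[\VxVal[1]]+\cdots+[\VxVal[k]]}$, obtained by iterated single substitutions. Likewise $\TxLet{\VxVar}{\VxCmp[2]}{\VxCmp[1]}$ is a single substitution, $\TxRet{\VxVal}$ is $\oReturn[\VxVal]$, procedure calls and applications are substitutions into $\oProcAsg{\VxProc}$ and $\oApp$ of returned values, and $\TxAbs{\VxVar}{\VxCmp}$ is $\oAbs{[\VxCmp]}$. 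Functoriality, closedness and symbol preservation then force agreement with the interpretation clause by clause.

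The main obstacle is the existence part, specifically verifying that $\VoFunctor$ preserves single substitution and renamings in the term model. Lemma~\ref{lemma:interp-sub} is phrased for simultaneous substitution over concatenated contexts, so one must carefully match context splittings and show that substituting variables interprets as the identity or renaming action. I would first prove an auxiliary lemma $\interp{\VxTrm\subst{\assign{\VxVar[i]}{\VxVar[\VoRen(i)]}}}=\oRenAp{\interp{\VxTrm}}{\VoRen}$ by induction on $\VxTrm$, and derive both remaining facts from it.
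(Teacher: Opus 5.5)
Your proposal is correct and follows essentially the same route as the paper's proof. You take the interpretation of Def.~\ref{def:interp-freyd} as the mediating morphism, verify the Freyd operad functor laws via Lemma~\ref{lemma:interp-sub} (with the let clause plus exchange handling computation substitution), and prove uniqueness by induction on terms. Your additions only make explicit what the paper's sketch leaves implicit: well-definedness on equivalence classes via soundness, the auxiliary renaming lemma, and the use of closedness and symbol preservation for the abstraction, application and primitive-symbol cases. On the last point, note that uniqueness genuinely needs morphisms to be closed as in Def.~\ref{def:mor-structure}, which you correctly assume.
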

\begin{proof}
Let $\left( \oVal , \oCmp , \oReturn , \oFuncAsgSMB, \oProcAsgSMB \right)$ be a $\xCLC$-structure.
First we show that the interpretation function given in Def.~\ref{def:interp-freyd} forms a Freyd operad functor.
Identity is preserved by sending $\JxWFV{\VxVar}{\VxVar}$ to $\oRenAp{\oId}{\oId[1]} \equiv \oId$ in $\oV{1}$ and $\JxWFC{\VxVar}{\oReturn \VxVar}$ to $\oReturn\left(\oId\right)  \equiv \oId$ in $\oC{1}$.
Composition of values is preserved by Lem~\ref{lemma:interp-sub}.
Composition of computations is preserved by the interaction of the $\oInterpCmp[{\VxCxt[1] , \VxCxt[2]}]{\TxLet{\VxVar}{\VxCmp[2]}{\VxCmp[1]}}$ case with the exchange rule.
Renamings for both values and computations are preserved by Lem~\ref{lemma:interp-sub}.
The $\oReturn$ functor is preserved by definition.
For uniqueness, let $\VoFunctor : \left( \xVALSMB , \xCMPSMB , \TxRet{} , \xFuncSMB , \xProcSMB \right) \longrightarrow \left( \oVal , \oCmp , \oReturn , \oFuncAsgSMB, \oProcAsgSMB \right)$ be a Freyd operad functor.
Given any $\VxVal \in \xVAL{n}$ or $\VxCmp \in \xCMP{n}$, mutual induction on $\VxVal$ and $\VxCmp$ shows that $\VoFunctor^{\oVal}(\VxVal)$ and $\VoFunctor^{\oCmp}(\VxCmp)$ are uniquely determined by $\VxVal$ and $\VxCmp$, since $\VoFunctor$ preserves the Freyd operad structure.
\end{proof}

If an equation is valid, it is satisfied in particular by the term model. But equality in the term model is, by definition, provable equality in the equational theory, which obtains completeness.

\begin{corollary}[Completeness]\label{thm:completeness}
Given well-formed terms $\VxTrm[1] , \VxTrm[2]$ in $\VxCxt$:
$\JxValid{\VxCxt}{\VxTrm[1]}{\VxTrm[2]} \; \Longrightarrow \; \JxEq{\VxCxt}{\VxTrm[1]}{\VxTrm[2]} $.
\end{corollary}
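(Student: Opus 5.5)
The completeness statement will follow from initiality (Thm.~\ref{thm:initiality}), or more precisely from Thm.~\ref{thm:free-model} alone, by the standard term-model argument. Assume $\JxValid{\VxCxt}{\VxTrm[1]}{\VxTrm[2]}$. Since validity quantifies over every $\xCLC$-structure, we may instantiate it at the term structure $\VoStruct[0] := \left( \xVALSMB , \xCMPSMB , \TxRet{} , \xFuncSMB , \xProcSMB \right)$, which is a $\xCLC$-structure by Thm.~\ref{thm:free-model}. This yields $\JxSat{\VoStruct[0]}{\VxCxt}{\VxTrm[1]}{\VxTrm[2]}$, that is,
\[
\interp{\VxTrm[1]}_{\VxCxt} = \interp{\VxTrm[2]}_{\VxCxt}
\]
in $\xVAL{\lLength{\VxCxt}}$ or in $\xCMP{\lLength{\VxCxt}}$, according to the sort.

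The key step is to identify the interpretation of a term in the term model with its own equivalence class. For a context $\VxCxt = \VxVar[1],\ldots,\VxVar[n]$ I want
\[
\oInterpVal[\VxCxt]{\VxVal} = [\VxVal] \qquad\text{and}\qquad \oInterpCmp[\VxCxt]{\VxCmp} = [\VxCmp].
\]
I prove this by mutual induction on the well-formedness derivation, following the clauses of Def.~\ref{def:interp-freyd}.

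The cases are as follows:
\begin{itemize}
\item \textbf{Variables.} $\oRenAp{\oId}{\oDiscard[k-1] + \oId[1] + \oDiscard[n-k]}$ is the syntactic weakening of $\VxVar$ to $\VxVar[k]$ in the context $\VxCxt$.
\item \textbf{Function symbols.} $\oSubAp{\oFuncAsg{\VxFunc}}{[\VxVal[1]]+\cdots+[\VxVal[k]]}$ is $[\VxFunc(\VxVar[1],\ldots,\VxVar[k])\subst{\ldots}] = [\VxFunc(\VxVal[1],\ldots,\VxVal[k])]$.
\item \textbf{Return and abstraction.} These are immediate from the definitions of $\TxRetSMB$ and $\oAbs{-}$ in the term model.
\item \textbf{Let.} The single substitution in $\xCMPSMB$ is by definition $[\TxLet{\VxVarII}{\VxCmp[2]}{\VxCmp[1]}]$, up to $\alpha$-renaming of the bound variable.
\item \textbf{Procedures and application.} The parallel substitution of the central values $\TxRet{\VxVal[i]}$ into $\VxProc(\VxVar[1],\ldots)$ or into $\TxApp{\VxVar[1]}{\VxVar[2]}$ unfolds to iterated lets of returns. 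By (lunit) these reduce to $[\VxProc(\VxVal[1],\ldots,\VxVal[k])]$ and $[\TxApp{\VxVal[1]}{\VxVal[2]}]$ respectively.
\end{itemize}

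I expect the main obstacle to be the bookkeeping in the structural well-formedness rules (weakening, exchange, contraction), together with the procedure and application cases. There one must check that the renaming action in the term model is exactly syntactic renaming, and that the block-sum substitutions line up with the splitting of contexts. Lemma~\ref{lemma:interp-sub} handles these, since the term model's substitution is syntactic substitution.

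Given the identification, the equation $[\VxTrm[1]] = [\VxTrm[2]]$ holds in $\xVAL{n}$ or $\xCMP{n}$. By the definition of these quotients this means precisely $\JxEq{\VxCxt}{\VxTrm[1]}{\VxTrm[2]}$, which is the claim.
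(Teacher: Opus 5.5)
Your proposal is correct and follows the paper's own argument. You instantiate validity at the term model, which is a $\xCLC$-structure by Thm.~\ref{thm:free-model}, and then read off provable equality from equality of equivalence classes. The one difference is that you explicitly prove $\oInterpVal[\VxCxt]{\VxVal}=[\VxVal]$ and $\oInterpCmp[\VxCxt]{\VxCmp}=[\VxCmp]$ by induction, using (lunit) for the procedure and application clauses and Lemma~\ref{lemma:interp-sub} for the structural rules; the paper leaves this step implicit when it says equality in the term model is provable equality ``by definition''.
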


Now we finally get to the categorical semantics of $\xCLC$.
But rather than developing them independently, they arise immediately from the adjunction, and can equivalently be given directly in a Freyd PROP.
 Every weakly closed Freyd PROP equipped with interpretations of the symbols of $\xSign$ induces a $\xCLC$-structure by restriction to morphisms with codomain $1$. Therefore the interpretation, soundness, and completeness results above apply verbatim in the PROP setting. Combining this observation with the adjunction $F \dashv U$ and the initiality of the term model yields the corresponding PROP-level universal property.

\section{Examples of weakly closed Freyd operads}\label{sec:examples}
In this section, we briefly illustrate how the semantic notion introduced above arises in several familiar settings. 
The first example supports our realizability motivation; the others demonstrate the structure in standard categorical models.

\vspace{0.2em}
\noindent
\textbf{Monadic Combinatory Algebras.} 
Monadic combinatory algebras (MCAs) were introduced in~\cite{cohen2025partial} as a generalization of partial combinatory algebras (PCAs), an abstraction of models for the untyped lambda calculus commonly used in realizability, to the effectful setting. They provide a direct source of weakly closed Freyd operads for $\xCLC$.
Let $(M,\mu,\eta)$ be a monad on $\mathbf{Set}$, and let $(\mca,\oApp)$ be a monadic applicative structure, where $\oApp : \mca^{2} \to M(\mca)$ is a Kleisli application operation. This data induces a Freyd operad $(\oVal,\oCmp,\oReturn)$ by taking $\oV{n}$ to be the set of functions $\mca^{n}\to\mca$, $\oC{n}$ to be the set of Kleisli maps $\mca^{n}\to M(\mca)$, and $\oReturn(f):=\eta\circ f$.
The image of the first-order language in this structure is given by the usual monadic interpretation of application terms. The resulting Kleisli maps $\mca^{n}\to M(\mca)$ are the \emph{$\mca$-monomials}. A computation $m\in M(\mca)$ is \emph{computable} if it is of the form $\eta(c)$ for some $c\in\mca$. An $\mca$-monomial $f:\mca^{n+1}\to M(\mca)$ is \emph{computable} if there exists a function $\oAbs[n]{f}:\mca^{n}\to\mca$ such that $\eta\circ \oAbs[n]{f}$ is computable and, for every $c\in\mca$, application of $\oAbs[n]{f}$ to $c$ computes $f(-,c)$.
Thus, abstraction and application provide exactly the weak closure data for the Freyd operad of $\mca$-monomials. Then the induced Freyd operad is weakly closed precisely when all $\mca$-monomials are computable (the MCA analogue of combinatory completeness). This bridges our semantics with realizability-style models.

\vspace{0.2em}
\noindent
\textbf{Applicative Systems in Cartesian Restriction Categories.}
Another example comes from partial combinatory algebras internal to cartesian restriction categories~\cite{cockett2008introduction,cockett2010categories}. As observed in~\cite{cohen2025partial}, every cartesian restriction category carries a Freyd categorical structure determined by its restriction.
Let $\mca$ be an applicative system in a cartesian restriction category, with application morphism $\oApp : \mca^{2}\to \mca$. 
A morphism $f : \mca^{n} \rightarrow \mca$ is \emph{computable} when there is a total point $c_{f} : \boldsymbol{1} \rightarrow \mca$ such that $\oApp \circ \left( c_{f} \times \oId[\boldsymbol{\mca^{n}}]\right) = f$ and all partial applications are total.
One then obtains a Freyd operad $(\oVal,\oCmp,\oReturn)$ by taking $\oC{n}$ to be the morphisms $\mca^{n}\to\mca$, $\oV{n}$ to be the total morphisms $\mca^{n}\to\mca$, and $\oReturn$ to be the inclusion of total maps into all maps. In this case $\oCmp$ is in fact an operad, not merely a preoperad.
In~\cite{cockett2010categories}, combinatory completeness is formulated using a category of polynomials over $\mca$. Restricting attention to monomials suggests a closer match with the present framework: one expects a Freyd operad $(\oVal[\mca],\oCmp[\mca],\oReturn[\mca])$ whose computation part consists of monomials over $\mca$ and whose value part consists of the total ones, with weak closure corresponding to combinatory completeness. This comparison is especially suggestive because the Freyd category of substitutions generated by such an operad should play the role of the polynomial category.
We do not pursue this equivalence here, but it indicates that the substitutional viewpoint of Section~\ref{sec:freyd} is closely related to the restriction categorical account of computability.

\vspace{0.2em}
\noindent
\textbf{Reflexive Objects in a Cartesian Closed Category.}
Our final example recovers the usual higher-order semantics of reflexive objects~\cite{moggi1988computational}. 
Let $\mathcal{C}$ be a cartesian closed category equipped with a strong monad $M$, and let $\mca$ be an object equipped with maps
$e : M(\mca)^{\mca} \to \mca$
 and 
$d : \mca \to M(\mca)^{\mca}$.
If $e$ and $d$ form an isomorphism, then $\mca$ is a reflexive object in the usual sense, however,  if one assumes only $e\circ d = \oId$, one obtains the weaker intensional structure corresponding to $\beta$ without $\eta$.
From such data one obtains a weakly closed Freyd operad $(\oVal,\oCmp,\oReturn)$ by taking $\oV{n} := \mathcal{C}(\mca^{n},\mca)$, $\oC{n}$ to be the Kleisli maps $\mca^{n}\to M(\mca)$, and $\oReturn(f):=\eta\circ f$. Application is induced by the uncurrying of $d$, namely $\oApp := \oUncurry{d} : \mca^{2}\to M(\mca)$, and abstraction is induced by $e$: for $f:\mca^{n+1}\to M(\mca)$, define $\oAbs[n]{f}:= e\circ \oCurry{f} : \mca^{n}\to \mca$ (where $\oCurry{f}$ is the currying of $f$).
The weak closure axioms are then exactly the categorical forms of $\beta$ and substitution compatibility. 

\section{Conclusion}

We introduced Freyd operads as a substitutional structure for effectful Call by Value computation, where computation substitution is sequential. From any Freyd operad we constructed a Freyd PROP, proved representability, and established the expected adjunction.
This yields an initial semantic model for untyped computational $\lambda$ calculus with procedures and higher order functions, and hence a sound and complete categorical account of its equational theory. 
More broadly, this shows that substitution can serve as an organizing principle for the semantics of effectful higher-order computation, opening a route toward syntactically faithful categorical frameworks with further semantic and logical applications.
Although this work focuses on the untyped setting, the extension to typed systems seems straightforward. The untyped case was prioritized primarily due to the specific difficulties it introduces for achieving completeness.

There are several natural directions for future work. First, we plan to relate the present framework more directly to realizability and to evaluation logic~\cite{pitts1991evaluation}, where the sequential character of computation is also fundamental. 
Second, it would be interesting to understand how the present Call-by-Value account interacts with Call-by-Push-Value~\cite{levy1999call}, and whether Freyd operads admit a corresponding reformulation in that setting. 
Third, we shall extend this work to a linear variant of computational lambda calculus using non-Cartesian categories of values, and relate it to linear realizability~\cite{hoshino2007linear,tomita2021realizability}.
Finally, we plan to lift the construction to a more genuinely metacategorical level, for example by formulating it using internal categories in the sense of~\cite{hermida2000representable}, which may lead to a more flexible and conceptual account of substitutional semantics beyond the one-object case.

\clearpage

\bibliographystyle{./entics}
\bibliography{ref}

\clearpage
\appendix

\section{Appendix}\label{sec:app}

\subsection{Category of Substitutions}

\begin{proof}\textbf{Proposition~\ref{prop:sub-constructions}}

\begin{enumerate}

\item\label{prop:sub-constructions:A}
Morphisms $m\to n$ are equivalence classes of pre-substitutions $\oPSUB[\oCmp]{m}{n}$ modulo
the symmetric rules of Fig.~\ref{fig:sub-rules}.
Because the relation is a congruence (rules apply in context), concatenation of lists descends to a well-defined,
associative composition with unit $\emptyset$.
Whiskering by $k$ on the left/right is defined componentwise on steps (single substitutions and renamings)
and hence on words.
Concretely, for single substitutions we define 
$    n \oSL \oSub{ m_{1} \oSL \VoCmp \oSR m_{2} } = \oSub{ n + m_{1} \oSL \VoCmp \oSR m_{2} } $ and $
    \oSub{ m_{1} \oSL \VoCmp \oSR m_{2} } \oSR n = \oSub{ m_{1} \oSL \VoCmp \oSR m_{2} + n }$, 
and for renamings define 
$    n \oSL \VoRen = \oId[n] + \VoRen $ and $
    \VoRen \oSR n = \VoRen + \oId[n]$.
Each generating rule of Fig.~\ref{fig:sub-rules} is stable under whiskering, so the operation is well-defined on the quotient.
Finally, take the symmetry $\sigma_{m,n}$ to be the singleton renaming word $[\sigma_{m,n}]$. 
Its naturality
follows from the renaming/substitution interaction \textbf{(N)} and the symmetry rules \textbf{(S$_\ell$)},\textbf{(S$_r$)}.

\item
If $\oVal$ is a cartesian operad, then $\oCSUBSMB[\oVal]$ is a pre-PROP as in~\ref{prop:sub-constructions:A}.
The additional rules \textbf{(CEN$_1$)},\textbf{(CEN$_2$)} of Fig.~\ref{fig:sub-rules} are exactly the centrality
schemata for adjacent substitution steps coming from $\oVal$. 
Since the relation is a congruence, they imply
all morphisms in $\oCSUBSMB[\oVal]$ are central, hence $\oCSUBSMB[\oVal]$ is a PROP.
Define discard and copy by the singleton renaming words $[!_n]$ and $[\Delta_n]$.
The comonoid laws are inherited from $\oRENSMB^{op}$ (via composition of renaming steps), and the required
naturality is enforced by the \textbf{(D)} and \textbf{(C)} rules of Fig.~\ref{fig:sub-rules}.

\item
Functoriality of $\oReturn^{\oSUBSMB}$ is due to the functoriality of $\oReturn$ and the fact that it preserves renamings and substitution steps.
Centrality of $\oReturn^{\oSUBSMB}$ follows because $\oReturn$ lies in the cartesian (hence central) image specified in the Freyd operad axioms.

\item
Define $\VoFunctor^{\oCSUBSMB[\oVal]} : \oCSUBSMB[{\oVal[1]}] \longrightarrow \oCSUBSMB[{\oVal[2]}]$ and $\VoFunctor^{\oSUBSMB[\oCmp]} : \oSUBSMB[{\oCmp[1]}] \longrightarrow \oSUBSMB[{\oCmp[2]}]$ by recursion, using (resp.) $\VoFunctor^{\oVal}$ and $\VoFunctor^{\oCmp}$, the same way as in Definition \ref{def:freyd-sub}.
Functoriality of (resp.) $\VoFunctor^{\oVal}$ and $\VoFunctor^{\oCmp}$ ensures that the equivalence relations are preserved.
Given a morphism $\VoVal \in \oCSUB[\oVal]{m}{n}$,  by induction on $\VoVal$, we obtain that $
\VoFunctor^{\oSUBSMB[\oCmp]}\!\bigl(\oReturn[1]^{\oSUBSMB}(\VoVal)\bigr) \equiv \oReturn[2]^{\oSUBSMB}\!\bigl(\VoFunctor^{\oCSUBSMB[\oVal]}(\VoVal)\bigr)
$, using the fact that all of the functors involved preserve renamings, and on morphisms in $\oVal$ we have $\VoFunctor^{\oCmp} \oReturn[1] \equiv \oReturn[2] \VoFunctor^{\oVal}$, ensuring the same for single substitutions.

\end{enumerate}
\end{proof}

\begin{proof}\textbf{Theorem~\ref{thm:pre-sub-equiv}}

By induction on the proof of $\VoSub[1] \cong \VoSub[2]$. Each case is proved through the following correspondences between the case and the appropriate rule in \Cref{fig:sub-rules}:
\begin{enumerate}
    \item Identities are discardable. The substitution case corresponds to $\textbf{U}_1$ in \Cref{fig:sub-rules}. The renaming case corresponds to $\textbf{U}_2$.

    \item Compatible substitutions are mergeable. Corresponds to the $\textbf{A}$.

    \item Renamings are mergeable. Corresponds $\textbf{R}$.

    \item Compatible renamings and substitutions swap. Corresponds to $\textbf{N}$.

    \item Symmetries commute. Corresponds to $\textbf{S}_\ell$ and $\textbf{S}_r$.
    
    \item All substitutions are central. Corresponds to $\textbf{CEN}_1$ and $\textbf{CEN}_2$.

     \item Discarding and copying are natural. $\textbf{D}$ and $\textbf{C}$.
\end{enumerate}
\end{proof}

\subsection{Proof of Theorem~\ref{thm:preop-adj}}

\begin{lemma}\label{lem:forget-functor}
$\oForget : \oFREYDPROP \to \oFreydOp$ of Def.~\ref{def:forget} defines a functor.
\end{lemma}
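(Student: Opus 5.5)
The plan has two stages. First, show that for each Freyd PROP $(\oVal,\oCmp,\oReturn)$ the triple $(\oVal^{\oForget},\oCmp^{\oForget},\oReturn^{\oForget})$ of Def.~\ref{def:forget} is a Freyd operad. Second, show that restriction to codomain $1$ sends Freyd PROP functors to Freyd operad functors and preserves identities and composition.

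For the object part, the preoperad laws for $\oV{n}^{\oForget}=\oHom{\oVal}{n}{1}$ follow directly from strict (pre)monoidal structure.
\begin{itemize}
\item The left and right unit laws follow from $\oId[1]\circ f=f$ and $n_1\oSL\oId[1]\oSR n_2=\oId[n_1+1+n_2]$.
\item Associativity reduces to two facts: functoriality of whiskering, $(m_1\oSL g\oSR m_2)\circ(m_1+n_1\oSL h\oSR n_2+m_2)=m_1\oSL(g\circ(n_1\oSL h\oSR n_2))\oSR m_2$, and associativity of composition.
\end{itemize}
These arguments use only that whiskering is functorial, so they hold equally in the pre-PROP $\oCmp$.

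For the renaming action I first fix a strict symmetric monoidal functor $\oRENSMB^{\mathbf{op}}\to\oVal$. It is defined on generators by Lemma~\ref{lemma:ren-gen}, sending $\oSym[1,1]\mapsto\oSym[1,1]$, $\oCopy[1]\mapsto\oCopy[1]$ and $\oDiscard[1]\mapsto\oDiscard[1]$. It is well defined because $\oRENSMB^{\mathbf{op}}$ is the free cartesian PROP (the dual of Lemma~\ref{lemma:ren-cat}), so $\oVal$ receives a unique cartesian PROP functor from it.
\begin{itemize}
\item Laws (1) and (2) of Def.~\ref{def:scart-preop} are then functoriality.
\item Law (3) is the interchange $(r_1+r+r_2)\circ(n_1\oSL v\oSR n_2)=(m_1\oSL v\circ r\oSR m_2)\circ(r_1+\oId[1]+r_2)$. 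It holds because all morphisms in $\oVal$ are central, so tensor is bifunctorial.
\item The symmetry naturality conditions of Def.~\ref{def:sym-preop} are naturality of $\oSym$ in the PROP.
\item The cartesian axioms of Def.~\ref{def:cart-op} are naturality of $\oDiscard$ and $\oCopy$.
\item Centrality (Def.~\ref{def:op-centrality}) is again bifunctoriality of tensor.
\end{itemize}
Hence $\oVal^{\oForget}$ is a cartesian operad.

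The main obstacle is $\oCmp^{\oForget}$. Here renaming is $\VoCmp\circ\oReturn\VoRen$, and $\oCmp$ is only premonoidal. Law (3) requires sliding $\oReturn(r_1+r+r_2)$ past $n_1\oSL v\oSR n_2$ with $v$ non-central. I would first split $\oReturn(r_1+r+r_2)=(m_1\oSL\oReturn r\oSR m_2)\circ\oReturn(r_1+\oId[1]+r_2)$, using that $\oReturn$ is a strict pre-PROP functor. The remaining interchange then involves only the central morphism $\oReturn(r_1+\oId[1]+r_2)$, so the Freyd centrality axiom applies. The symmetry conditions use the symmetry naturality built into a symmetric premonoidal category.

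For $\oReturn^{\oForget}$, the following hold:
\begin{itemize}
\item It preserves identity and substitution because $\oReturn$ preserves composition and whiskering.
\item It preserves renaming because $\oReturn(v\circ r)=\oReturn v\circ\oReturn r$.
\item Its image is a cartesian operad by centrality of the image of $\oReturn$, arguing exactly as for $\oVal^{\oForget}$.
\end{itemize}

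For morphisms, a Freyd PROP functor $(F^{\oVal},F^{\oCmp})$ restricts to codomain $1$ since it is identity on objects.
\begin{itemize}
\item It preserves operadic substitution because it preserves composition and whiskering.
\item It preserves renamings because it preserves $\oSym$, $\oCopy$ and $\oDiscard$, hence the chosen image of every renaming (uniqueness from Lemma~\ref{lemma:ren-gen}). On the computation side this uses in addition $F^{\oCmp}\oReturn[1]=\oReturn[2]F^{\oVal}$.
\item The Freyd condition restricts directly.
\end{itemize}
Preservation of identities and composites is immediate, since restriction commutes with composition of functors.
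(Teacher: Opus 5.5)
Your proposal is correct and follows the same route as the paper. The operadic laws come from unit/associativity plus functoriality of whiskering, renamings act by $f\circ r$ on values and $f\circ\oReturn(r)$ on computations, and on morphisms one restricts to codomain $1$ together with the Freyd square. You in fact verify several points the paper leaves implicit: well-definedness of the renaming interpretation via freeness of $\oRENSMB^{\mathbf{op}}$, law (3) of Def.~\ref{def:scart-preop} on the computation side via centrality of the image of $\oReturn$, the symmetry and cartesian axioms, and preservation of renamings by Freyd PROP functors.

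There is one small slip in your splitting of $\oReturn(r_1+r+r_2)$. The central factor should carry the identity on an arity of $r$ rather than $\oId[1]$; the factor $\oReturn(r_1+\oId[1]+r_2)$ only appears after $v$ has been slid past it, which is exactly what the centrality step does. The argument itself is unaffected.
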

\begin{proof}
We first show that  $\oForget$ is well defined on objects. 
The operadic identity and substitution laws in $\oVal^{\oForget}$ and $\oCmp^{\oForget}$ follow directly from the corresponding
identity and associativity laws in $\oVal$ and $\oCmp$ together with functoriality of whiskering.
The $\oRENSMB$-cartesian renaming action in $\oVal^{\oForget}$ is given by postcomposition
$f[r]=f\circ r$, and is functorial because composition in $\oVal$ is associative and unital.
For computations, renamings act by $f[r]=f\circ \oReturn(r)$, which is functorial since $\oReturn$ is a functor.
The Freyd compatibility between $\oReturn^{\oForget}$ and renaming actions is immediate from the definitions.

To see that $\oForget$ is functorial, recall that on morphisms it is restriction to codomain $1$. Restrictions preserve identities and composition.
Moreover, the Freyd square for a Freyd PROP functor,
$\VoFunctor^{\oCmp}\oReturn[1]\equiv \oReturn[2]\VoFunctor^{\oVal}$, restricts to codomain $1$ to give the corresponding Freyd
square in $\oFreydOp$.  
\end{proof}

\begin{lemma}\label{lem:free-functor}
    $\oFree : \oFreydOp \to \oFREYDPROP$ from Def.~\ref{def:free} defines a functor.
\end{lemma}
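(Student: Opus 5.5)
We must show that $\oFree$ is well defined on objects and on morphisms, and that it preserves identities and composition.

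On objects, \Cref{prop:sub-constructions}(3) already gives that $\bigl(\oCSUBSMB[\oVal],\oSUBSMB[\oCmp],\oReturn^{\oSUBSMB}\bigr)$ is a Freyd PROP. So only the morphism part needs work. Given a Freyd operad functor $\VoFunctor=(\VoFunctor^{\oVal},\VoFunctor^{\oCmp})$, the induced pair $(\VoFunctor^{\oCSUBSMB},\VoFunctor^{\oSUBSMB})$ is defined on pre-substitution words. It sends each step $\oSub{m_1\oSL g\oSR m_2}$ to $\oSub{m_1\oSL \VoFunctor g\oSR m_2}$ and leaves renaming steps $\oRen{\VoRen}$ unchanged.

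The first step is to check that this map respects the congruence of Fig.~\ref{fig:sub-rules}, and I will do this generator by generator. The rules use the structure of $\VoFunctor$ as follows.
\begin{itemize}
\item \textbf{(U$_1$)} uses $\VoFunctor\oId=\oId$.
\item \textbf{(A)} uses $\VoFunctor(\oSubAp{g}{n_1\oSL h\oSR n_2})=\oSubAp{\VoFunctor g}{n_1\oSL \VoFunctor h\oSR n_2}$.
\item \textbf{(N)} uses that $\VoFunctor$ is cartesian, i.e.\ $\VoFunctor(v[r])=(\VoFunctor v)[r]$.
\item \textbf{(U$_2$)}, \textbf{(R)}, \textbf{(S$_\ell$)}, \textbf{(S$_r$)}, \textbf{(CEN$_1$)}, \textbf{(CEN$_2$)}, \textbf{(D)} and \textbf{(C)} are sent to instances of the same rule. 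The arities do not change, and where the rule mentions $g+g$, the image is $\VoFunctor g+\VoFunctor g$ because $\VoFunctor$ acts stepwise.
\end{itemize}
The map acts letterwise, so it also respects rules applied in context. It therefore descends to the quotients.

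Next I show that the induced maps are Freyd PROP functors. Acting letterwise on words, they are identity on objects and preserve concatenation, which is composition. They preserve the empty word, which is the identity. They commute with whiskering, because whiskering only shifts the numerical offsets and applies $\oId[n]+(-)$ to renamings. They fix the singleton renaming words $[\oSym[m,n]]$, $[\oCopy[n]]$ and $[\oDiscard[n]]$. The Freyd square $\VoFunctor^{\oSUBSMB}\oReturn[1]^{\oSUBSMB}=\oReturn[2]^{\oSUBSMB}\VoFunctor^{\oCSUBSMB}$ follows by induction on words from $\VoFunctor^{\oCmp}\oReturn[1]=\oReturn[2]\VoFunctor^{\oVal}$, as in \Cref{prop:sub-constructions}(4).

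Finally, functoriality of $\oFree$ itself holds on representative words. The identity Freyd operad functor induces the letterwise identity. A composite $\VoFunctor'\circ\VoFunctor$ induces the letterwise application of $\VoFunctor'\VoFunctor$, which equals applying the induced map of $\VoFunctor$ and then that of $\VoFunctor'$, step by step. Both equalities hold already on pre-substitutions, hence on the quotients.

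The only step with real content is the congruence check. Its most delicate case is \textbf{(N)}, because there one must use the cartesian preservation of renamings by $\VoFunctor^{\oCmp}$ and $\VoFunctor^{\oVal}$. For the rules \textbf{(CEN$_1$)}, \textbf{(CEN$_2$)}, \textbf{(D)} and \textbf{(C)}, one must note that the image of a morphism of $\oVal[1]$ lies in $\oVal[2]$, so the image is again a valid instance of the rule.
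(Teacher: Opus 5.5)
Your proposal is correct and follows the paper's route. The paper also obtains objects from the substitution construction (\Cref{prop:sub-constructions}) and takes well-definedness on morphisms from \Cref{prop:sub-constructions}(4); it then proves preservation of identities and composition by induction on substitution words, which is your letterwise check on representatives. The only difference is that you re-verify, generator by generator, the congruence preservation and the Freyd PROP functor properties that the paper simply cites from \Cref{prop:sub-constructions}(4).
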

\begin{proof}
Well-definedness on objects is given by Definition \ref{def:freyd-sub}. Well-definedness on morphisms is given by Proposition~\ref{prop:sub-constructions}.
To prove preservation of identities, consider the identity Freyd operad functor over an arbitrary Freyd operad, and then for each of its two components, given an arbitrary substitution in the corresponding category, show by induction on the substitution that the image of $\oFree$ on the appropriate component preserves the substitution, which follows from the functoriality of the component.
Proving preservation of composition works the same way.
\end{proof}

\begin{defn}[Currying]\label{def:curry}
Let $\VoOperad\in \oFreydOp$ and $\VoProp\in \oFREYDPROP$.
Given a Freyd PROP functor
$
(\VoFunctor^{\oVal},\VoFunctor^{\oCmp}) : \oFree(\VoOperad) \to \VoProp
$,
we define a Freyd operad functor
$
\oCurry{(\VoFunctor^{\oVal},\VoFunctor^{\oCmp})}
: \VoOperad \to \oForget(\VoProp)
$
by setting, for each $f\in \oV{n}$ and $g\in \oC{n}$,
$$
\oCurry{(\VoFunctor^{\oVal})}(f) \;:=\; \VoFunctor^{\oVal}\bigl(\lPar{\oSub{0 \oSL f \oSR 0}}\bigr),
\qquad
\oCurry{(\VoFunctor^{\oCmp})}(g) \;:=\; \VoFunctor^{\oCmp}\bigl(\lPar{\oSub{0 \oSL g \oSR 0}}\bigr).
$$
\end{defn}

\begin{lemma}\label{lem:curry-functor}
$
\oCurry{(\VoFunctor^{\oVal},\VoFunctor^{\oCmp})} = (\oCurry{\VoFunctor^{\oVal}},\oCurry{\VoFunctor^{\oCmp}})
: \VoOperad \to \oForget(\VoProp)
$ defines a Freyd operad functor.
\end{lemma}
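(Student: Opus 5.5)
We have to check that $\oCurry{\VoFunctor^{\oVal}}$ and $\oCurry{\VoFunctor^{\oCmp}}$ are cartesian preoperad functors into $\oVal^{\oForget}$ and $\oCmp^{\oForget}$, and that they commute with the return maps. The plan is to derive every required equation from three sources: the symmetric and cartesian rules of Fig.~\ref{fig:sub-rules}, the fact that $\VoFunctor^{\oVal},\VoFunctor^{\oCmp}$ are identity-on-objects pre-PROP functors (so they preserve composition, whiskering and symmetries), and the definitions in Def.~\ref{def:forget}. Throughout, each singleton word $\lPar{\oSub{0\oSL f\oSR 0}}$ is treated as the representable element corresponding to $f$ under Thm.~\ref{thm:sub-presheaf}.

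For identity, rule (U$_1$) gives $\lPar{\oSub{0\oSL\oId\oSR 0}}\cong\lEmpty$. Since $\VoFunctor$ preserves identities, $\oCurry{\VoFunctor}(\oId)=\oId[1]$, which is the operadic identity of $\oForget(\VoProp)$.

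Preservation of substitution is the main step. In $\oFree(\VoOperad)$ the word $\lPar{\oSub{0\oSL f\oSR 0}}$ is a morphism $n_1+1+n_2\to 1$. The key identity to establish is
\[
\lPar{\oSub{n_1\oSL g\oSR n_2},\,\oSub{0\oSL f\oSR 0}} \cong \lPar{\oSub{0\oSL \oSubAp{f}{n_1\oSL g\oSR n_2}\oSR 0}},
\]
which is an instance of rule (A) with outer whiskering $m_1=m_2=0$, once one observes that concatenation lists steps in the order matching PROP composition. The paper's conventions are a little ambiguous here: in (A) the outer substitution comes first in the list, whereas composition $f\circ(n_1\oSL g\oSR n_2)$ as a word is $\lPar{\oSub{n_1\oSL g\oSR n_2}}$ followed by $\lPar{\oSub{0\oSL f\oSR 0}}$. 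I would first fix the direction of composition consistently with the application $\oPshAp{}{}$ in Thm.~\ref{thm:sub-presheaf}, and in particular with the inverse map $\VoSub\mapsto\lPar{\oSub{0\oSL \oPshAp{\oId}{\VoSub}\oSR 0}}$. The cleanest route to the key identity then avoids case analysis on list order: apply $\oPshAp{\oId}{-}$ to both sides, and both reduce to $\oSubAp{f}{n_1\oSL g\oSR n_2}$, so the representability bijection identifies them. Next, the singleton word $\lPar{\oSub{n_1\oSL g\oSR n_2}}$ is the whiskering $n_1\oSL \lPar{\oSub{0\oSL g\oSR 0}}\oSR n_2$ by the componentwise definition of whiskering. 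Applying $\VoFunctor$, which preserves composition and whiskering, then yields $\oCurry{\VoFunctor}(f)\circ(n_1\oSL \oCurry{\VoFunctor}(g)\oSR n_2)$, and this is exactly the substitution in $\oForget(\VoProp)$. The argument is identical for $\oVal$ and for $\oCmp$.

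The renaming action works the same way. The representability inverse identifies $\lPar{\VoRen}$ followed by $\lPar{\oSub{0\oSL f\oSR 0}}$ with $\lPar{\oSub{0\oSL \oRenAp{f}{\VoRen}\oSR 0}}$, so it remains to show $\VoFunctor^{\oVal}(\lPar{\VoRen})=\VoRen$ in $\VoProp$. This holds because Lemma~\ref{lemma:ren-gen} generates every renaming from symmetries, copy and discard, and $\VoFunctor^{\oVal}$ preserves these as a cartesian PROP functor. On the computation side, note that $\oReturn^{\oSUBSMB}\lPar{\VoRen}=\lPar{\VoRen}$; combined with the Freyd square $\VoFunctor^{\oCmp}\oReturn^{\oSUBSMB}=\oReturn\VoFunctor^{\oVal}$, this gives $\VoFunctor^{\oCmp}(\lPar{\VoRen})=\oReturn\VoRen$, which matches the definition $\oRenAp{\VoCmp}{\VoRen}=\VoCmp\circ\oReturn\VoRen$. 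Finally, the Freyd square $\oCurry{\VoFunctor^{\oCmp}}\oReturn=\oReturn^{\oForget}\oCurry{\VoFunctor^{\oVal}}$ follows at once from $\oReturn^{\oSUBSMB}\lPar{\oSub{0\oSL v\oSR 0}}=\lPar{\oSub{0\oSL \oReturn v\oSR 0}}$ together with the Freyd square for $\VoFunctor$.

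I expect the main obstacle to be the bookkeeping of list order versus composition order in the substitution step, which the representability bijection is meant to neutralize.
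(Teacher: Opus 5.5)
Your proof is correct and follows the paper's route. Functoriality of $\oCurry{\VoFunctor}$ comes from functoriality of $\VoFunctor$ together with composition in $\oFree(\VoOperad)$ being concatenation, and the Freyd square is the same one-line computation. Your key identity is just rule \textbf{(A)} with $m_1=m_2=0$ once the list order is aligned with $\oPshApSMB$, so the detour through representability is optional. You are more complete than the paper, which leaves preservation of renamings implicit: you handle it correctly via Lemma~\ref{lemma:ren-gen} on the value side, and via $\oReturn^{\oSUBSMB}\lPar{\oRen{\VoRen}}=\lPar{\oRen{\VoRen}}$ plus the Freyd square on the computation side.
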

\begin{proof}
Functoriality follows from functoriality of $(\VoFunctor^{\oVal},\VoFunctor^{\oCmp})$ and the definition of composition in
$\oFree(\VoOperad)$ via concatenation of substitution lists.
For Freyd compatibility, for any $v\in \oV{n}$ we compute:
$$
(\oForget\oReturn[2])\,\oCurry{(\VoFunctor^{\oVal})}\,v
= \oReturn[2]\bigl(\VoFunctor^{\oVal}(\lPar{\oSub{0 \oSL v \oSR 0}})\bigr)
= \VoFunctor^{\oCmp}\bigl(\oReturn[1]^{\oSUBSMB}(\lPar{\oSub{0 \oSL v \oSR 0}})\bigr)
= \VoFunctor^{\oCmp}\bigl(\lPar{\oSub{0 \oSL \oReturn[1]v \oSR 0}}\bigr)
= \oCurry{\VoFunctor^{\oCmp}}\,\oReturn[1]\,v.
$$
\end{proof}

\begin{defn}[Uncurrying]\label{def:uncurry}
Let $\VoOperad=(\oVal[1],\oCmp[1],\oReturn[1])\in \oFreydOp$ and let
$\VoProp=(\oVal[2],\oCmp[2],\oReturn[2])\in \oFREYDPROP$.
Given a Freyd operad functor
$
(\VoFunctor^{\oVal},\VoFunctor^{\oCmp}) : \VoOperad \to \oForget(\VoProp)
$, 
we define functors
$$
\oUncurry{\VoFunctor^{\oVal}} : \oCSUBSMB[{\oVal[1]}] \to \oVal[2] \qquad
\qquad
\oUncurry{\VoFunctor^{\oCmp}} : \oSUBSMB[{\oCmp[1]}] \to \oCmp[2]
$$
by recursion on substitution lists. 
For $\VoSub\in \oCSUB[{{\oVal[1]}}]{m}{n}$ we set
$$
\begin{array}{lcl}
\oUncurry{\VoFunctor^{\oVal}}\,\lEmpty &:=& \oId,\\
\oUncurry{\VoFunctor^{\oVal}}\,\lPar{\oSub{m_1 \oSL f \oSR m_2},\,\Gamma}
&:=& (m_1 \oSL \VoFunctor^{\oVal}(f) \oSR m_2)\circ \oUncurry{\VoFunctor^{\oVal}}\,\Gamma,\\
\oUncurry{\VoFunctor^{\oVal}}\,\lPar{r,\,\Gamma}
&:=& r \circ \oUncurry{\VoFunctor^{\oVal}}\,\Gamma,
\end{array}
$$
where $r$ is interpreted in $\oVal[2]$ via Lemma \ref{lemma:ren-gen}.
The definition of $\oUncurry{\VoFunctor^{\oCmp}}$ is identical, except that renaming steps are interpreted as
$\oReturn[2](r)\circ (-)$.
\end{defn}

\begin{lemma}\label{lem:uncurry-functorial}
$\oUncurry{\VoFunctor^{\oVal}}$ and $\oUncurry{\VoFunctor^{\oCmp}}$ preserve identities and composition.
\end{lemma}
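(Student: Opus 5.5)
The plan has two parts. First, identities are preserved by construction. Second, composition is handled in two stages: I define $\oUncurry{-}$ on pre-substitutions and show it respects the congruences of Fig.~\ref{fig:sub-rules}, and then I show it is multiplicative on concatenation. I treat $\oUncurry{\VoFunctor^{\oVal}}$ and $\oUncurry{\VoFunctor^{\oCmp}}$ in parallel. The only differences are that renaming steps are sent to $\oReturn[2](r)$ in the computation case, and that the cartesian rules are needed only in the value case.

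\textbf{Identities.} The identity at $n$ is the empty list, and $\oUncurry{-}\,\lEmpty := \oId$ by definition.

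\textbf{Composition.} I prove, by induction on the first list $\VoSub[1]$, that
\[\oUncurry{-}(\VoSub[1],\VoSub[2]) = \oUncurry{-}(\VoSub[1])\circ\oUncurry{-}(\VoSub[2]).\]
The empty case is the unit law of the target category. In the cons case, the recursive clause gives $X\circ\oUncurry{-}(\VoSub[1]',\VoSub[2])$, where $X$ is either a whiskered $\VoFunctor(f)$ or a (returned) renaming. Applying the induction hypothesis and associativity in $\oVal[2]$ or $\oCmp[2]$ finishes the case. This is only valid on representatives, so the main work is the next step.

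\textbf{Well-definedness on the quotient.} Because of the multiplicativity above, and because whiskering in a pre-PROP is functorial, it suffices to check each generating rule in isolation. Equal middle pieces then stay equal after pre- and post-composition with the images of $\Gamma$ and $\Gamma'$.

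\begin{itemize}
\item \textbf{(U$_1$), (U$_2$), (R).} These follow from $\VoFunctor(\oId)=\oId$ and from the fact that interpreting renamings via Lemma~\ref{lemma:ren-gen} is a (contravariant, strict cartesian) functor $\oRENSMB^{\mathbf{op}}\to\oVal[2]$. Postcomposing with the functor $\oReturn[2]$ gives the computation case.
\item \textbf{(A).} This is functoriality of $\VoFunctor$ with respect to operadic substitution, combined with the definition of substitution in $\oForget(\VoProp)$ as $f\circ(n_1\oSL g\oSR n_2)$. One also needs the interchange of nested whiskerings, which holds strictly in a strict premonoidal category.
\item \textbf{(N).} Here $\VoFunctor$ preserves renamings, and renaming in $\oForget(\VoProp)$ is precomposition with $r$ (resp.\ $\oReturn r$). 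The rule then reduces to
\[(r_1+\oId+r_2)\circ(m_1\oSL \VoFunctor v\circ r\oSR m_2)=(n_1\oSL\VoFunctor v\oSR n_2)\circ(r_1+r+r_2).\]
This is exactly the centrality of (returned) value maps in $\VoProp$: sliding $r_1$ and $r_2$ past the whiskered $\VoFunctor v$. In the computation case this uses the Freyd requirement that $\oReturn[2]$ lands in central morphisms.
\item \textbf{(S$_\ell$), (S$_r$).} These are naturality of the symmetry $\oSym$ in the pre-PROP $\oCmp[2]$.
\item \textbf{Cartesian rules.} (CEN$_{1,2}$) follow because $\oVal[2]$ is a PROP, so all morphisms are central. (D) and (C) follow from naturality of $\oDiscard$ and $\oCopy$ in the cartesian PROP $\oVal[2]$.
\end{itemize}

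\textbf{Main obstacle.} I expect the difficulty to be bookkeeping in (N) and (S) for the computation component. There the renamings are returned, and the argument relies on $\oReturn[2](r_1)$ and $\oReturn[2](r_2)$ being central to commute them past a non-central $\VoFunctor(g)$, while the middle block must match exactly. I would first decompose $r_1+r+r_2$ as the composite $(r_1+\oId+r_2)\circ(\oId+r+\oId)$ in $\oRENSMB$. Then $\oReturn[2]$ of the outer factor is a tensor of central maps and can be slid using the centrality definition, and the inner factor is absorbed into $\VoFunctor(v)[r]=\VoFunctor(v)\circ\oReturn[2]r$.
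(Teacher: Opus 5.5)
Your proposal is correct and follows the paper's proof. Identities are preserved by the defining clause $\oUncurry{-}\,\lEmpty := \oId$, and composition follows by induction on the first list: you peel off one single-substitution or (returned) renaming step and use associativity in $\oVal[2]$ or $\oCmp[2]$. The quotient check you add is what the paper proves separately as Lemma~\ref{lem:uncurry-quot}; your rule-by-rule argument there is sound and more detailed than the paper's, in particular your use of centrality of $\oReturn[2]$ for \textbf{(N)} in the computation component.
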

\begin{proof}
Identity is preserved by definition.
For composition,
given two compatible substitutions $\VoSub[1] , \VoSub[2]$, we analyze $\oUncurry{\VoFunctor^{\oVal}} \left(\VoSub[1] , \VoSub[2]\right)$ by induction on $\VoSub[1]$:
\[
\begin{array}{lcl}
    \oUncurry{\VoFunctor^{\oVal}} \left(\lEmpty \lApp \VoSub[2]\right) & = & \oUncurry{\VoFunctor^{\oVal}} \VoSub[2]\\
    &=&\lEmpty , \oUncurry{\VoFunctor^{\oVal}} \VoSub[2]\\
    &=& \oUncurry{\VoFunctor^{\oVal}} \lEmpty , \oUncurry{\VoFunctor^{\oVal}} \VoSub[2] \\
    \\
    \oUncurry{\VoFunctor^{\oVal}} \left( \oSub{ m_{1} \oSL \VoVal \oSR m_{2} } , \Gamma \lApp \VoSub[2]\right)
    & = & \oUncurry{\VoFunctor^{\oVal}} \lPar{ \oSub{ m_{1} \oSL \VoVal \oSR m_{2} } , \Gamma \lApp \VoSub[2] } \\
    & = & \left(m_{1} \oSL \VoFunctor^{\oVal}  \VoVal \oSR m_{2}\right) ,  \oUncurry{\VoFunctor^{\oVal}} \left(\Gamma \lApp \VoSub[2]\right) \\
    & = & \left(m_{1} \oSL \VoFunctor^{\oVal}  \VoVal \oSR m_{2}\right) ,  \oUncurry{\VoFunctor^{\oVal}} \Gamma , \oUncurry{\VoFunctor^{\oVal}} \VoSub[2] \\
    & = &
    \oUncurry{\VoFunctor^{\oVal}} \left( \oSub{ m_{1} \oSL \VoVal \oSR m_{2} } , \Gamma \right) , \oUncurry{\VoFunctor^{\oVal}} \VoSub[2]\\
    \\
    \oUncurry{\VoFunctor^{\oVal}} \left( \oRen{\VoRen} , \Gamma \lApp \VoSub[2]\right) & = & \oUncurry{\VoFunctor^{\oVal}} \lPar{ \oRen{\VoRen} , \Gamma \lApp \VoSub[2] }\\
    &=& \oRen{\VoRen} , \oUncurry{\VoFunctor^{\oVal}} \left(\Gamma \lApp \VoSub[2]\right)\\
    &=& \oRen{\VoRen} , \oUncurry{\VoFunctor^{\oVal}} \Gamma , \oUncurry{\VoFunctor^{\oVal}} \VoSub[2]\\
    &=& \oUncurry{\VoFunctor^{\oVal}} \lPar{ \oRen{\VoRen} , \Gamma } , \oUncurry{\VoFunctor^{\oVal}} \VoSub[2]
\end{array}
\]
So we get $\oUncurry{\VoFunctor^{\oVal}}\lPar{\VoSub[1] , \VoSub[2]} = \oUncurry{\VoFunctor^{\oVal}}\lPar{\VoSub[1]} , \oUncurry{\VoFunctor^{\oVal}}\lPar{\VoSub[2]}$ for all compatible substitutions $\VoSub[1] , \VoSub[2]$.
The treatment of $\VoFunctor^{\oCmp}$ is very similar, but there, in the case of renamings, we get: $\oUncurry{\VoFunctor^{\oCmp}} \lPar{ \oRen{\VoRen} , \Gamma } = \oReturn \oRen{\VoRen} ,  \oUncurry{\VoFunctor^{\oCmp}} \Gamma$, in contrast to $\oUncurry{\VoFunctor^{\oVal}} \lPar{ \oRen{\VoRen} , \Gamma } = \oRen{\VoRen} ,  \oUncurry{\VoFunctor^{\oVal}} \Gamma$. However, the rest of the case works exactly the same way.
\end{proof}

Next we show that uncurrying respects the quotient.

\begin{lemma}\label{lem:uncurry-quot}
$\oUncurry{\VoFunctor^{\oVal}}$ and $\oUncurry{\VoFunctor^{\oCmp}}$ respect the equivalence relation defining
$\oCSUBSMB[{\oVal[1]}]$ and $\oSUBSMB[{\oCmp[1]}]$.
\end{lemma}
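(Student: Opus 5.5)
The equivalence relations defining $\oCSUBSMB[{\oVal[1]}]$ and $\oSUBSMB[{\oCmp[1]}]$ are the least congruences (closed under placement in context) generated by the rules of Fig.~\ref{fig:sub-rules}. By Lemma~\ref{lem:uncurry-functorial}, $\oUncurry{\VoFunctor^{\oVal}}$ and $\oUncurry{\VoFunctor^{\oCmp}}$ send concatenation to composition. So it suffices to prove one thing: for each generating rule $X \cong Y$, the interpretations $\oUncurry{\VoFunctor}\,X$ and $\oUncurry{\VoFunctor}\,Y$ are equal morphisms of $\oVal[2]$ (resp. $\oCmp[2]$). Closure under context then follows, since $\oUncurry{\VoFunctor}(\Gamma\,X\,\Gamma') = \oUncurry{\VoFunctor}\Gamma \circ \oUncurry{\VoFunctor}X \circ \oUncurry{\VoFunctor}\Gamma'$ in the appropriate order. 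Closure under reflexivity, symmetry and transitivity is automatic because the target relation is equality.

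I would then check the generators one at a time, translating each into a statement about the Freyd PROP $\VoProp$ and the operad $\oForget(\VoProp)$. Throughout, recall that operadic substitution in $\oForget(\VoProp)$ is $f\circ(n_1\oSL g\oSR n_2)$, and renaming is $f\circ r$ (resp. $f\circ\oReturn[2] r$).
\begin{itemize}
\item \textbf{(U$_1$)} holds because $\VoFunctor(\oId)=\oId[1]$ and whiskering preserves identities.
\item \textbf{(U$_2$)} and \textbf{(R)} hold because the interpretation of renamings via Lemma~\ref{lemma:ren-gen} is a functor $\oRENSMB^{\mathbf{op}}\to\oVal[2]$. Composing with $\oReturn[2]$ keeps it a functor.
\item \textbf{(A)} reduces to functoriality of whiskering: $(m_1\oSL g\oSR m_2)\circ(m_1+n_1\oSL h\oSR n_2+m_2) = m_1\oSL (g\circ(n_1\oSL h\oSR n_2))\oSR m_2$. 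Combined with $\VoFunctor(g\{n_1\oSL h\oSR n_2\}) = \VoFunctor g\circ(n_1\oSL \VoFunctor h\oSR n_2)$, which is preservation of substitution by the operad functor into $\oForget(\VoProp)$, this gives the rule.
\item \textbf{(N)} follows from preservation of renamings, $\VoFunctor(v[r])=\VoFunctor(v)\circ r$, together with the fact that renamings are central in $\oVal[2]$, and their $\oReturn[2]$-images are central in $\oCmp[2]$. This lets us slide $r_1+\oId[1]+r_2$ past $m_1\oSL \VoFunctor v\circ r\oSR m_2$ via the interchange law for the tensor $r_1 + (\VoFunctor v\circ r)+r_2$.
\item \textbf{(S$_\ell$)} and \textbf{(S$_r$)} are instances of naturality of the symmetry of the pre-PROP $\oCmp[2]$ (or $\oVal[2]$) with respect to the morphism $\VoFunctor g$.
\item For the cartesian rules, \textbf{(CEN$_1$)} and \textbf{(CEN$_2$)} hold because every morphism of the PROP $\oVal[2]$ is central. \textbf{(D)} and \textbf{(C)} are naturality of $\oDiscard$ and $\oCopy$ in the cartesian PROP $\oVal[2]$, applied to $\VoFunctor^{\oVal} g$.
\end{itemize}

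The main obstacle is the bookkeeping in \textbf{(N)} and the symmetry rules. There the renaming words act contravariantly, so a word written left to right becomes a composite read in the reverse order. The interpretation of block sums $r_1+r+r_2$ must also be shown to agree with the tensor in $\VoProp$. I would first establish a small auxiliary fact: the interpretation of renamings from Lemma~\ref{lemma:ren-gen} is a strict symmetric (cartesian, for $\oVal[2]$) monoidal functor $\oRENSMB^{\mathbf{op}}\to\oVal[2]$, and $\oReturn[2]$ preserves this structure. With that fact, each rule reduces to a single interchange or naturality square in $\VoProp$. For computations, centrality of the $\oReturn[2]$-image (part of the Freyd PROP definition) is exactly what licenses the interchange in \textbf{(N)}, since $\oCmp[2]$ is only premonoidal.
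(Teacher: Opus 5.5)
Your proposal is correct and follows the same route as the paper. Both use Lemma~\ref{lem:uncurry-functorial} to reduce the claim to checking each generating rule of Fig.~\ref{fig:sub-rules} in context, and both handle (U$_1$) and (A) by preservation of the identity and of substitution together with functoriality of whiskering; for the remaining rules, where the paper only cites Proposition~\ref{prop:sub-constructions}, you give the more precise justifications in the target Freyd PROP: functoriality of the renaming interpretation, centrality of renamings and of their $\oReturn[2]$-images, and naturality of symmetry, copy and discard.
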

\begin{proof}
First we show by induction that $\oUncurry{\VoFunctor^{\oVal}}$ preserves the equivalence relation, using the functoriality of $\VoFunctor^{\oVal}$, its preservation of renamings, and the properties of $\oVal[2]$ as a cartesian PROP.
We do that for each case of Figure~\ref{fig:sub-rules} separately:
\begin{enumerate}
    \item \textbf{(U$_1$)}:
    \[
\begin{array}{lcl}
    \oUncurry{\VoFunctor^{\oVal}} \lPar{ \Gamma , \oSub{ m_{1} \oSL \oId \oSR m_{2} } , \Gamma' } & = & \oUncurry{\VoFunctor^{\oVal}} \Gamma \circ \paren{ m_{1} \oSL \VoFunctor^{\oVal} \oId \oSR m_{2} }  \circ \oUncurry{\VoFunctor^{\oVal}} \Gamma' \\
    & = & \oUncurry{\VoFunctor^{\oVal}} \Gamma \circ \paren{ m_{1} \oSL \oId \oSR m_{2} } \circ \oUncurry{\VoFunctor^{\oVal}} \Gamma'\\
    & = & \oUncurry{\VoFunctor^{\oVal}} \Gamma \circ \oUncurry{\VoFunctor^{\oVal}} \Gamma'\\
    & = & \oUncurry{\VoFunctor^{\oVal}} \lPar{\Gamma , \Gamma'}\\
\end{array}
\]
                            
    \item \textbf{(A)}:
    \[
\begin{array}{lcl}
    && \oUncurry{\VoFunctor^{\oVal}} \lPar{ \Gamma , \oSub{m_{1} \oSL \VoCmpII \oSR m_{2}} , \oSub{m_{1} + n_{1} \oSL \VoCmpIII \oSR n_{2} + m_{2}} , \Gamma'}\\
    &=& \oUncurry{\VoFunctor^{\oVal}} \Gamma \circ \paren{ m_{1} \oSL \VoFunctor^{\oVal} \VoCmpII \oSR m_{2} } \circ \paren{ m_{1} + n_{1} \oSL \VoFunctor^{\oVal} \VoCmpIII \oSR n_{2} + m_{2} } \circ \oUncurry{\VoFunctor^{\oVal}}\Gamma'\\
    &=& \oUncurry{\VoFunctor^{\oVal}} \Gamma \circ \paren{ m_{1} \oSL \VoFunctor^{\oVal} \VoCmpII \oSR m_{2} } \circ \paren{ m_{1} \oSL \paren{ n_{1} \oSL \VoFunctor^{\oVal} \VoCmpIII \oSR n_{2} } \oSR m_{2} } \circ \oUncurry{\VoFunctor^{\oVal}}\Gamma'\\
    &=& \oUncurry{\VoFunctor^{\oVal}} \Gamma \circ \paren{ m_{1} \oSL \left( \VoFunctor^{\oVal} \VoCmpII \circ \paren{n_{1} \oSL \VoFunctor^{\oVal} \VoCmpIII  \oSR n_{2}}\right) \oSR m_{2} } \circ \oUncurry{\VoFunctor^{\oVal}}\Gamma'\\
    &=& \oUncurry{\VoFunctor^{\oVal}}\Gamma \circ \left( m_{1} \oSL \VoFunctor^{\oVal} \left(\oSubAp{ \VoCmpII }{ n_{1} \oSL \VoCmpIII \oSR n_{2} }\right) \oSR m_{2} \right) \circ \oUncurry{\VoFunctor^{\oVal}}\Gamma'\\
    &&\oUncurry{\VoFunctor^{\oVal}} \lPar{ \Gamma , \oSub{ m_{1} \oSL \oSubAp{\VoCmpII}{n_{1} \oSL \VoCmpIII \oSR n_{2}} \oSR m_{2} } , \Gamma' }\\
\end{array}
\]

\item \textbf{(U$_2$)}, \textbf{(R)}, \textbf{(N)}, \textbf{(S$_\ell$)}, \textbf{(S$_r$)}, \textbf{(CEN$_1$)}, \textbf{(CEN$_2$)}, \textbf{(D)}, \textbf{(C)}: Follow from Proposition \ref{prop:sub-constructions}, as properties of cartesian PROP.

\end{enumerate}

The treatment of $\oUncurry{\VoFunctor^{\oCmp}}$ is very similar, with the following key differences:
\begin{enumerate}
    \item For renamings, we get: $\oUncurry{\VoFunctor^{\oCmp}} \lPar{ \oRen{\VoRen} , \Gamma } = \oReturn[2] \VoRen \circ  \oUncurry{\VoFunctor^{\oCmp}} \Gamma$, in contrast to $\oUncurry{\VoFunctor^{\oVal}} \lPar{ \oRen{\VoRen} , \Gamma } = \oRen{\VoRen} \circ  \oUncurry{\VoFunctor^{\oVal}} \Gamma$. However, since $\oReturn[2] \VoRen$ is always central, the treatment is almost identical
    \item There is no need to preserve \textbf{(CEN$_1$)}, \textbf{(CEN$_2$)}, \textbf{(D)}, and \textbf{(C)}, so these parts of the equivalence are ignored.
\end{enumerate}
\end{proof}

\begin{lemma}\label{lem:uncurry-freyd}
$$
\oUncurry{\VoFunctor^{\oCmp}}\,\oReturn[1]^{\oSUBSMB} \;\equiv\; \oReturn[2]\,\oUncurry{\VoFunctor^{\oVal}}.
$$
\end{lemma}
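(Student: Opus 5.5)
We prove the equation by induction on substitution words, working with a representative pre-substitution $\VoSub\in\oPSUB[{\oVal[1]}]{m}{n}$. This is legitimate because $\oReturn[1]^{\oSUBSMB}$ is defined by recursion on words and is well defined on the quotient (Def.~\ref{def:freyd-sub}), and both $\oUncurry{\VoFunctor^{\oCmp}}$ and $\oUncurry{\VoFunctor^{\oVal}}$ respect the quotients (Lemma~\ref{lem:uncurry-quot}). Thus it suffices to show
\[
\oUncurry{\VoFunctor^{\oCmp}}\bigl(\oReturn[1]^{\oSUBSMB}\VoSub\bigr)=\oReturn[2]\bigl(\oUncurry{\VoFunctor^{\oVal}}\VoSub\bigr)
\]
for every word $\VoSub$, by induction on its length, following the three clauses of the recursions.

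\emph{Empty word.} The left side is $\oUncurry{\VoFunctor^{\oCmp}}\lEmpty=\oId$, and the right side is $\oReturn[2]\oId=\oId$, because $\oReturn[2]$ is a pre-PROP functor.

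\emph{Renaming step.} For $\lPar{\oRen{\VoRen},\Gamma}$, the left side unfolds to $\oReturn[2](\VoRen)\circ\oUncurry{\VoFunctor^{\oCmp}}(\oReturn[1]^{\oSUBSMB}\Gamma)$, since $\oReturn[1]^{\oSUBSMB}$ leaves renamings unchanged and computation renaming steps are interpreted as $\oReturn[2](\VoRen)\circ(-)$. The right side is $\oReturn[2](\VoRen\circ\oUncurry{\VoFunctor^{\oVal}}\Gamma)=\oReturn[2](\VoRen)\circ\oReturn[2](\oUncurry{\VoFunctor^{\oVal}}\Gamma)$ by functoriality of $\oReturn[2]$. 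One point needs care here: the renaming $\VoRen$ on the right is interpreted in $\oVal[2]$ via Lemma~\ref{lemma:ren-gen}, and the $\oReturn[2](\VoRen)$ on the left must be read as $\oReturn[2]$ applied to that same interpretation. This matches Def.~\ref{def:forget}, where computation renamings are defined as $\VoCmp\circ\oReturn\VoRen$. With that reading, the induction hypothesis closes the case.

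\emph{Substitution step.} For $\lPar{\oSub{m_1\oSL\VoVal\oSR m_2},\Gamma}$, the left side is
\[
\bigl(m_1\oSL\VoFunctor^{\oCmp}(\oReturn[1]\VoVal)\oSR m_2\bigr)\circ\oUncurry{\VoFunctor^{\oCmp}}(\oReturn[1]^{\oSUBSMB}\Gamma).
\]
The Freyd operad functor square $\VoFunctor^{\oCmp}\oReturn[1]\equiv\oReturn^{\oForget}_2\VoFunctor^{\oVal}$ gives $\VoFunctor^{\oCmp}(\oReturn[1]\VoVal)=\oReturn[2](\VoFunctor^{\oVal}\VoVal)$, where $\oReturn^{\oForget}_2$ is just the restriction of $\oReturn[2]$. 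Since $\oReturn[2]$ is a pre-PROP functor, it preserves whiskering, so $m_1\oSL\oReturn[2](\VoFunctor^{\oVal}\VoVal)\oSR m_2=\oReturn[2](m_1\oSL\VoFunctor^{\oVal}\VoVal\oSR m_2)$. Combining this with the induction hypothesis and functoriality of $\oReturn[2]$ yields $\oReturn[2]\bigl((m_1\oSL\VoFunctor^{\oVal}\VoVal\oSR m_2)\circ\oUncurry{\VoFunctor^{\oVal}}\Gamma\bigr)$, which is the right side.

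The main obstacle is the renaming case. Renamings are interpreted in the value PROP $\oVal[2]$ by decomposing them into generators via Lemma~\ref{lemma:ren-gen}, whereas the computation side transports them through $\oReturn[2]$. I would settle this by fixing once and for all that the interpretation of a renaming in $\oCmp[2]$ is $\oReturn[2]$ applied to its interpretation in $\oVal[2]$. If a generator-wise reading is preferred instead, agreement follows because $\oReturn[2]$ preserves symmetries, and preserves copy and discard insofar as they are defined in $\oCmp[2]$ through $\oReturn[2]$. The substitution case is then routine.
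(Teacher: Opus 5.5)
Your proposal is correct and takes the same approach as the paper: induction on substitution words, unfolding the recursive clauses of $\oReturn[1]^{\oSUBSMB}$ and of the two uncurryings. The paper's proof only says this in one line. You additionally spell out the ingredients it leaves implicit: in the substitution step, the Freyd square $\VoFunctor^{\oCmp}\oReturn[1]\equiv\oReturn[2]\VoFunctor^{\oVal}$ together with preservation of whiskering and composition by $\oReturn[2]$; in the renaming step, reading $\oReturn[2](\VoRen)$ as $\oReturn[2]$ applied to the $\oVal[2]$-interpretation of $\VoRen$.
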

\begin{proof}
By induction on substitution lists, unfolding the definition of $\oReturn^{\oSUBSMB}$ and the recursive clauses of
$\oUncurry{\VoFunctor^{\oVal}}$ and $\oUncurry{\VoFunctor^{\oCmp}}$.
\end{proof}

\begin{corollary}
\label{cor:uncurry-functor}
$\oUncurry{(\VoFunctor^{\oVal},\VoFunctor^{\oCmp})} = (\oUncurry{\VoFunctor^{\oVal}},\oUncurry{\VoFunctor^{\oCmp}}) : \oFree(\VoOperad) \to \VoProp
$ defines a Freyd PROP functor
\end{corollary}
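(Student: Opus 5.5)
The corollary is essentially an assembly of the three preceding lemmas. We check in turn the clauses of a Freyd PROP functor in Definition~\ref{def:prop-functors}.

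\emph{Well-definedness on the quotients.} By Lemma~\ref{lem:uncurry-quot}, the recursively defined maps $\oUncurry{\VoFunctor^{\oVal}}$ and $\oUncurry{\VoFunctor^{\oCmp}}$ on pre-substitution words are invariant under the generating rules of Fig.~\ref{fig:sub-rules}. Since the clauses of Definition~\ref{def:uncurry} are compositional in the list structure, this invariance holds in context. Hence both maps descend to maps on $\oCSUB[{\oVal[1]}]{m}{n}$ and $\oSUB[{\oCmp[1]}]{m}{n}$. By Lemma~\ref{lem:uncurry-functorial}, both preserve the identity $\lEmpty$ and send concatenation to composition, so they are identity-on-objects functors.

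\emph{Premonoidal and cartesian structure.} It remains to show that each functor preserves whiskering, symmetries and, for the value component, copy and discard. Symmetry, copy and discard are the singleton renaming words $[\oSym[m,n]]$, $[\oCopy[n]]$, $[\oDiscard[n]]$. These are sent to the corresponding renamings interpreted in $\oVal[2]$ via Lemma~\ref{lemma:ren-gen}, and in $\oCmp[2]$ to $\oReturn[2]$ of them. Since $\oReturn[2]$ is a pre-PROP functor, it preserves symmetries, so the computation side is also correct.

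Whiskering is the step that needs an argument, so it is the main obstacle. Whiskering in $\oFree(\VoOperad)$ is defined stepwise, so we proceed by induction on the word $\VoSub$ and prove
\[
\oUncurry{\VoFunctor}(k\oSL \VoSub) = k\oSL \oUncurry{\VoFunctor}(\VoSub),
\]
and the right-whiskered analogue.
\begin{itemize}
\item For a single step $\oSub{m_1\oSL f\oSR m_2}$, left whiskering gives $\oSub{k+m_1\oSL f\oSR m_2}$. Its image is $(k+m_1)\oSL \VoFunctor f\oSR m_2$, which equals $k\oSL(m_1\oSL \VoFunctor f\oSR m_2)$ by strictness of the premonoidal structure in $\VoProp$.
\item For a renaming $\VoRen$, we need that the interpretation of $\oId[k]+\VoRen$ is $k\oSL\VoRen$. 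This holds because the interpretation of renamings in a cartesian PROP is strict monoidal.
\item For the inductive step, whiskering is a functor in a premonoidal category, so $k\oSL(a\circ b)=(k\oSL a)\circ(k\oSL b)$. This is the only place where premonoidality matters. Whiskering by a fixed object is functorial even though $\otimes$ is not bifunctorial, so no centrality is required.
\end{itemize}
Since the value component lands in a cartesian PROP, this yields a cartesian PROP functor $\oUncurry{\VoFunctor^{\oVal}}$ and a pre-PROP functor $\oUncurry{\VoFunctor^{\oCmp}}$.

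\emph{Freyd compatibility.} The square $\oUncurry{\VoFunctor^{\oCmp}}\,\oReturn[1]^{\oSUBSMB}=\oReturn[2]\,\oUncurry{\VoFunctor^{\oVal}}$ is exactly Lemma~\ref{lem:uncurry-freyd}. Together with the previous two steps, this shows that $(\oUncurry{\VoFunctor^{\oVal}},\oUncurry{\VoFunctor^{\oCmp}})$ is a Freyd PROP functor $\oFree(\VoOperad)\to\VoProp$.
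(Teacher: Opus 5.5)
Your proposal is correct and follows the paper's route: the paper gives no separate proof and obtains the corollary by combining Lemma~\ref{lem:uncurry-functorial}, Lemma~\ref{lem:uncurry-quot} and Lemma~\ref{lem:uncurry-freyd}. Your inductive check that whiskering, symmetries, copy and discard are preserved is the same argument made explicit, filling in a step the paper leaves implicit.
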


To prove the adjunction $\oFree \dashv \oForget$, we specify two operations on functors, one of currying, turning a Freyd PROP functor from a Freyd PROP of substitutions into a Freyd operad functor from the underlying Freyd operad, and another one of uncurrying, in the opposite direction.
Then, we show that they form an isomorphism.

Next, we establish that Curry and Uncurry are inverses. 

\begin{lemma}
\label{lem:curry-uncurry}
The following hold:
\begin{itemize}
    \item For a Freyd PROP functor $(\VoFunctor^{\oVal},\VoFunctor^{\oCmp}) : \oFree(\VoOperad)\to \VoProp$,
$
\oUncurry{\oCurry{\VoFunctor^{\oVal}}} \equiv \VoFunctor^{\oVal}
$ and $
\oUncurry{\oCurry{\VoFunctor^{\oCmp}}} \equiv \VoFunctor^{\oCmp}
$.
\item For a Freyd operad functor $(\VoFunctor^{\oVal},\VoFunctor^{\oCmp}) : \VoOperad \to \oForget(\VoProp)$ and any $f$,
$
\oCurry{\oUncurry{\VoFunctor^{\oVal}}}(f) = \VoFunctor^{\oVal}(f)
$ and $
\oCurry{\oUncurry{\VoFunctor^{\oCmp}}}(f) = \VoFunctor^{\oCmp}(f).
$
\end{itemize}
\end{lemma}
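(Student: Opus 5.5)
The second item is a direct unfolding, so I would prove it first. For $f\in\oV{n}$ we have $\oCurry{\oUncurry{\VoFunctor^{\oVal}}}(f)=\oUncurry{\VoFunctor^{\oVal}}\bigl(\lPar{\oSub{0\oSL f\oSR 0}}\bigr)$. By the recursive clauses of Definition~\ref{def:uncurry}, this is $(0\oSL \VoFunctor^{\oVal}(f)\oSR 0)\circ\oId$. Whiskering by $0$ is the identity in a strict PROP, so this equals $\VoFunctor^{\oVal}(f)$. The computation for $\VoFunctor^{\oCmp}$ is identical, because the singleton word contains no renaming step, so the $\oReturn[2]$ clause never fires.

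For the first item I would argue by induction on a representative word $\VoSub=\lPar{\VoSub[1],\ldots,\VoSub[k]}$ of a morphism in $\oCSUBSMB[{\oVal[1]}]$ (resp.\ $\oSUBSMB[{\oCmp[1]}]$). Lemma~\ref{lem:uncurry-quot} shows that $\oUncurry{-}$ is well defined on classes, and $\VoFunctor$ is defined on classes by assumption. It therefore suffices to check equality on words. The key observation is that in $\oFree(\VoOperad)$ every word is the composite (concatenation) of its singleton steps. Lemma~\ref{lem:uncurry-functorial} shows that $\oUncurry{\oCurry{\VoFunctor^{\oVal}}}$ preserves composition, and $\VoFunctor^{\oVal}$ does too, being a functor. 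Hence it is enough to show that the two agree on the empty word and on each singleton generator. Both send the empty word to $\oId$.

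Two kinds of singleton generator remain. For a single substitution $\lPar{\oSub{m_1\oSL f\oSR m_2}}$, I would write it as the whiskering $m_1\oSL\lPar{\oSub{0\oSL f\oSR 0}}\oSR m_2$ in $\oFree(\VoOperad)$. This is exactly the componentwise whiskering from the proof of Proposition~\ref{prop:sub-constructions}. Since $\VoFunctor^{\oVal}$ is a pre-PROP functor, it preserves whiskering, so
\[
\VoFunctor^{\oVal}\bigl(\lPar{\oSub{m_1\oSL f\oSR m_2}}\bigr)=m_1\oSL \VoFunctor^{\oVal}\bigl(\lPar{\oSub{0\oSL f\oSR 0}}\bigr)\oSR m_2=m_1\oSL\oCurry{\VoFunctor^{\oVal}}(f)\oSR m_2 .
\]
The right-hand side is precisely the uncurried value. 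For a renaming step $\lPar{\oRen{\VoRen}}$, I would use Lemma~\ref{lemma:ren-gen} to decompose $\VoRen$ into composites and tensors of $\oSym[1,1]$, $\oCopy[1]$, $\oDiscard[1]$ and $\oId[1]$. Rule \textbf{(R)} together with the componentwise tensor lets us write the renaming word as the corresponding composite and tensor of the singleton words $[\oSym]$, $[\oCopy]$ and $[\oDiscard]$, which are the structural maps of the PROP $\oFree(\VoOperad)$. A cartesian PROP functor preserves symmetries, copy and discard, hence $\VoFunctor^{\oVal}(\lPar{\oRen{\VoRen}})$ is the interpretation of $\VoRen$ in $\oVal[2]$. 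This matches the clause of $\oUncurry{-}$.

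On the computation side the renaming word lies in $\oSUBSMB[{\oCmp[1]}]$, and there copy and discard are not structural maps. Here I would instead use the Freyd compatibility of $(\VoFunctor^{\oVal},\VoFunctor^{\oCmp})$: the word $\lPar{\oRen{\VoRen}}$ equals $\oReturn[1]^{\oSUBSMB}(\lPar{\oRen{\VoRen}})$, so
\[
\VoFunctor^{\oCmp}\bigl(\lPar{\oRen{\VoRen}}\bigr)=\oReturn[2]\bigl(\VoFunctor^{\oVal}(\lPar{\oRen{\VoRen}})\bigr)=\oReturn[2](\VoRen),
\]
which is again the uncurried clause. I expect the main obstacle to be this renaming case, namely justifying that the interpretation of $\VoRen$ in $\oVal[2]$ via Lemma~\ref{lemma:ren-gen} is independent of the chosen decomposition and is preserved by cartesian PROP functors. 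I would handle it by noting that any cartesian PROP receives a unique cartesian PROP functor from $\oRENSMB^{\mathbf{op}}$, and by appealing to Lemma~\ref{lemma:ren-cat}.
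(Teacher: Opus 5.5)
Your proposal is correct and follows essentially the paper's route. The second item is the same one-line unfolding. Your reduction of the first item to the empty word and singleton steps, using that both sides preserve concatenation, is just the paper's induction on substitution lists, and the single-substitution case rests on preservation of whiskering exactly as in the paper's calculation.

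Where you go beyond the paper is the renaming step. The paper simply asserts $\VoRen \circ \VoFunctor^{\oVal}\Gamma = \VoFunctor^{\oVal}\lPar{\VoRen , \Gamma}$, and says ``the same with $\oReturn[2]\VoRen$'' on the computation side. You justify it via uniqueness of cartesian PROP functors out of $\oRENSMB^{\mathbf{op}}$, and for computations via the Freyd square applied to $\oReturn[1]^{\oSUBSMB}\lPar{\oRen{\VoRen}} = \lPar{\oRen{\VoRen}}$; this is a sound way to make that step rigorous.
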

\begin{proof}
\begin{itemize}
    \item 
By induction on substitution lists, unfolding Definitions \ref{def:curry} and \ref{def:uncurry} and using functoriality of $\VoFunctor^{\oVal}$ and
$\VoFunctor^{\oCmp}$.

For $\VoFunctor^{\oVal}$, we get by induction:
\[
\begin{array}{rcl}
    \oUncurry{\oCurry{\VoFunctor^{\oVal}}} \lEmpty & = & \oId = \VoFunctor^{\oVal} \lEmpty\\\\
    \oUncurry{\oCurry{\VoFunctor^{\oVal}}} \lPar{ \oSub{ m_{1} \oSL \VoCmp \oSR m_{2} } , \Gamma } & = & \left(m_{1} \oSL \oCurry{\VoFunctor^{\oVal}} \VoCmp \oSR m_{2}\right) \circ  \oUncurry{\oCurry{\VoFunctor^{\oVal}}} \Gamma\\
    &=& \left(m_{1} \oSL \VoFunctor^{\oVal} \lPar{\oSub{ 0 \oSL \VoCmp \oSR 0 }} \oSR m_{2}\right) \circ  \VoFunctor^{\oVal} \Gamma\\
    &=& \VoFunctor^{\oVal} \lPar{\oSub{ m_{1} \oSL \VoCmp \oSR m_{2} }} \circ  \VoFunctor^{\oVal} \Gamma\\
    &=& \VoFunctor^{\oVal} \lPar{\oSub{ m_{1} \oSL \VoCmp \oSR m_{2} } , \Gamma}\\\\
    \oUncurry{\oCurry{\VoFunctor^{\oVal}}} \lPar{ \VoRen , \Gamma } &=& \VoRen \circ \oUncurry{\oCurry{\VoFunctor^{\oVal}}} \Gamma\\
    &=& \VoRen \circ \VoFunctor^{\oVal} \Gamma\\
    &=& \VoFunctor^{\oVal} \lPar{ \VoRen , \Gamma }
\end{array}
\]
The same calculation works with $\VoFunctor^{\oCmp}$, but with $\oReturn[2] \VoRen \circ$ instead of $\VoRen \circ$ wherever it appears.

\item Unfold the definitions: $\oCurry{\oUncurry{\VoFunctor^{\oVal}}}(f)=\oUncurry{\VoFunctor^{\oVal}}(\lPar{\oSub{0 \oSL f \oSR 0}})=\VoFunctor^{\oVal}(f)$,
and similarly for $\VoFunctor^{\oCmp}$.

\end{itemize}
\end{proof}

\begin{corollary}
\[ \oFree : \oFreydOp \longrightarrow \oFREYDPROP \]
and
\[ \oForget : \oFREYDPROP \longrightarrow \oFreydOp \]
form an adjunction
$\oFree \dashv \oForget$
\end{corollary}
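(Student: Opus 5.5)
The plan is to exhibit a bijection
\[
\oHom{\oFREYDPROP}{\oFree(\VoOperad)}{\VoProp}\;\cong\;\oHom{\oFreydOp}{\VoOperad}{\oForget(\VoProp)}
\]
that is natural in both $\VoOperad$ and $\VoProp$. Everything needed is already assembled in the preceding lemmas.

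\textbf{Step 1 (the two maps are well defined).} In the direction from Freyd PROP functors to Freyd operad functors, I send $(\VoFunctor^{\oVal},\VoFunctor^{\oCmp})$ to its curry $\oCurry{(\VoFunctor^{\oVal},\VoFunctor^{\oCmp})}$. Lemma~\ref{lem:curry-functor} shows this is a Freyd operad functor $\VoOperad\to\oForget(\VoProp)$. In the opposite direction, I send a Freyd operad functor to its uncurry $\oUncurry{(\VoFunctor^{\oVal},\VoFunctor^{\oCmp})}$. This is first defined on pre-substitutions, and then:
\begin{itemize}
\item Lemma~\ref{lem:uncurry-quot} shows it descends to the quotients $\oCSUBSMB[{\oVal[1]}]$ and $\oSUBSMB[{\oCmp[1]}]$;
\item Lemma~\ref{lem:uncurry-functorial} shows it is functorial;
\item Lemma~\ref{lem:uncurry-freyd} shows it commutes with $\oReturn$.
\end{itemize}
Corollary~\ref{cor:uncurry-functor} therefore gives a Freyd PROP functor. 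I would additionally check that uncurrying preserves whiskering and symmetries, since a pre-PROP functor must preserve these. This holds because whiskering on words acts componentwise, and in $\VoProp$ we have $k\oSL(g\circ h)=(k\oSL g)\circ(k\oSL h)$. For copy and discard, the singleton renaming words are sent to the interpreted renamings, by construction.

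\textbf{Step 2 (the maps are mutually inverse).} Lemma~\ref{lem:curry-uncurry} gives both composites as identities. The first composite is checked by induction on representative words; since both sides are already known to respect the quotient, agreement on representatives suffices. The second composite is a direct unfolding.

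\textbf{Step 3 (naturality).} For naturality in $\VoProp$, take a Freyd PROP functor $H:\VoProp\to\VoProp'$. Then $\oCurry{H\circ\VoFunctor}=\oForget(H)\circ\oCurry{\VoFunctor}$, because both sides evaluate $H(\VoFunctor(\lPar{\oSub{0\oSL f\oSR 0}}))$ and $\oForget(H)$ is restriction to codomain $1$. For naturality in $\VoOperad$, take a Freyd operad functor $K:\VoOperad'\to\VoOperad$. Then $\oCurry{\VoFunctor\circ\oFree(K)}=\oCurry{\VoFunctor}\circ K$, since $\oFree(K)$ sends $\lPar{\oSub{0\oSL f\oSR 0}}$ to $\lPar{\oSub{0\oSL Kf\oSR 0}}$. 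Naturality of curry alone suffices, as the inverse of a natural bijection is natural.

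\textbf{Expected obstacle.} The main difficulty is hidden in Lemma~\ref{lem:uncurry-quot}, for the computation component. There one must verify that the rules \textbf{(N)}, \textbf{(S$_\ell$)} and \textbf{(S$_r$)} hold in the pre-PROP $\oCmp[2]$, where renamings are interpreted as $\oReturn[2](r)$. This requires three ingredients:
\begin{itemize}
\item centrality of $\oReturn[2](r)$, so that it can slide past whiskered computation steps;
\item naturality of the symmetry in a symmetric premonoidal category;
\item the identity $\VoFunctor^{\oCmp}(g[r])=\VoFunctor^{\oCmp}(g)\circ\oReturn[2](r)$, i.e.\ that the functor is cartesian into $\oForget(\VoProp)$.
\end{itemize}
I would prove \textbf{(N)} first, by decomposing $r_1+r+r_2$ as $(r_1+\oId+r_2)\circ(m_1\oSL r\oSR m_2)$ and applying the interchange available for central morphisms. \textbf{(S$_\ell$)} and \textbf{(S$_r$)} then reduce to naturality of $\oSym[1,n]$ in $\oCmp[2]$.
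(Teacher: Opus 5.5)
Your proposal is correct and follows the paper's own route. The paper proves the adjunction through the hom-set bijection between Freyd PROP functors $\oFree(\VoOperad)\to\VoProp$ and Freyd operad functors $\VoOperad\to\oForget(\VoProp)$. One direction curries, by restricting to the singleton words $\lPar{\oSub{0\oSL f\oSR 0}}$; the other uncurries, by interpreting substitution words in the target. Lemmas~\ref{lem:curry-functor}--\ref{lem:curry-uncurry} supply well-definedness and mutual inverseness, exactly as you use them.

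Your additions fill in details the paper only asserts. These are the check that uncurrying preserves whiskering and symmetries, and the explicit naturality argument in both variables. You also correctly locate the substantive step in \textbf{(N)}, \textbf{(S$_\ell$)} and \textbf{(S$_r$)} for the computation component, resting on centrality of $\oReturn[2](r)$, naturality of the symmetry, and cartesianness of $\VoFunctor^{\oCmp}$.
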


\subsection{Interpretation}

\begin{proof} \textbf{ Lemma~\ref{lemma:interp-sub}}.
\\
By mutual induction on the derivations of $\JxWFV{\VxVar[1],\ldots,\VxVar[n]}{\VxVal}$ and $\JxWFC{\VxVar[1],\ldots,\VxVar[n]}{\VxCmp}$.
The variable, function, return, let, procedure, abstraction, and application cases follow directly from the recursive definition of interpretation together with the operadic substitution laws, functoriality of $\oReturn$, and the substitution-compatibility axiom of weak closure.
\end{proof}

\begin{proof}\textbf{Theorem~\ref{thm:free-model}}

The identity morphism in $\xVALSMB$ is the well-formed value $\JxWFV{\VxVar}{\VxVar}$.
Composition of $\VxVal[2]$ in $\VxVal[1] \in \xVAL{n_{1} + 1 + n_{2}}$, assuming $\JxWFV{\VxVarI[1] , \ldots  , \VxVarI[n_{1}] , \VxVarII , \VxVarIII[1] , \ldots , \VxVarIII[n_{2}] }{ \VxVal[1] }$, is given by:
\[ \oSubAp{\VxVal[1]}{n_{1} \oSL \VxVal[2] \oSR n_{2}} := \VxVal[1] \subst{ \assign{\VxVarII}{\VxVal[2]} } \]
Assuming $\JxWFV{\VxVar[1] , \ldots , \VxVar[m]}{ \VxVal }$, and $\VoRen \in \oREN{m}{n}$, semantic renaming with $\VoRen$ from context $\VxVarI[1] , \ldots , \VxVarI[m]$ to context $\VxVarII[1] , \ldots , \VxVarII[n]$ is given by syntactic renaming:
\[ \oRenAp{\VxVal}{\VoRen} := \VxVal \subst{ \assign{\VxVarI[1]}{\VxVarII[\VoRen\left(1\right)]} , \ldots , \assign{\VxVarI[m]}{\VxVarII[\VoRen\left(m\right)]} } \]
The fact that $\xVALSMB$ is a cartesian operad follows directly from the properties of substitution.

The identity morphism in $\xCMPSMB$ is the well-formed value $\JxWFV{\VxVar}{ \TxRet{\VxVar}}$.
Composition of $\VxCmp[2] \in \xCMP{n}$ in $\VxCmp[1] \in \xCMP{n_{1} + 1 + n_{2}}$, assuming $\JxWFC{\VxVarI[1] , \ldots  , \VxVarI[n_{1}] , \VxVarII , \VxVarIII[1] , \ldots , \VxVarIII[n_{2}] }{ \VxCmp[1] }$,  is given by:
\[ \oSubAp{\VxCmp[1]}{n_{1} \oSL \VxCmp[2] \oSR n_{2}} := \TxLet{\VxVarII}{\VxCmp[2]}{\VxCmp[1]} \]
using the exchange rule to fit $\VxCmp[1]$ to $\JxWFC{\VxVarI[1] , \ldots  , \VxVarI[n_{1}] , \VxVarIII[1] , \ldots , \VxVarIII[n_{2}] , \VxVarII }{ \VxCmp[1] }$

Semantic renaming is given by syntactic renaming.
The first two rules of $\oRENSMB$-cartesian preoperad are immediate from the properties of substitution.
To prove the third rule, Let $\VxCmp[1] \in \xCMP{n_{1} + 1 + n_{2}}$, $\VxCmp[2] \in \xCMP{n}$, $\VoRenI[i] \in \oREN{m_{i}}{n_{i}}$, and $\VoRenII \in \oREN{m}{n}$. We need to show:
$$ \oRenAp{\oSubAp{\VxCmp[1]}{ n_{1} \oSL \VxCmp[2]  \oSR n_{2} }}{ \VoRenI[1] + \VoRenII + \VoRenI[2]  } \equiv \oSubAp{ \oRenAp{\VxCmp[1]}{\VoRenI[1] + \oId[1] + \VoRenI[2]}  }{ m_{1} \oSL \oRenAp{\VxCmp[2]}{\VoRenII} \oSR m_{2} } $$
Assuming $\JxWFC{\VxVarI[1] , \ldots  , \VxVarI[n_{1}] , \VxVarII , \VxVarIII[1] , \ldots , \VxVarIII[n_{2}] }{ \VxCmp[1] }$ and $\JxWFC{\VxVarIV[1] , \ldots , \VxVarIV[n]}{\VxCmp[2]}$, it means:
$$\begin{array}{lcl}
    && \left(\TxLet{\VxVarII}{\VxCmp[2]}{\VxCmp[1]}\right) \subst{ \assign{\VxVarI[1]}{\VxVarI[{\VoRenI[1]\left(1\right)}]'} , \ldots , \assign{\VxVarI[n_{1}]}{\VxVarI[{\VoRenI[1]\left(n_{1}\right)}]'} , \assign{\VxVarIV[1]}{\VxVarIV[{\VoRenII\left(1\right)}]'} , \ldots , \assign{\VxVarIV[n]}{\VxVarIV[{\VoRenII\left(n\right)}]'} , \assign{\VxVarIII[1]}{\VxVarIII[{\VoRenI[2]\left(1\right)}]'} , \ldots , \assign{\VxVarIII[n_{2}]}{\VxVarIII[{\VoRenI[2]\left(n_{2}\right)}]'} } \\
    &\equiv
    &\TxLet{\VxVarII}{\VxCmp[1] \subst{\assign{\VxVarI[1]}{\VxVarI[{\VoRenI[1]\left(1\right)}]'} , \ldots , \assign{\VxVarI[n_{1}]}{\VxVarI[{\VoRenI[1]\left(n_{1}\right)}]'} , \assign{\VxVarIII[1]}{\VxVarIII[{\VoRenI[2]\left(1\right)}]'} , \ldots , \assign{\VxVarIII[n_{2}]}{\VxVarIII[{\VoRenI[2]\left(n_{2}\right)}]'}} }{ \VxCmp[2] \subst{ \assign{\VxVarIV[1]}{\VxVarIV[{\VoRenII\left(1\right)}]'} , \ldots , \assign{\VxVarIV[n]}{\VxVarIV[{\VoRenII\left(n\right)}]'} } }
\end{array}$$
Which follows directly from the properties of substitution.
The interaction of exchange and $\TxLet{-}{-}{-}$ ensures that $\xCMPSMB$ is a symmetric preoperad.

We define a preoperad functor between them $\TxRetSMB : \xVALSMB \longrightarrow \xCMPSMB$ by the obvious definition, taking each $\VxVal \in \xVAL{n}$ to $\TxRet{\VxVal} \in \xCMP{n}$.
The identity is preserved by $\TxRetSMB$ by definition. Composition is preserved by the lunit rule.
Preserving centrality means that $\TxLet{\VxVar[1]}{\TxRet{\VxVal[1]}}{\TxLet{\VxVar[2]}{\VxCmp[2]}{\VxCmp}} 
    \equiv
    \TxLet{\VxVar[2]}{\VxCmp[2]}{\TxLet{\VxVar[1]}{\TxRet{\VxVal[1]}}{\VxCmp}}$,
which is the result of lunit rule.
Weakening and contraction are preserved through the lunit rule.
Renamings are preserved by the definition of substitution.

The weakly closed structure is given by abstraction and application. $\oApp \in \xCMP{2}$ is the well-formed computation derived from the application of on variable on another $\JxWFC{\VxVar[1] , \VxVar[2]}{ \TxApp{\VxVar[1]}{\VxVar[2]} }$.
Given a computation $\VxCmp \in \xCMP{n+1}$, assuming $\JxWFC{\VxCxt , \VxVar}{\VxCmp}$, we define $\oAbs[n]{\VxCmp} := \TxAbs{\VxVar}{\VxCmp}$.
The first equation of the weakly closed Freyd operad is proved by applying the beta rules.
The second equation of the weakly closed Freyd operad is proved using the lunit rule and the properties of substitution.

There are no axioms governing $\oFuncAsgSMB$ and $\oProcAsgSMB$ besides well-formedness, which is trivial.
\end{proof}

\subsection{Interpretation in Freyd PROPs}

Many models are presented not only as Freyd operads, but as Freyd PROPs.
Recall that any Freyd PROP $\left(\oVal,\oCmp,\oReturn\right)$ induces a Freyd operad by restricting to morphisms with codomain $1$.
Concretely, we set $\oVal(n) := \oHom{\oVal}{n}{1}$ and $\oCmp(n) := \oHom{\oCmp}{n}{1}$, with substitution induced by composition and tensor in the PROP.
In particular, every Freyd PROP yields a $\xCLC$-structure in the sense of Definition \ref{def:structure} once we specify interpretations for the function and procedure symbols.

To finish the discussion, we now briefly discuss the categorical semantics for $\xCLC$ directly in a Freyd PROP.
To that end, we define a categorical version of $\xCLC$-structure:
\begin{defn}[Categorical $\xCLC$-Structure]
Given a signature $\xSign = \left( \xFuncSMB , \xProcSMB \right)$, a \emph{categorical $\xCLC$-structure} over $\xSign$ is a Freyd PROP, equipped with a morphism $\oFuncAsg{f} \in \oHom{\oVal}{n}{1}$ for every function symbol $f \in \xFuncSymb{n}$, a morphism $\oProcAsg{p} \in \oHom{\oCmp}{n}{1}$ for every procedure symbol $p \in \xProcSymb{n}$, and a natural transformation $\oAbs[n]{-} : \oHom{\oCmp}{n+1}{1} \longrightarrow \oHom{\oVal}{n}{1}$ of \emph{abstraction} , and a morphism $\oApp \in \oHom{\oCmp}{2}{1}$ of \emph{application}, such that  the following holds for every $\VoCmp \in \oHom{\oCmp}{n+1}{1}$: $\oApp \circ \left( \oReturn \oAbs[n]{f} \oSR 1 \right) = f$.
\end{defn}

The interpretation of $\xCLC$ in a categorical $\xCLC$-structure is obtained by applying Definition \ref{def:interp-freyd} to the induced $\xCLC$-structure on maps into $1$.
Therefore, the soundness and completeness theorems for $\xCLC$-structures apply verbatim to categorical $\xCLC$-structures.

Finally, combining the adjunction between Freyd PROPs and Freyd operads with the initiality of the term model, we obtain the corresponding PROP level universal property.

\begin{corollary}\label{cor:prop-initiality}
Let $\left(\xVALSMB,\xCMPSMB,\TxRet{},\xFuncSMB,\xProcSMB\right)$ be the term model.
Its image under the free Freyd PROP construction is initial among weakly closed Freyd PROPs equipped with interpretations of $\xSign$.
\end{corollary}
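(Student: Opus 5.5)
The plan is to transport the initiality of the term model (Theorem~\ref{thm:initiality}) along the adjunction $\oFree \dashv \oForget$ (Theorem~\ref{thm:preop-adj}). First I will extend the adjunction to the structured setting. On the PROP side, a weakly closed Freyd PROP with interpretations of $\xSign$ is a categorical $\xCLC$-structure. Its morphisms are Freyd PROP functors preserving $\oApp$, $\oAbs[n]{-}$ (on maps into $1$) and the symbol interpretations. On the operad side we have the category $\oFreydOp_{\VoStruct}$.

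\textbf{Structure on $\oFree$ of the term model.} I will equip $\oFree(\xVALSMB,\xCMPSMB,\TxRetSMB)$ with the structure transported along the representability isomorphisms of Theorem~\ref{thm:sub-presheaf}. Application becomes the singleton word $\lPar{\oSub{0\oSL \oApp \oSR 0}} \in \oSUB[\xCMPSMB]{2}{1}$. Symbols are treated the same way. Abstraction on $\oSUB[\xCMPSMB]{n+1}{1}$ is defined by applying words to $\oId$, abstracting in the term model, and re-wrapping. The closure equation then follows from the weak closure of the term model (Theorem~\ref{thm:free-model}), because composition in the substitution PROP restricted to codomain $1$ agrees with operadic substitution; this uses the identity $\oPshAp{\VoCmp}{\VoSub}$ from the proof of Theorem~\ref{thm:sub-presheaf}.

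\textbf{Existence and uniqueness.} Let $\VoProp$ be any target structure. Restricting to codomain $1$ gives the $\xCLC$-structure $\oForget(\VoProp)$, as discussed just before the statement. Theorem~\ref{thm:initiality} gives a unique morphism $\interp{-} : \text{term model} \to \oForget(\VoProp)$. Uncurrying (Definition~\ref{def:uncurry}, Corollary~\ref{cor:uncurry-functor}) yields a Freyd PROP functor $\oUncurry{\interp{-}} : \oFree(\text{term model}) \to \VoProp$. By Lemma~\ref{lem:curry-uncurry}, this functor agrees with $\interp{-}$ on singleton words $\lPar{\oSub{0\oSL f\oSR 0}}$. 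It therefore preserves $\oApp$, the symbol interpretations, and abstraction. For uniqueness, suppose $G$ is any structured Freyd PROP functor out of $\oFree(\text{term model})$. Then $\oCurry{G}$ is a Freyd operad functor into $\oForget(\VoProp)$. It preserves the closed structure and the symbols because $G$ does on the singleton words. Initiality forces $\oCurry{G} = \interp{-}$, and Lemma~\ref{lem:curry-uncurry} then gives $G = \oUncurry{\oCurry{G}} = \oUncurry{\interp{-}}$.

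\textbf{Main obstacle.} The delicate step is checking that the bijection of Theorem~\ref{thm:preop-adj} restricts to structure-preserving morphisms. In particular, I must show that preservation of abstraction by $\oUncurry{\interp{-}}$ on \emph{all} words in $\oSUB{n+1}{1}$ reduces to preservation on singleton words. I would try to prove this first by showing that every word into $1$ is equivalent to a singleton word $\lPar{\oSub{0\oSL \oPshAp{\oId}{\VoSub}\oSR 0}}$, which is the representability argument of Theorem~\ref{thm:sub-presheaf}. Since $\oUncurry{\interp{-}}$ respects the quotient (Lemma~\ref{lem:uncurry-quot}), checking on these normal forms then suffices.
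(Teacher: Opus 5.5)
Your proposal is correct and follows the paper's route. The paper's entire argument is to combine the adjunction $\oFree \dashv \oForget$ (Theorem~\ref{thm:preop-adj}) with initiality of the term model (Theorem~\ref{thm:initiality}). You go further by checking, via the curry/uncurry bijection and the representability normal form $\lPar{\oSub{0\oSL \oPshAp{\oId}{\VoSub}\oSR 0}}$, that this bijection restricts to morphisms preserving application, abstraction and the symbols, which the paper leaves implicit.

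One point to add: the abstraction you transport to $\oFree$ of the term model must also be natural, as the definition of categorical $\xCLC$-structure requires, not just satisfy the $\beta$-equation. This holds in the term model because $\lambda$-abstraction commutes with syntactic renaming and, via (lunit), with value substitution.
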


\begin{defn}[Morphism of Categorical Structures]
A morphism $\VoFunctor : \VoStruct[1] \to \VoStruct[2]$ of categorical structures is a Freyd PROP functor between the underlying Freyd PROPs that also preserves application, abstract, and the symbol interpretations.
\end{defn}
The following lemma is immediate from the definition:
\begin{lemma}
Given a categorical structure $\VoStruct = \left( \oVal , \oCmp , \oReturn , \oFuncAsgSMB , \oProcAsgSMB \right)$, the forgetful functor yields a structure  $\oForget\VoStruct := \left( \oForget\oVal , \oForget\oCmp , \oForget\oReturn , \oForget \oFuncAsgSMB , \oForget \oProcAsgSMB \right)$ in the sense of \Cref{def:structure}.
\end{lemma}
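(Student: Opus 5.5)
The proof is a check against the definitions. Take a categorical structure $\VoStruct = ( \oVal , \oCmp , \oReturn , \oFuncAsgSMB , \oProcAsgSMB )$. First, by the object part of $\oForget$ in Def.~\ref{def:forget}, which Lemma~\ref{lem:forget-functor} shows is well defined, the triple $(\oForget\oVal,\oForget\oCmp,\oForget\oReturn)$ is a Freyd operad. So nothing needs to be shown about the operadic, cartesian, symmetric or Freyd parts. The remaining work is to supply the extra data of Def.~\ref{def:structure} and to check the two weak closure equations.

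The symbol interpretations need no work. Each $\oFuncAsg{f}\in\oHom{\oVal}{n}{1}$ is by definition an element of $\oForget\oVal(n)$, and each $\oProcAsg{p}\in\oHom{\oCmp}{n}{1}$ is an element of $\oForget\oCmp(n)$. Abstraction likewise already has the right type $\oForget\oCmp(n+1)\to\oForget\oVal(n)$, and application $\oApp\in\oHom{\oCmp}{2}{1}=\oForget\oCmp(2)$.

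Next, unfold operadic substitution in $\oForget$ as $f\circ(n_1\oSL g\oSR n_2)$. Then the first weak closure law $\oSubAp{\oApp}{\oReturn\oAbs[n]{\VoCmp}\oSR 1}=\VoCmp$ reads $\oApp\circ(\oReturn\oAbs[n]{\VoCmp}\oSR 1)=\VoCmp$. This is exactly the equation assumed in the categorical structure.

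The main step, and the only real obstacle, is the second law:
\[
\oAbs[m_1+n+m_2]{\VoCmp\circ(m_1\oSL\oReturn\VoVal\oSR m_2+1)} = \oAbs[m_1+1+m_2]{\VoCmp}\circ(m_1\oSL\VoVal\oSR m_2).
\]
I would derive it from the hypothesis that $\oAbs{-}$ is a natural transformation $\oHom{\oCmp}{-+1}{1}\Rightarrow\oHom{\oVal}{-}{1}$. The index category for this naturality should be $\oVal$, acting on the source by precomposition with $\oReturn(-)\oSR 1$ on computations and by plain precomposition on values. Instantiating naturality at the value morphism $m_1\oSL\VoVal\oSR m_2$ then gives the required equation. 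Along the way one uses that $\oReturn$ is a strict premonoidal functor preserving whiskering, so that $\oReturn(m_1\oSL\VoVal\oSR m_2)\oSR 1 = m_1\oSL\oReturn\VoVal\oSR m_2+1$.

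If "natural" is instead read more weakly, for example as natural only in renamings, this step fails. In that case I would make the intended reading of naturality explicit, namely naturality with respect to $\oVal$ as just described, and note that it is exactly the substitution-compatibility axiom transported along $\oForget$.
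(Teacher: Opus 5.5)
Your proposal is correct and follows the paper's route: the paper declares this lemma immediate from the definitions. That amounts to the steps you spell out. The Freyd operad part comes from Def.~\ref{def:forget} and Lemma~\ref{lem:forget-functor}; the symbol data, $\oApp$ and $\oAbs[n]{-}$ carry over with the right types unchanged; and the first weak-closure law is literally the equation assumed in a categorical structure. Your one real addition is to make explicit what the paper leaves implicit. The substitution-compatibility law must come from reading naturality of abstraction over $\oVal$, acting on computations by precomposition with $\oReturn(-)\oSR 1$. That is the reading the paper's definition needs, and you are right that naturality in renamings alone would not suffice.
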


This insight leads us naturally to the \emph{categorical} semantics of $\xCLC$: To interpret $\xCLC$ in a categorical structure $\VoStruct$, we simply use \Cref{def:interp-freyd} to interpret it in $\oForget \VoStruct$.

Note that the soundness and completeness results carry over to the semantics in a categorical structure.
This is a direct corollary of \Cref{cor:prop-initiality} and \Cref{thm:sub-presheaf} with (resp.) \Cref{thm:soundness} and \Cref{thm:completeness}.

Combining \Cref{thm:preop-adj} with \Cref{thm:initiality}, we get the following:
\begin{corollary}\label{cor:prop-initiality}
$\left( \oCSUBSMB[\xVALSMB] , \oSUBSMB[\xCMPSMB] , \oReturn^{\oSUBSMB} \right) $ is the initial object in the category $\oFreydPROPLam$ of weakly closed Freyd PROPs.
\end{corollary}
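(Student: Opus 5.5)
The plan is to derive initiality formally from two facts already proved: the adjunction $\oFree \dashv \oForget$ of Thm.~\ref{thm:preop-adj}, and the initiality of the term model among $\xCLC$-structures (Thm.~\ref{thm:initiality}). Left adjoints preserve initial objects, so the statement reduces to checking that the adjunction restricts to the relevant subcategories. These are: weakly closed Freyd operads with a signature interpretation and closed, symbol-preserving functors, on one side; and $\oFreydPROPLam$, i.e.\ categorical $\xCLC$-structures with their morphisms, on the other.

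First I lift $\oFree$ to the structured setting. Given the term model, I equip $\bigl(\oCSUBSMB[\xVALSMB],\oSUBSMB[\xCMPSMB],\oReturn^{\oSUBSMB}\bigr)$ with the following data:
\begin{itemize}
\item application $\lPar{\oSub{0\oSL \oApp \oSR 0}}\in\oSUB[\xCMPSMB]{2}{1}$;
\item abstraction obtained by transporting $\oAbs[n]{-}$ along the representability isomorphisms $\oC{n}\cong\oSUB[\oCmp]{n}{1}$ and $\oV{n}\cong\oCSUB[\oVal]{n}{1}$ of Thm.~\ref{thm:sub-presheaf};
\item symbol interpretations given by singleton substitution words.
\end{itemize}
Under that isomorphism, composition with whiskered maps into $1$ corresponds to operadic substitution. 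Hence the PROP-level $\beta$-equation $\oApp\circ(\oReturn\oAbs[n]{f}\oSR 1)=f$ and naturality of abstraction are exactly the two weak-closure axioms of the term model. In particular, $\oForget\oFree$ of the term model is isomorphic, as a structure, to the term model itself.

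Next, given any weakly closed Freyd PROP $\VoProp$ with symbol interpretations, $\oForget\VoProp$ is a $\xCLC$-structure by the lemma preceding the statement. Thm.~\ref{thm:initiality} then gives a unique morphism of structures $G$ from the term model to $\oForget\VoProp$. Its transpose $\oUncurry{G}:\oFree(\text{term model})\to\VoProp$ is a Freyd PROP functor by Cor.~\ref{cor:uncurry-functor}. Preservation of application and of the symbols is immediate, since $\oUncurry{G}$ of a singleton word $\lPar{\oSub{0\oSL f\oSR 0}}$ is $G(f)$. Preservation of abstraction follows because abstraction on hom-sets into $1$ was defined through the singleton-word isomorphism and $G$ is closed.

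For uniqueness, let $H$ be any structured Freyd PROP functor $\oFree(\text{term model})\to\VoProp$. Its curry $\oCurry{H}$ (Def.~\ref{def:curry}, Lemma~\ref{lem:curry-functor}) is a Freyd operad functor into $\oForget\VoProp$. It preserves application, abstraction and symbols, again because these are singleton words and $H$ preserves them. So $\oCurry{H}$ is a morphism of $\xCLC$-structures, hence equals $G$ by initiality. Lemma~\ref{lem:curry-uncurry} then gives $H=\oUncurry{\oCurry{H}}=\oUncurry{G}$.

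The main obstacle is the compatibility of abstraction with the adjunction. Abstraction lives only on hom-sets with codomain $1$, so I must check that the closed structure on $\oFree$ of the term model is well defined on equivalence classes of words, not just on singleton representatives. I would handle this by always first normalizing a word $s:n\to 1$ to $\lPar{\oSub{0\oSL \oPshAp{\oId}{s}\oSR 0}}$ via Thm.~\ref{thm:sub-presheaf}. I would also verify that a Freyd PROP functor automatically commutes with this normalization, which follows from Thm.~\ref{thm:pre-sub-equiv} together with functoriality.
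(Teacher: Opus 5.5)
Your proposal is correct and takes the same route as the paper, which obtains the corollary by combining the adjunction $\oFree \dashv \oForget$ (Thm.~\ref{thm:preop-adj}) with initiality of the term model (Thm.~\ref{thm:initiality}). Your explicit check that the curry/uncurry bijection restricts to closed, symbol-preserving functors fills in a step the paper leaves implicit; for this you transport abstraction on the free PROP along the representability isomorphism of Thm.~\ref{thm:sub-presheaf}.
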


\end{document}